\numberwithin{equation}{section}
\algnewcommand\algorithmicinput{\textbf{Input:}}
\algnewcommand\Input{\item[\algorithmicinput]}
\algnewcommand\algorithmicoutput{\textbf{Output:}}
\algnewcommand\Output{\item[\algorithmicoutput]}
\newcommand{\bY}{\textbf{Y}}
\newcommand{\bX}{\textbf{X}}
\newcommand{\bZ}{\textbf{Z}}
\newcommand{\bR}{\textbf{R}}
\newcommand{\btheta}{\bm{\theta}}
\newcommand{\bbeta}{\bm{\beta}}
\newcommand{\bgamma}{\bm{\gamma}}
\newcommand{\balpha}{\bm{\alpha}}
\DeclareMathOperator*{\argmax}{argmax}
\newcommand{\E}{\mathbb{E}}
\tikzstyle{terminator} = [rectangle, draw, text centered, rounded corners, minimum height=4em]
\renewcommand{\hat}[1]{\widehat{#1}}
\renewcommand{\tilde}[1]{\widetilde{#1}}
\begin{document}

%% ITEM 6 [See the "howto.tex" file.]
\begin{titlepage}

\title{}

%%Disable \markright for your submission,
%%if it includes any author names.
%%Note: The "submit" package includes a running header.
%\markright{\MakeLowercase{\textsc{}}}

\author{Daniel Suen}

\affil{University of Washington}

\author{Yen-Chi Chen}
\affil{University of Washington}

\vspace{\fill}\centerline{\today}\vspace{\fill}

\comment{Competing interests: The authors declare none.}

\comment{Financial support: DS was supported by NSF DGE-2140004.
	YC was supported by NSF DMS-195278, NSF DMS-2112907, NSF DMS-2141808, and NIH U24-AG072122.}
%\comment{This research was funded by .}
%\thanks{I would like to thank .}
\linespacing{1}
\contact{Correspondence should be sent to\\

\noindent E-Mail: dsuen@uw.edu \\
\noindent Phone: 1-206-486-0446
}%\noindent Fax: \break
%\noindent Website:  }

\end{titlepage}

%% ITEM 7 [See the "howto.tex" file.]
\setcounter{page}{2}
\vspace*{2\baselineskip}

\RepeatTitle{Modeling missing at random neuropsychological test scores using a mixture of binomial product experts}\vskip3pt

\linespacing{1.5}
%% ITEM 8 [See the "howto.tex" file.]
\abstracthead
\begin{abstract}
Multivariate bounded discrete data arises in many fields.  In the setting of dementia studies, such data is collected when individuals complete neuropsychological tests.  We outline a modeling and inference procedure that can model the joint distribution conditional on baseline covariates, leveraging previous work on mixtures of experts and latent class models.  Furthermore, we illustrate how the work can be extended when the outcome data is missing at random using a nested EM algorithm.  The proposed model can incorporate covariate information and perform imputation and clustering.  We apply our model on simulated data and an Alzheimer's disease data set.

\begin{keywords}
Mixture models, Multivariate discrete data, Latent variable models, Binomial product mixture, Missing data
\end{keywords}
\end{abstract}\vspace{\fill}\pagebreak

%% ITEM 8 [See the "howto.tex" file.]
\section{Introduction}

\label{sect:intro}

Modeling multivariate discrete data is a common problem across many fields such as social sciences, psychology, and ecology.  For instance, in education research, discrete data arises from students' test scores \citep{maydeu-olivares2015item}, and in ecology, discrete data arises when the counts of a given animal population are measured over various areas and time periods \citep{Anderson2019, Maslen2023}.  As the size of data sets continues to expand, missing data becomes increasingly prevalent \citep{kang2013prevention}, and it is common to have access to a rich set of discrete variables that is subject to missingness.  Both the multivariate discreteness and the missingness alone can create modeling difficulties, but together, they present a unique challenge for statistical inference.

\subsection{The National Alzheimer's Coordinating Center database}

Although the general problem statement is widely important, our work is primarily motivated by the analysis of neuropsychological test scores in the database of the National Alzheimer's Coordinating Center (NACC)\footnote{https://naccdata.org/}.
The NACC is an NIH/NIA-funded  center that maintains the largest longitudinal database for Alzheimer's disease in the United States. The NACC coordinates 33 Alzheimer's Disease and Research Centers (ADRC) in the United States.

In the NACC data, we have 41,181 individuals with data collected from 2005 to 2019. %There are more than 700 variables for each individual per year. These variables include variaous demographic information, clinical assessments, and cognitive and lab test results for various research questions. 
In this study, we are particularly interested in information regarding
the neuropsychological tests. 
The neuropsychological tests are a set of examinations measuring an
individual's cognitive ability
in four different domains: language, attention, executive function, and memory. 
The scores from these tests are important features for dementia research because
they are not based on a clinician's judgment.
Similar to a conventional exam, the outcome of a neuropsychological test 
is a discrete number taking integer values and has a known range. 
Note that different tests have a different ranges.
In our study, we consider eight neuropsychological test scores per individual. 

A goal of this paper is to introduce a simple and interpretable statistical model
	for modeling the neuropsychological tests. 
	The discrete and correlated nature of the neuropsychological tests presents
	a challenge for modeling them. As far as we are aware, these is no statistical model
	for handling these test scores directly, not to mention the additional challenges
	from missingness of the test scores. 
	To simplify the problem, we focus on a static model that only 
	considers the test scores from each individual's initial visit.
	While there are many covariates that could be incorporated in the analysis,
	we choose to include only four demographic variables (age, education, sex, and race)
	to reduce the complexity of the model.
	These variables are key demographic variables that are often
	included in any dementia research.

\subsection{Research questions}

Because the ranges of exams are generally large (30 to 300), it is infeasible to 
nonparametrically model the joint distribution of the eight test scores.	
Moreover, we have demographic variables such as education, age, and sex. 
It is a nontrivial problem to study the relation between these demographic variables and the test scores. 
Furthermore, the existence of missing test scores among certain individuals further compounds the complexity of the overall analysis.
In this paper, our primary focus centers on addressing the following research questions:
\begin{enumerate}
	\item {\bf Finding a feasible and interpretable model for  multiple discrete outcomes with missing values.}
	As mentioned before, our data contains multiple dependent discrete variables with missing values.
	We need to design a proper model to model the dependency among discrete variables
	and deal with the missing data problem. 
	We also need an estimation and inference procedure that quantifies the uncertainty in the model while properly accounting for the missing data in a statistically principled way.
	
	\item {\bf Discovering latent groups of individuals using neuropsychological tests.}
	Clinicians have developed a set of rules to categorize  individuals 
	into different clinical groups (normal cognition, MCI, and dementia).
	However, this rule is based on clinical judgments and does not involve any neuropsychological information.
	Clustering has been of interest to the dementia research community \citep{alashwal2019application}.
	Thus, an ideal model should be able to cluster individuals into groups using neuropsychological test scores that represent multiple cognitive domains of interest.
	
	\item {\bf Investigating the association between neuropsychological tests and other variables.}
	It is often of a lot of interest to study how neuropsychological tests are associated
	with demographic variables or clinical judgments. 
	For instance, there is a hypothesis that people with a higher education are more resilient to cognitive decline as a form of cognitive reserve \citep{Meng2012EducationAD, ThowMeganE2018Feic}. 
	We also want to use the NACC data to test this hypothesis.

\end{enumerate}

\subsection{Literature review}

We provide some background on the current research in the Alzheimer's disease and statistics methodology literature.

\subsubsection{Dementia-related research}
\label{subsect:dementia-related}

Multiple outcome variables are common in dementia-related research, but there is no clear widespread solution to the modeling problem.  In some previous methods in the Alzheimer's disease literature, multiple test scores are standardized and averaged into a single holistic score for an individual \citep{Boyle2018}.  This turns the multiple outcome problem into a single outcome problem and ignores the dependence structure.  \cite{mjorud2014variables} examined variables associated with multiple quality-of-life related discrete outcomes and performed univariate regression on each outcome.  Dimension reduction is also commonly used to reduce the number of outcome variables \citep{yesavage2016principal, qiu2019cognitive}, but this can lead to loss of interpretability as any downstream analysis is not using the original variables.  Missing data is often ignored completely or when accounted for, single imputation methods or an off-the-shelf imputation approach are typically used \citep{brenowitz2017mixed, qiu2019cognitive}.

Clustering has also become increasingly valuable in the dementia research community \citep{alashwal2019application}.  For example, \cite{Escudero2011} used the $K$-means algorithm to divide subjects into pathologic and non-pathologic groups to study the early detection of Alzheimer's disease.  \cite{tosto2016progression} used $K$-means on a subset of NACC data to identify subgroups within Alzheimer's disease patients to understand the heterogeneity of the disease.  Several papers also described model-based clustering approaches.  For example, \cite{meyer2010diagnosis} clustered biomarker data using a simple two-component Gaussian mixture model for early detection of Alzheimer's disease.  \cite{qiu2019cognitive} used neuropsychological test scores from a smaller data set than ours, imputing missing data using single imputation methods and the \textit{mice} R package \citep{mice}.  Then, they applied principal components analysis before finally, using a Gaussian mixture model for clustering.  In our work, we choose to model the raw test scores to avoid loss of information and preserve more interpretability.  It is crucial to be cautious when employing single imputation methods or off-the-shelf packages, as they can potentially lead to an underestimation of the uncertainty associated with missing data, and the underlying assumptions regarding missing data may remain unclear.

Additionally, the cognitive reserve hypothesis emerged several decades ago when it was observed that after autopsies, some individuals with no dementia symptoms had brains that exhibited advanced Alzheimer's pathology \citep{katzman1989development}.  Researchers hypothesized that some activities in life may provide individuals with a given resilience to cognitive decline, which is known as the cognitive reserve hypothesis.  As such, there is deep interest in discovering the association between various covariates and dementia \citep{zhang1990prevalence, stern2012cognitive, Meng2012EducationAD, ThowMeganE2018Feic}.

\subsubsection{Methodology research}

This paper sits at the intersection of multiple subfields of statistics including latent variable modeling, missing data, and clustering.  The latent class model was first proposed by \cite{Goodman1974} for the purpose of modeling multivariate discrete data.  The basic form of the latent class model is of a mixture of multivariate multinomials.  When the number of levels of the discrete random variable is large, this can lead to a large number of parameters \citep{raftery2019}.  There has been much work studying many aspects of the latent class model including the identifiability \citep{gyllenberg1994, carreira-perpinan2000practical, Allman2009, gu2020} and incorporation of covariates \citep{HUANGGuan-Hua2004Bail, Vermunt2010, ouyang2022identifiability}.  The traditional latent class model has been extended in various forms such as a mixture of Rasch models \citep{Rost1990} and a mixture of item response models \citep{McParland2013}.

The mixture of experts models are closely related to the mixture model framework and the latent class model \citep{Jacobs1991}.  It generalizes the standard mixture model by allowing the parameters to potentially depend on covariates \citep{Gormley2018HandbookOM}.  The increased flexibility of the models is associated with a large increase in number of parameters.  Many estimation procedures have been explored including those based on an EM approach \citep{Jacobs1991, Jordan1993}, an EMM approach \citep{Gormley2010}, or a Bayesian framework \citep{Bishop2002, Fruhwirth-Schnatter2008}.

Clustering has previously been explored in combination with missing data.  \cite{lee2022incomplete} proposed a two stage clustering framework by first clustering multiply imputed data sets individually and then, obtaining a final clustering by clustering the set of cluster centers obtained over all imputed data sets.  In previous research on model-based clustering with missing data, \cite{serafini2020handling} employed the EM algorithm and Monte Carlo methods to estimate Gaussian mixture models in the presence of missing at random data.  \cite{sportisse2023modelbased} addressed model-based clustering under missing-not-at-random assumptions, employing a likelihood-based approach and EM algorithm.  We include further comments on these methods in Appendix \ref{appendix:alternativeEM}.  These methods addressed the clustering problem but did not use covariates or discuss quantifying the uncertainty of the parameters.  Since our method is motivated by the statistical analysis of Alzheimer's disease, handling both of these is critical to our work.

\subsection{Outline}

Our paper is organized as follows.  Section \ref{sect:model-intro} presents a detailed description of our \textit{mixture of binomial product experts} model. The handling of missing data is discussed in Section \ref{sect:missing}, where we describe the fitting of the model under a nonmonotone missing at random assumption. In Section \ref{sect:inference}, we delve into the process of inference, specifically in constructing confidence intervals for the parameters of interest.  We outline how to perform clustering in Section \ref{sect:clustering}. In Section \ref{sect:simulation}, we provide simulations results to examine the consistency and the coverage of our proposed estimator.  In Section \ref{sect:realdata}, we apply our method to an Alzheimer's disease data set, which motivates our model formulation. Finally, Section \ref{sect:discussion} concludes the paper with a discussion of the results. In the Appendix, we include further details on derivations and proofs (Appendices \ref{appendix:EMderivation} and \ref{appendix:proofs}), identification theory (Appendix \ref{appendix:identification}), code implementation (Appendix \ref{appendix:code-implement}), remarks on related EM algorithms (Appendix \ref{appendix:alternativeEM}), simulations (Appendix \ref{sect:simulations}), comments on model selection (Appendix \ref{appendix:modelselect}), comments on clustering (Appendix \ref{appendix:clustering}), and more comments on the real data analysis (Appendix \ref{appendix:more-realdata}). We provide a code implementation of our method in an R package.%, which is available at \url{https://github.com/danielsuen/mixturebpe}.

\section{Mixture of Binomial Product Experts}

\label{sect:model-intro}

\subsection{A latent class model for neuropsychological test scores}

Suppose we have $n$ IID observations, indexed by $i=1,\ldots,n$.  Let $\bX_i = (X_{i,1}, X_{i,2}, \ldots, X_{i,p})^\top\in\mathcal{X}\subset \mathbb{R}^p$ be the covariates and $\bY_i = (Y_{i,1}, Y_{i,2},\ldots, Y_{i,d})^\top \in \mathcal{Y} \subset \mathbb{Z}^d$ be the outcome variables of interest, where $\mathcal{Y}$ is a bounded discrete set.  In the present study, $\bY_i$ represents the neuropsychological test scores of individual $i$ for a total of $d$ total tests, but this framework can be applied to other problems as well.  In the paper, for notational convenience, we may elect to drop the index $i$ to simplify notation and refer to the generic random variable that is not associated with a specific individual.  The $j$-th test score/outcome variable $Y_j$ belongs to the set $\{0,1,2,\ldots,N_j\}$.  We assume that the covariates are always observed and that only the outcomes may be subject to missingness.

We start with a simpler scenario in which the data is completely observed (no absence of any outcome variables). The more complicated setting with missing outcome variables is addressed in Section \ref{sect:missing}.  When we have multiple discrete outcome variables, the nonparametric approach is not feasible due to the curse of dimensionality. To see this, suppose we have four outcome variables, 
and each outcome variable has ten possible values. The  joint distribution has a dimension of $10^4$, surpassing the order of magnitude of moderately sized samples. This motivates to use a parametric model.

To model the dependency among discrete variables, we consider a mixture model, which is based on
the idea of the latent class model (LCM).
%The LCM has been well-studied in the literature.  
Originally presented in its modern form by \cite{Goodman1974}, the latent class model takes the form of the following mixture of multivariate multinomials distribution
\begin{equation}
	p(y; \theta) = \sum_{k=1}^K w_k p_k(y ; \theta_k)
	= \sum_{k=1}^K w_k \prod_{j=1}^d p_{k,j}(y_{j}; \theta_{k,j}),
\end{equation}
where $p_{k,j}$ denotes a multinomial with $\theta_{k,j}$ being its the parameter and the $w_k$'s are probability weights for each component such that $\sum_k w_k = 1$ and $w_k>0$ for all $k$.  As $p_{k,j}$ is a multinomial distribution, $\theta_{k,j}$ refers to vector of probabilities (one for each level of $Y_j$) that sum to 1.  There is a popular implementation for fitting these models through the \textit{poLCA} package \citep{LinzerLewis2011}.

Let $Z$ denote the component assignment such that $Z = \ell$ if and only if the associated observation is generated from mixture component $\ell$. Namely, $P(Y=y|Z= k;\theta) =  \prod_{j=1}^d p_{k,j}(y_{j}; \theta_{k,j})$.  By construction, the latent class model implicitly assumes for $j\neq j'$, the variables $Y_j$ and $Y_{j'}$ are conditionally independent given the latent class $Z$.  This is known as the local independence assumption and is commonly used in the literature to model dependence between discrete random variables.  While they are conditionally independent, the variables $Y_j$ and $Y_{j'}$ are allowed to depend marginally for $j\neq j'$.  The local independence assumption makes it convenient to calculate the conditional distributions, which are used in Section \ref{sect:missing} when we discuss imputation.  Some effort has been made to relax the local independence assumption such as mixtures of log-linear models \citep{Bock1994} and hiearchical latent class models \citep{Zhang2004, Chen2012}, but these are have not been as widely adopted due to the increase in the number of parameters.

Throughout this paper, we will refer to the data set containing tuples of the form $(\bX,\bY)$ as the \textit{latent incomplete} (LI) data and the data set containing tuples of the form $(\bX,\bY,Z)$ as the \textit{latent complete} (LC) data.  In the setting of no missing data, the latent incomplete data is what we typically have access to, but this is often insufficient for direct model fitting due to the unobserved latent variable $Z$.  We will show that one can estimate the model parameters using an EM algorithm with the latent incomplete data.  We use this nomenclature to avoid confusion in Section \ref{sect:missing} when we encounter the traditional type of missingness with the outcome variables.

There has been previous work to incorporate covariates in the latent class model \citep{HUANGGuan-Hua2004Bail, Vermunt2010}.  This is related to
the \textit{mixture of experts} models that arose from the machine learning literature \citep{Jacobs1991, Jordan1993, Yuksel2012}.  These models generalize the classical mixture model formulation by allowing the model parameters $w_k$ and $\theta_k$ to depend on covariates.  The \textit{simple mixture of experts} model \citep{Gormley2018HandbookOM} takes the following form
\begin{equation*}
	p(y | x; \btheta) = \sum_{k=1}^K w_k(x;\beta_k) p_k(y ; \theta_k),
\end{equation*}
where $\beta_k$ is introduced to allow the weights to depend on covariates.  As written, the covariates only adjust the weights placed on each mixture component but do not affect the parameters of the individual component distribution.

To apply these ideas to the neuropsychological data, we combine the mixture of experts and LCM, leading to 
the following mixture of binomial product experts
\begin{align}
	p(y|x; \bbeta, \btheta) = \sum_{k=1}^K w_k(x;\beta_k) \underbrace{\prod_{j=1}^d \underbrace{\binom{N_j}{y_j} (\theta_{k,j})^{y_j} (1-\theta_{k,j})^{N_j-y_j}}_{p_{k,j}(y_j;\theta_{k,j})}}_{p_k(y;\theta_k)},
\end{align}
where $w_k(x;\beta_k) = \exp(\beta_k^\top \cdot(1,x)) / (\sum_{k'\in[K]} \exp(\beta_{k'}^\top \cdot(1,x)))$ is from the multiple-class logistic regression model.  Like in the classical LCM, we utilize the local independence assumption as each mixture decomposes as a product of binomials.  However, there is a critical difference between our model and the classical LCM.  We use the binomial distribution instead of the multinomial distribution, and this leads to large reduction in the number of parameters.  Since test scores are ordered variables, we expect the reduction to a binomial distribution to be reasonable as scores further away from the mean score can be less likely.  Similar to the simple mixture of experts model, each mixture weight depends on covariates.

Here we will treat $k=1$ as the reference group, so $\beta_1$ is the vector of all $0$ by assumption and does not need to be estimated.  For $k \in \{2,3,\ldots,K\}$, $\beta_k \in \mathbb{R}^{p+1}$, and for $k\in[K]$, $\theta_k \in \mathbb{R}^d$, we assume that the $w_k$  utilize the logistic function, but it is possible to use other link functions such as the probit and other monotonic functions \citep{ouyang2022identifiability}.  In our work, the outcome variables are neuropsychological test scores, and we treat the latent groups as subgroups of the population of varying cognitive ability.  The five levels of the CDR score offer a natural choice for the number of mixture components.

In our model construction, we view each mixture component as representing a subgroup of the population, where the subgroups have varying cognitive ability that is summarized by $\btheta$.  Therefore, we interpret the $\btheta$'s as attributes of population-level groups and do not allow them to depend on covariates.  Since the weights $w_k(x;\beta_k)$ depend on covariates, every individual has their own weights associated with each population-level group.  These covariates can be interpreted as a way to construct weights on an individual level for each of these population-level subgroups.  A side product of this assumption is that this reduces model complexity. If the $\btheta$'s depend on covariates, then the number of parameters increases from $O(K(d + p))$ to $O(Kdp)$, which is a dramatic change even for moderately sized $d$ and $p$ and thus another reason we do not allow this as part of our model.  Because the outcome variables are test scores, we can also interpret the quantity $N_j \cdot \theta_{k,j}$ as the mean of the $j$-th test score of the $k$-th latent group, and this is useful as a summary statistic for a given latent class.

Binomial product distributions have previously arisen in the literature.  One such application was for modeling test scores for spatial tasks in child development in \cite{Hoben1993}, but they considered a fairly limited setting with no covariates, two components, and two outcome variables.  Binomial product and Poisson product distributions have also been used in the statistical ecology literature to model species abundance \citep{Kery2005, Haines2016, Brintz2018}.  These settings differ from ours because they often treat the total number counts $N_j$ as a random quantity.  In item response theory (IRT), the Rasch model posits that the probability of answering each question correctly depends on its difficulty and the individual's ability, assuming a Bernoulli distribution for each question \citep{rasch1960probabilistic}. However, in our data, only total test scores are available.  Thus, assuming uniform difficulty across test questions, we model the scores using Binomial distributions.  To our knowledge, our paper is the first time binomial products has been used to analyze neuropsychological test score data while incorporating covariates.

Note that for the $k$-th component, $\beta_k \in\mathbb{R}^{p+1}$ and $\theta_k \in\mathbb{R}^d$, there are $(K-1)(p+1)$ parameters associated with the covariates $\bbeta$ and $Kd$ parameters associated with $\btheta$ and a total of $K(p+d+1)-p-1$ parameters for the whole model.  Thus, for fixed covariate and outcome dimensions, the number of parameters grows linearly in $K$.

\subsection{Model fitting}

We now describe our procedure of fitting the model. 
An intuitive approach is to estimate the parameters via the maximum likelihood approach.
However, computing the maximum likelihood estimator (MLE) is a nontrivial task because the latent incomplete log-likelihood is not concave (Lemma \ref{lemma:nonconcavity}). As such, solving the first order conditions is no longer sufficient for determining the global maximizer.  %\st{obtaining a closed-form solution for the maximum likelihood estimator by taking a derivative with respect to the parameter is not possible.}
%\st{To see this, let $\bX_{1:n} = (\bX_1, \ldots, \bX_n)$ and $\bY_{1:n} = (\bY_1, \ldots, \bY_n)$ be the covariates and outcome variables of all $n$ individuals.}
%\st{Then we have the following result.}

%\st{The proof is in Appendix.}
Due to Lemma \ref{lemma:nonconcavity} (stated formally in Appendix \ref{appendix:lemmaproof}), maximizing the latent incomplete log-likelihood function directly is not straightforward. 
Here is an interesting insight from the mixture model literature: if we had access to the latent class label, then the maximum likelihood estimator would be easy to construct.  This insight motivates us to develop
an EM algorithm \citep{Dempster1977} by augmenting the observed data with the true component label $Z$  to calculate the MLE for $(\bbeta, \btheta)$.  For each $i=1,\ldots,n$, let $Z_i = \ell$ if the $i$-th observation comes from mixture component $\ell$.  Such data augmentation allows one to bypass the problem of taking the logarithm of a sum.  To see this, let $\bX_{1:n} = (\bX_1, \ldots, \bX_n)$ and $\bY_{1:n} = (\bY_1, \ldots, \bY_n)$ be the covariates and outcome variables of all $n$ individuals, respectively.  The latent complete (LC) log-likelihood function writes as
\begin{align}
	\ell_{\text{LC},n}&(\beta, \theta ; \bX_{1:n}, \bY_{1:n}, \bZ_{1:n}) = \log \left( \prod_{i=1}^n \prod_{k=1}^K \left(w_k(\bX_i; \beta_k) p_k(\bY_i ; \theta_k)\right)^{1(Z_i=k)}\right) \nonumber \\
	&= \sum_{i=1}^n\sum_{k=1}^K 1(Z_i=k) 
	\log w_k(\bX_i;\beta_k) + \sum_{i=1}^n\sum_{k=1}^K 1(Z_i=k)\log p_k(\bY_i;\theta_k). \label{eq:lc-loglike}
	%&= \underbrace{\sum_{i=1}^n\sum_{k=1}^K 1(Z_i=k) 
\end{align}

In the EM algorithm, we start from an initial guess $(\hat{\bbeta}^{(0)}, \hat{\btheta}^{(0)})$ and iterate between an expectation step (E-step) and a maximization step (M-step) to update our guess until convergence.  Let $t$ denote the $t$-th iteration of the EM algorithm.  In the E-step, we compute the expected value of the complete log-likelihood $\ell_{\text{LC}}(\bbeta, \btheta ; \bX,\bY,\bZ)$ conditional on the observed data $\bX$ and $\bY$.  The expected value of the latent complete log-likelihood given the observed data forms the $Q$ function in the EM algorithm
\begin{equation}
	Q_{\text{LC}}(\bbeta, \btheta | \bX, \bY; \hat{\bbeta}^{(t)}, \hat{\btheta}^{(t)}) := \E[\ell_{\text{LC}}(\bbeta, \btheta ; \bX,\bY,Z) | \bX, \bY; \hat{\bbeta}^{(t)}, \hat{\btheta}^{(t)}] \nonumber.
\end{equation}
In practice, we do not have access to the true expectation, so we work with the sample analogue $Q_{\text{LC},n}$.  The sample analogue can be expressed as
\begin{align}
	&Q_{\text{LC}, n}(\bbeta, \btheta | \bX_{1:n}, \bY_{1:n}; \hat{\bbeta}^{(t)}, \hat{\btheta}^{(t)}) \nonumber \\
	&:= \frac{1}{n}\sum_{i=1}^n\sum_{k=1}^K \hat{\pi}_k^{(t)}(\bX_i, \bY_i)
	\log w_k(\bX_i;\beta_k) + \frac{1}{n}\sum_{i=1}^n\sum_{k=1}^K \hat{\pi}_k^{(t)}(\bX_i, \bY_i) \log p_k(\bY_i;\theta_k),
\end{align}
where
\begin{equation}
	\hat{\pi}_k^{(t)}(\bX_i, \bY_i) := P(Z=k|\bX_i,\bY_i; \hat{\bbeta}^{(t)}, \hat{\btheta}^{(t)}) = \frac{w_k(\bX_i;\hat{\beta}_k^{(t)}) p_k(\bY_i; \hat{\theta}_k^{(t)})}{\sum_{k'=1}^K w_{k'}(\bX_i;\hat{\beta}_{k'}^{(t)}) p_{k'}(\bY_i; \hat{\theta}_{k'}^{(t)})} \label{eq:weights}
\end{equation}
is the weight of observation $(\bX_i, \bY_i)$ for the $k$-th mixture component.

Since the latent-complete log-likelihood function \eqref{eq:lc-loglike} decomposes as the sum of a function of $\bbeta$ and a function of $\btheta$, the maximization step of the EM algorithm is separable.  Note that since $w_k$ and $p_k$ are logistic regression and binomial product models, respectively, one can apply standard off-the-shelf model fitting procedures after reweighting each observation $i$ by $\hat{\pi}_k^{(t)}(\bX_i,\bY_i)$.  In the maximization step, the new estimate for $\bm{\theta}$ is updated as follows for each $k\in[K]$ and $j \in [d]$,
\begin{equation}
	\hat{\theta}^{(t+1)}_{k, j} := \dfrac{\sum_{i=1}^n \frac{Y_{i,j}}{N_j} \hat{\pi}_k^{(t)}(\bX_i, \bY_i)}{\sum_{i=1}^n \hat{\pi}_k^{(t)}(\bX_i, \bY_i)}, \label{eq:thetaUpd}
\end{equation}
which is simply the standard MLE formed by the sample proportion, reweighted by $\hat{\pi}_k^{(t)}(\bX_i,\bY_i)$.  We update our estimate of $\beta_k$ by fitting a multi-class logistic regression by reweighting each corresponding observation by $\hat{\pi}_k^{(t)}(\bX_i, \bY_i)$.  This is equivalent to regressing the variable $W^{(t)}$ on the covariates $X$, where $W_i^{(t)} = (\hat{\pi}_1^{(t)}(\bX_i, \bY_i), \ldots, \hat{\pi}_K^{(t)}(\bX_i, \bY_i)) \in S^{K-1}$ for all $i$, where $S^{K-1}$ is the $(K-1)$-th probability simplex.  Note that this is not the standard logistic regression because the outcome variable belongs to a probability simplex.  This logistic regression can still be fit using gradient descent in standard R packages.

\begin{algorithm}
	{\bf Require:} $\{(\bX_i, \bY_i)\}_{i=1}^n$, $(\hat{\bbeta}^{(0)}, \hat{\btheta}^{(0)})$
	\begin{algorithmic}[1]
		\State{Start with an initial guess $(\hat{\bbeta}^{(0)}, \hat{\btheta}^{(0)})$.}
		\State{Initialize the iteration step $t=0$.}
		\Repeat
		\State{Compute the weights $\hat{\pi}_k^{(t)}(\bX_i, \bY_i)$ for all observations $i$ and mixture components $k$ using \eqref{eq:weights}.}
		\State{Form the probability vector $W_i^{(t)} = (\hat{\pi}_1^{(t)}(\bX_i, \bY_i), \ldots, \hat{\pi}_K^{(t)}(\bX_i, \bY_i))$ for each $i$.}
		\State{Update $\hat{\btheta}^{(t+1)}$ using \eqref{eq:thetaUpd}.}
		\State{Update $\hat{\bbeta}^{(t+1)}$ by fitting a logistic regression model of $W^{(t)}$ on $\bX$.}
		\State{Update $t \leftarrow t+1$.}
		\Until{Convergence}
		\State{Let $\hat{\bbeta} = \lim_{t\to\infty} \hat{\bbeta}^{(t)}$ and $\hat{\btheta} = \lim_{t\to\infty} \hat{\btheta}^{(t)}$.}
		\State{\textbf{return} $(\hat{\bbeta}, \hat{\btheta})$}
	\end{algorithmic}
	\caption{Fitting the mixture model via EM} \label{alg:latentEM}
\end{algorithm}

Algorithm \ref{alg:latentEM} describes the process of fitting the mixture of binomial products experts model in the presence of completely observed covariate and outcome data.  For a single initialization point, the EM algorithm is not guaranteed to converge to the global maximizer.  In practice, we run Algorithm \ref{alg:latentEM} many times with multiple random initial guesses to explore the parameter space sufficiently and choose the parameter estimate from all of them that maximizes the log-likelihood.

\subsection{Identifiability}

We now make some brief comments on the identifiability of our model.  Identifiability means that each parameter value corresponds to a unique probability distribution.  That is, the mapping from the parameter space to the space of probability distributions is one-to-one. If identifiability does not hold, estimation becomes problematic because a unique MLE may not exist.  However, this notion of identifiability may be too strong for many practical purposes.  For example, even the common problem of label swapping violates this identifiability definition, so researchers often consider the idea of identifiability up to permutation of the parameters.  We now consider the notion of \textit{generic identifiability}, which relaxes the definition of identifiability even further.  Generic identifiability means identifiability holds almost everywhere in the parameter space \citep{Allman2009}.  More formally, this means that the mapping from the parameter space to the space of probability distributions may fail to be one-to-one only on a set of Lebesgue measure zero.  Generic identifiability (holding almost everywhere) can be viewed as an intermediate assumption between two common assumptions: global identifiability (holding everywhere) and local identifiability (holding in a neighborhood of the true parameter).  In a practical sense, this means that any such model fit on a given data set is unlikely to be unidentifiable, and we consider this notion to be sufficient for our applied data setting.
	
	\begin{proposition}[Sufficient conditions for generic identifiability]
		\label{prop:identif}
		Suppose the following conditions hold.
		\begin{enumerate}[label=A\arabic*]
			\item Each mixture is distinct such that $\theta_k \neq \theta_{k'}$ when $k \neq k'$. \label{assump:distinctmix}
			\item The number of outcome variables $d$ and the number of mixtures $K$ satisfies the bound $d \geq 2\lceil \log_{1+\min_j N_j} K \rceil + 1$. \label{assump:inequality}
			\item The design matrix is full-rank and $n> p$.\label{assump:identified-logistic} %and mixture components are not linearly separable.
			%The logistic regression model $w_k(x;\beta_k)$ for all $k\in[K]$ is identifiable. 
		\end{enumerate}
		Then, the mixture of binomial product experts is generically identifiable up to permutation of the parameters.
	\end{proposition}
	
	We note that the sufficient conditions outlined in Proposition \ref{prop:identif} are fairly mild.  Intuitively, one would not want the number of parameters to be too large when the data dimension $d$ is small. The bound in assumption \ref{assump:inequality} ensures that the number of mixture components $K$ remains appropriately controlled relative to $d$ and $N_j$, as a large $K$ increases model complexity. When there are at least $d=3$ outcome variables of moderate range, $K$ can still be fairly large.  Assumption \ref{assump:identified-logistic} is to ensure identifiability of the logistic regression parameters. We provide further comments on the assumptions and the proof of this proposition in Appendix \ref{appendix:identification}.  These assumptions are fairly mild, and we show that they can be met using relatively straightforward generating processes through our simulations in Section \ref{sect:simulation}. %is reasonable as long as the mixture components are well-separated and the covariates are not collinear.}

\section{Missingness in the Outcome Variables}

\label{sect:missing}

Another challenge in the NACC's neuropsychological data is the missingness in the outcomes.
To properly describe the missing data problem, we first introduce some additional notation.  Let $\bR_i = (R_{i,1}, R_{i,2}, \ldots, R_{i,d})^\top\in\mathcal{R}\subseteq\{0,1\}^d$ be the random binary vector that denotes the missing pattern associated with individual $i$.  Element $R_{i,j}$ of the binary vector $\bR_i$ is $1$ if and only if $Y_{i,j}$ is observed.  For a given pattern $r \in\mathcal{R}$, let $\bY_{i,r} = (Y_i,j: r_{i,j}=1)$ denote the observed variables and $\bY_{i,\bar{r}} = (Y_{i,j}: r_{i,j}=0)$ denote the missing variables.  For example, when $d=4$ and $r=1001$, then $\bY_{i,r} = (Y_{i,1},Y_{i,4})$ and $\bY_{i,\bar{r}} = (Y_{i,2}, Y_{i,3})$.  Similarly, for observation $i$ with random missing pattern $\bR_i$, denote $\bY_{i,\bR_i}$ and $\bY_{i,\bar{\bR}_i}$ as the observed and missing outcome variables for the $i$th observation, respectively.  We place no restrictions on the set of possible patterns $\mathcal{R}$, allowing the pattern to be nonmonotone.  Since $|\mathcal{R}| \leq 2^d$, missingness can easily become an exponentially complex problem in the dimension of the outcome variables.

In contrast to the previous section, we will use the term \textit{fully complete} (FC) to refer to a data set containing IID tuples of the form $(\bX,\bY,Z,\bR)$.  We call a data set containing IID observations of the form $(\bX,\bY_{\bR},\bR)$ as the \textit{observed} data.  Note that the observed data now has two kinds of incompleteness/missingness: incompleteness in the latent class variable (due to mixture models) and missingness in the outcomes.  As the case with the latent complete data, we also do not have access to the fully complete data.  We collect all of the observed outcome variables in one tuple $\bY_{1:n, \bR_{1:n}} = (\bY_{1,\bR_1}, \bY_{2,\bR_2}, \ldots, \bY_{n,\bR_n})$.

\subsection{Missing at random and an imputation strategy}
\cite{rubin1976inference} outlined three types of missingness mechanisms: missing completely at random (MCAR), missing at random (MAR), and missing not at random (MNAR).  Missing completely at random assumes that the missingness of the variable is independent of all variables in the data.  Missing at random assumes that the missingness of a variable can only depend on observed variables.  Missing not at random assumes that the missingness of a variable can depend on the value of the variable subject to missingness.

In the  MCAR data, model fitting is straightforward because one can run Algorithm \ref{alg:latentEM} on the complete cases, and the estimates of the parameters will be consistent.  
However, MCAR assumes that the missingness is irrelevant to observed outcomes, which is clearly false
in the NACC data as a possible
%common
reason to miss some test scores is that the individual was too sick to finish 
the test. 
Therefore, we consider the MAR assumption.	Formally, the definition of missing at random (MAR) is as follows.

\begin{definition}[Missing at random]
	The outcome variables $\bY$ are missing at random (MAR) if the probability of missingness is dependent only on the variables that are observed for the given pattern.  This assumption writes as
	\begin{equation}
		P(\bR=r | \bX, \bY) \stackrel{\text{MAR}}{=} P(\bR=r | \bX, \bY_r) \label{eq:mar}
	\end{equation}
	for all $r\in\mathcal{R}$.
\end{definition}

Notice that the left-hand side of \eqref{eq:mar} represents the selection probability $P(\bR=r|\bX,\bY)$. This quantity is strictly unidentifiable because it depends on unobserved data, specifically $\bY_{\bar{r}}$.  Thus, it cannot be estimated even given infinite observed data.  The missing at random assumption makes $P(\bR=r|\bX,\bY)$ identifiable because it equates it to a probability $P(\bR=r|\bX,\bY_r)$ that can be estimated from the observed data; the variables under the conditioning are all of the variables strictly observed under pattern $r$.  The definition of missing at random implies that the probability of a given missing pattern does not depend on variables that are unobserved under that pattern.  A key advantage of this assumption is that we avoid the challenge of modeling the joint distribution between $\bY$ and $\bR$ since the missingness mechanism $P(\bR=r|\bX,\bY)$ does not need to be modeled directly; this is known as the \textit{ignorability property} (see the discussion later).  Thus, we do not have to deal with making potentially unreasonable modeling assumptions on either the selection model $P(\bR=r|x,y)$ or the extrapolation distribution $p(y_{\bar{r}} | y_r, x, \bR=r)$.

The MAR assumption is untestable because its validity relies strictly on data that is unobserved, and so it cannot be rejected by the observed data.  Our primary reason for selecting this assumption is for modeling.  This may lead to easier interpretability for scientists and practioners since we fit a global model $p(y|x)$ across all missing patterns rather than a local model $p(y|x, R = r)$ for every pattern $r$. In the Alzheimer's disease literature, having more statistically sound work can be meaningful because of limitations in existing work that we have described previously in Section \ref{subsect:dementia-related}.  Since we believe that it may be plausible because these tests are correlated due to underlying cognitive ability, we can use it as a starting point. This can be viewed as a baseline before proceeding with more complex MNAR assumptions. We recognize that MNAR may be more reasonable since missing test scores may be attributed to sickness.  However, this is in itself a very open research question because MNAR is a broad class of assumptions, and performing mixture modeling with MNAR is not straightforward; we leave this for future work.%\st{However, the fact that it is untestable implies that MAR is a very weak assumption. We argue that this can be plausible because conditional on the covariates and the observed test scores, we may expect that the distributions of the missing test scores for the missing and the observed populations are similar.}

An additional feature of the MAR assumption is that 
this assumption offers a simple approach to impute the data, which 
makes the computation of the MLE a lot easier. 
Before describing the procedure of updating model parameters,
we first introduce a multiple imputation procedure in Algorithm \ref{alg:MI} that can be used to fill in the missing data.  In the algorithm, for notational convenience, when we write a binary vector in the summation or product, this means we iterate over all indices whose elements are nonzero.  For instance, when we write ``For $j$ in $(1,0,0,1)$,'' this is equivalent to ``For $j=1,4$.''  Therefore, if $\bR_i = (1,0,0,1)$, then ``For $j$ in $\bR_i$'' corresponds to ``For $j=1,4$'' as well.

\begin{algorithm}
	{\bf Require:} $\{(\bX_i, \bY_{i,\bR_i}, \bR_i)\}_{i=1}^n$, $(\hat{\bbeta}, \hat{\btheta})$, $M$ (the number of imputed data sets)
	\begin{algorithmic}[1]
		\For{$i=1,\ldots,n$}
		\If{$\bR_i \neq 1_d$:}
		\For{$k=1,\ldots,K$}
		\State{Form the weights $\omega_{i,k} = w_k(\bX_i;\hat{\beta}_k) p_{k,\bR_i}(\bY_{i,\bR_i};\hat{\theta}_k) / \sum_{k'} w_{ k'}(\bX_i;\hat{\beta}_{ k'}) p_{ k',\bR_i}(\bY_{i,\bR_i};\hat{\theta}_{ k'})$}
		%\State{Form the weights $\omega_{i,k} = w_k(\bX_i;\hat{\beta}_k) \prod_{r} p_k(\bY_{i,r};\hat{\theta}_k) / \sum_{\color{magenta} k'} w_{\color{magenta} k'}(\bX_i;\hat{\beta}_{\color{magenta} k'}) \prod_{r} p_{\color{magenta} k'}(\bY_{i,r};\hat{\theta}_{\color{magenta} k'})$}
		\EndFor
		\EndIf
		\EndFor
		\For{$m = 1,\ldots,M$}
		\For{$i=1,\ldots,n$}
		\If{$\bR_i \neq 1_d$}
		\State{Sample $Z_i^{(m)} \sim \text{Categorical}(K, p=(\omega_{i,1},\ldots,\omega_{i,K}))$}
		\For {$j\in \bar{\bR}_i$}
		\State{Sample $\tilde{Y}_{i,j}^{(m)}$ from $\text{Binomial}\left(N_j, \hat{\theta}_{Z_i^{(m)},j}\right)$.}
		\EndFor
		\State{Assign $\tilde{\bY}_i^{(m)} := (\bY_{i,\bR_i}, \tilde{\bY}_{i,\bar{\bR}_i}^{(m)})$.}
		\EndIf
		
		\EndFor
		
		\State{Let $\tilde{D}_m = \{(\bX_i, \tilde{\bY}_i^{(m)}, \bR_i)\}_{i=1}^n$.}
		
		\EndFor
		
		\State{\textbf{return} $\{\tilde{D}_m\}_{m=1}^M$}
	\end{algorithmic}
	\caption{Multiple imputation}
	\label{alg:MI}
\end{algorithm}

As stated, Algorithm \ref{alg:MI} describes how to impute the outcome variables, assuming that a good estimate $(\hat{\bbeta}, \hat{\btheta})$ is available.  We discuss how to actually obtain such an estimate in the Section \ref{subsect:modelfitting-missing}.  The multiple imputation algorithm exploits the fact that the conditional distribution $p(y_{\bar{r}}|y_r,x)$ for any $r\in\mathcal{R}$ remains a mixture model.  There are two steps to performing multiple imputation: 1) for every observation $i$ with missing observations, we compute the weights of each component of the mixture distribution $p(y_{\bar{\bR}_i}|\bY_{i,\bR_i},\bX_i)$, and 2) sample $M$ times from the distribution $p(y_{\bar{\bR}_i}|\bY_{i,\bR_i},\bX_i)$ for each observation $i$ to form $M$ completed data sets.  The derivation of this procedure is provided in Appendix \ref{appendix:multiple-imp}.

%\st{Note that the MAR assumption with a parametric model will often lead to an EM algorithm
	%for computing the MLE. 
	%Our procedure will be implicitly doing this as well.
	%However, because the mixture model itself already involves an EM algorithm, 
	%handling missing data via EM algorithm will lead to a complicated analytical form for the conditional expectations.
	%Since obtaining a closed-form expression for the conditional expectations is difficult,
	%we replace part of the EM algorithm in handling missing data with the above multiple imputation
	%procedure.
	%In fact, one can view the above multiple imputation as a Monte Carlo approximation of the E-step in the missing data's
	%EM algorithm. 
	%In Section} \ref{subsect:modelfitting-missing}, we will provide further details about this.

\subsection{Model fitting under a missing at random assumption}

\label{subsect:modelfitting-missing}

For each $r$, we assume that the selection probability $P(\bR=r|\bX,\bY_r; \gamma_r)$ belongs to a parametric family, indexed by $\gamma_r$.  We collect these parameters into $\bgamma = (\gamma_r)_{r\in\mathcal{R}}$.  %\st{It follows that the probability distribution $p(y,r|x ; \btheta, \bbeta, \bgamma)$ factors using the selection model framework}
%\begin{align*}
%	p(y,r | x ; \btheta, \bbeta, \bgamma) &= p(y | x; \btheta, \bbeta) P(\bR=r|x,y; \bgamma)\\
%	&\stackrel{\text{MAR}}{=} p(y | x;\btheta,\bbeta) P(\bR=r|x,y_r; \gamma_r),
%\end{align*}
%\st{where the final equality follows from the MAR assumption.}
For simplicity, we write the log-likelihood in terms of the probability model without the $n$ samples.  %\st{We use this factorization to}
Under the MAR assumption, we can write the observed log-likelihood as
\begin{align*}
	\ell_{\text{obs}}(\bgamma, \bbeta, \btheta ; x, y_r, r) &= \log P(\bR=r | x,y_r; \gamma) + \log p(y_r | x; \bbeta, \btheta) \\
	&= \ell_{\text{obs}}^{(1)}(\bgamma ; x, y_r, r) + \ell_{\text{obs}}^{(2)}(\bbeta, \btheta ; y_r, x).
\end{align*}

The missingness mechanism is said to be ignorable because estimation of $(\bbeta, \btheta)$ is separated from the estimation of $\bgamma$.  Model fitting under a missing at random assumption typically occurs using an EM algorithm approach.  We can proceed with estimating $(\bbeta, \btheta)$ by conditioning on the observed data $(\bX,\bY_\bR)$ and using another EM algorithm approach.  The population $Q_{\text{FC}, r}$ function writes as follows
\begin{align}
	Q_{\text{FC}, r}(\bbeta, \btheta | \bX, \bY_r; \hat{\bbeta}^{(t)}, \hat{\btheta}^{(t)}) &:= \E[\ell_{\text{LI}}(\bbeta, \btheta ; \bX, \bY) | \bX, \bY_r ; \hat{\bbeta}^{(t)}, \hat{\btheta}^{(t)}] \nonumber \\
	&= \sum_{y_{\bar{r}}} \ell_{\text{LI}}(\bbeta, \btheta ; \bX, y_{\bar{r}}, \bY_r) p (y_{\bar{r}} | \bY_r, \bX; \hat{\bbeta}^{(t)}, \hat{\btheta}^{(t)}) \label{eq:Q-FCr}
\end{align}
for every $r \in\mathcal{R}$.  One major observation is that the conditional expectation relies on being able to fit estimate $p(y_{\bar{r}} | y_r, x; \bbeta, \btheta)$.  We can leverage the consequences of the local independence assumption to obtain an easily computable form of the conditional distribution.  Unfortunately, since the latent-incomplete log-likelihood function $\ell_{\text{LI}}$ is not linear in the outcome variables, we are unable to reduce \eqref{eq:Q-FCr} to a more simple form.  However, we can approximate this expectation stochastically by imputing the missing data $\bY_{\bar{r}}$ for every missing pattern $r$ using the distribution $p(y_{\bar{r}} | y_r, x ; \hat{\bbeta}^{(t)}, \hat{\btheta}^{(t)})$.

To overcome estimating the conditional expectation of factorial terms, we approximate \eqref{eq:Q-FCr} stochastically with a Monte Carlo procedure.  For large enough $M$, we expect that
\begin{align*}
	Q_{\text{FC}, r}(\bbeta, \btheta | \bX, \bY_r; \hat{\bbeta}^{(t)}, \hat{\btheta}^{(t)}) &\approx Q_{\text{FC},r,n}^{(M)}(\bbeta, \btheta | \bX_{1:n}, \bY_{1:n,\bR_{1:n}}; \hat{\bbeta}^{(t)}, \hat{\btheta}^{(t)}) \\ &:= \frac{1}{Mn}\sum_{m=1}^M \sum_{i=1}^n \log \left( \sum_{k=1}^K w_k(\bX_i; \beta_k) p_k(\tilde{\bY}_{i}^{(m;t)} ; \theta_k)\right) \cdot 1(\bR_i = r),
	%Q_{\text{FC},r,n}^{(M)}(\bbeta, \btheta | \bX_{1:n}, \bY_{1:n,\bR_{1:n}}; \hat{\bbeta}^{(t)}, \hat{\btheta}^{(t)}) &:= \frac{1}{Mn}\sum_{m=1}^M \sum_{i=1}^n \log \left( \sum_{k=1}^K w_k(\bX_i; \beta_k) p_k(\tilde{\bY}_{i}^{(m;t)} ; \theta_k)\right) \cdot 1(\bR_i = r) \\
	%&\approx Q_{\text{FC}, r}(\bbeta, \btheta | \bX, \bY_r; \hat{\bbeta}^{(t)}, \hat{\btheta}^{(t)}),
\end{align*}

where $\tilde{\bY}_i^{(m;t)} = (\bY_{i,\bR_i}, \tilde{\bY}_{i,\bar{\bR}_i}^{(m;t)})$ denotes the $m$th imputed data for the $i$th observation on the iteration step $t$ given the observed variables.  The choice of number of imputations $M$ is important, and in practice, we use $M=20$ to balance both computational time and good estimation performance.  We discuss this in more detail with our simulations in Section \ref{sect:simulation} and Appendix \ref{appendix:add-sim}.

In the maximization step, we compute the MLE of $(\bbeta, \btheta)$ using the completed data after the multiple imputation.  Since we have completed data, the MLE can be found using the original EM algorithm, outlined in Algorithm \ref{alg:latentEM}.  We summarize the procedure in the following algorithm.  Throughout this paper, we will use the notation $\hat{\bbeta}$ and $\hat{\btheta}$ (without superscripts relating to $t$) to denote the point estimate obtained by Algorithm \ref{alg:marEM} after convergence is achieved.  In practice, we assume that convergence is achieved if the $L_2$ norm of the difference between the current and old parameter estimates falls within a prespecified tolerance level (we choose $\epsilon=10^{-4}$).

\begin{algorithm}
	{\bf Require:} $\{(\bX_i, \bY_{i,\bR_i}, \bR_i)\}_{i=1}^n$, $(\hat{\bbeta}^{(0)}, \hat{\btheta}^{(0)}), M$
	\begin{algorithmic}[1]
		\State{Start with an initial guess $(\hat{\bbeta}^{(0)}, \hat{\btheta}^{(0)})$.}
		\State{Initialize the iteration step $t=0$.}
		\Repeat
		\State{Let $\tilde{D}_t = \left\{\left(\bX_i, \tilde{\bY}_i^{(m;t)}\right)\right\}_{i=1,\ldots,n;\ m=1,\ldots,M}$ be the stacked imputed data sets obtained from Algorithm \ref{alg:MI}.}
		\State{Obtain $(\hat{\bbeta}^{(t+1)}, \hat{\btheta}^{(t+1)})$ using Algorithm \ref{alg:latentEM} on the imputed data set $\tilde{D}_t$ and starting point $(\hat{\bbeta}^{(t)}, \hat{\btheta}^{(t)})$.}
		\State{Update $t\leftarrow t+1$.}
		\Until{Convergence}
		\State{Let $\hat{\bbeta} = \lim_{t\to\infty} \hat{\bbeta}^{(t)}$ and $\hat{\btheta} = \lim_{t\to\infty} \hat{\btheta}^{(t)}$.}
		\State{\textbf{return} $(\hat{\bbeta}, \hat{\btheta})$}
	\end{algorithmic}
	\caption{Fitting the mixture model with MAR outcomes via a nested EM procedure} \label{alg:marEM}
\end{algorithm}

Algorithm \ref{alg:marEM} is a nested EM procedure because we have two types of missingness: missingness in the outcome variables and missingness in the latent class labels.  We embed an EM algorithm for the latent %\st{factor}
	class model fitting (this serves as the M-step in the outer EM algorithm) within an outer EM algorithm for handling the missing at random data.  Because we do not obtain a closed-form expression for the conditional expectations in the E-step but rather perform a stochastic approximation, Algorithm \ref{alg:marEM} is a Monte Carlo EM algorithm \citep{Levine2001}.  We use the notation $\tilde{\bY}_i^{(m;t)}$ to denote the $m$th imputed outcome variables using the model parameterized by $(\hat{\bbeta}^{(t)}, \hat{\btheta}^{(t)})$.  When the parameter is understood, we omit the $t$ index.  %{\color{magenta} In our real data analysis, we choose the number of imputations $M$ to be 20, based on simulations in Section \ref{sect:simulation} and Appendix \ref{appendix:add-sim}.  We have observed in simulations, as we varied $M$ from $10, 20, 50$, the results are fairly stable.}
	Note that when there is no missingness in the outcome variables, Algorithm \ref{alg:marEM} reduces to Algorithm \ref{alg:latentEM} because the multiple imputation step is bypassed.  Again, in practice, we run Algorithm \ref{alg:marEM} with multiple random initial points to ensure we explore the parameter space and converge to the global maximizer.  We pick the point estimate that maximizes the observed log-likelihood, defined as
\begin{align}
	\ell_{\text{obs},n} := \ell_{\text{obs},n}(\bbeta, \btheta ; \bX_{1:n}, \bY_{1:n, \bR_{1:n}}) := \sum_{i=1}^n \log \left( \sum_{k=1}^K w_k(\bX_i) p_{k,\bR_i}(\bY_{i,\bR_i})\right), \label{eq:obs-loglike}
\end{align}
where $p_{k,r}(y_r) = \prod_{j\in r} p_{k,j}(y_k)$ for any $r\in\mathcal{R}$.  Note that when there is no missing data, this reduces to the latent incomplete log-likelihood function $\ell_{\text{LI},n}$.  Since we are working under a parametric model and estimating parameters using maximum likelihood, the estimators converge at $\sqrt{n}$-rate and are asymptotically efficient.  We provide theoretical justification for these methods and discussion of the asymptotic behavior in Appendix \ref{appendix:EMderivation}.

\section{Inference}

\label{sect:inference}

In this section, we discuss how to perform inference on the parameters $\btheta$ and $\bbeta$.  There has been some previous work in obtaining asymptotic variance estimators for multiple imputations estimators \citep{WangRobins1998, RobinsWang2000}, but this can be analytically challenging in our setting.  The bootstrap is a widely adopted procedure for estimating the sampling distribution of an estimator and obtaining asymptotically valid confidence intervals \citep{efron1979bootstrap} and has been previously used with the EM algorithm \citep{CeleuxSEM1987}.  \cite{OHagan2019} have also explored the use of the bootstrap with a Gaussian mixture model.  We provide theoretical justification for the bootstrap in our setting in Appendix \ref{appendix:bootstrap}.  %\st{Since we are working under well-behaved parametric model, the statistical functional that we are trying to estimate should be sufficiently smooth and Hadamard differentiable, which implies validity of the bootstrap}% \citep{van1998asymptotic}.}

Bootstrapping works by resampling from the original data set (which is equivalent to sampling from empirical distribution), and with large enough sample size and under sufficient regularity conditions, this mimics drawing samples from the true generating distribution.  We bootstrap the data $B$ times (where $B$ is sufficiently large) including the missing data, and the estimation procedure is run on each of the $B$ bootstrapped data sets.

Since our EM algorithm requires many initialization points in practice to properly explore the parameter space, there is a question of how to initialize the bootstrap procedure.  We follow the recommendation outlined in \cite{Chen2022StatisticalOptima} to initialize the bootstrap at the same initial point on every iteration, using the point estimator $(\hat{\bbeta}, \hat{\btheta})$ returned from Algorithm \ref{alg:marEM}.  The primary goal of the bootstrap is to measure the stochastic variation of an estimator around the parameter of interest.  If we perform the bootstrap with random initialization, we will capture additional uncertainty that arises from estimating different local modes of the log-likelihood function.  Initializing at the same point also avoids the label switching identifiability problem, which occurs when the probability distribution remains identical after some parameters are permuted.  Additionally, this saves on computational time because we are also not performing many random initializations.  Our bootstrap procedure is summarized in the Algorithm \ref{alg:bootstrap}.

\begin{algorithm}
{\bf Require:} $\{(\bX_i, \bY_{i,\bR_i}, \bR_i)\}_{i=1}^n$, $(\hat{\bbeta}, \hat{\btheta})$, $B$ (a large number, say 1,000)
\begin{algorithmic}[1]
	\For{$b\in1,\ldots,B$}
	\State{Sample $n$ draws uniformly with replacement from $\{1, 2, \ldots, n\}$.  Put these into index set $I_b$.}
	\State{Set the $b$th bootstrapped data set $D_b^* = \{\bX_i, \bY_{i,\bR_i}, \bR_i\}_{i\in I_b}$.}
	\State{Obtain $(\bbeta^{*(b)}, \btheta^{*(b)})$ using Algorithm \ref{alg:marEM} on data set $D^*_b$ initialized with $(\hat{\bbeta}, \hat{\btheta})$.}
	\EndFor
	
	\State{\textbf{return} $\{\bbeta^{*(b)}\}_{b=1}^B, \{\btheta^{*(b)}\}_{b=1}^B$}
\end{algorithmic}
\caption{Bootstrap procedure for obtaining confidence intervals}
\label{alg:bootstrap}
\end{algorithm}

We now describe how to construct confidence intervals for a given parameter, using $\theta_{1,1}$ as an illustrating example.  We first obtain a point estimate $(\hat{\bbeta}, \hat{\btheta})$ using Algorithm \ref{alg:marEM}.  Then, we run Algorithm \ref{alg:bootstrap} using $(\hat{\bbeta}, \hat{\btheta})$ as the initialization point for a large number of iterations $B$.  We estimate the variance of $\theta_{1,1}$ using the bootstrap samples via the sample variance of $\{\theta_{1,1}^{*(b)}\}_{b=1}^B$
\begin{equation*}
\hat{V} = \frac{1}{B-1} \sum_{b=1}^B \left(\theta_{1,1}^{*(b)} - \frac{1}{B}\sum_{b'=1}^B \theta_{1,1}^{*(b')} \right)^2.
\end{equation*}
We form a $95\%$ confidence interval using a normal approximation
\begin{equation*}
(\hat{\theta}_{1,1} - 1.96 \cdot \sqrt{\hat{V}}, \hat{\theta}_{1,1} + 1.96 \cdot \sqrt{\hat{V}}).
\end{equation*}
Other approaches such as the percentile method described in \cite{Chen2022StatisticalOptima} will also be equivalent asymptotically since the bootstrap distribution will be asymptotically normal.  The key advantage of this method is that it avoids the difficulty of obtaining a closed form expression for the Fisher information matrix, which can be difficult for complex mixture models.  The process of the bootstrap is similar to that of attaining a point estimate because we run Algorithm \ref{alg:marEM} repeatedly many times.  However, unlike obtaining a point estimate, we use the same initialization point on many different data sets (bootstrap samples) rather than many different initialization points on the same data set.  \cite{WhiteMurphy2014} have also discussed how bootstrap methods can struggle when parameters are close to the boundary, so we recommend that post-hoc visualizations are performed to ensure that label-switching has not occurred.  In our data analysis and simulations, the mixture components are well-separated, and in practice, we order the groups based on the parameter estimates to avoid the label switching problem.

\section{Cluster Analysis}

\label{sect:clustering}

A feature of the mixture model that we propose is that we are able to
cluster individuals into different groups according to the predictive probability.
This allows us to discover latent groups inside our data.
We assume that the mixture of binomial product experts model has already been fit successfully using Algorithm \ref{alg:marEM}, and we have an estimator $(\hat{\bbeta}, \hat{\btheta})$.

Each mixture component can be thought of as a cluster, so the method we propose can be used to perform model-based clustering \citep{raftery2019}.  To simplify the problem, we first consider the case where we do not have any missing outcome variables.  That is, the data we have is only latent incomplete.  For each observation $i$, define the probabilities as follows
\begin{equation}
\hat{\pi}_k(\bX_i, \bY_i) := P(Z=k|\bX_i,\bY_i;\hat{\bbeta}, \hat{\btheta}) = \frac{w_k(\bX_i;\hat{\beta}_k) p_k(\bY_i; \hat{\theta}_k)}{\sum_{k'=1}^K w_{k'}(\bX_i;\hat{\beta}_{k'}) p_{k'}(\bY_i; \hat{\theta}_{k'})}.
\end{equation}
The quantity $\hat{\pi}_k(\bX_i, \bY_i)$ is the predictive probability that individual $i$ is from component $k$. 
We compute the probability of belonging to each class given the data $\bX$ and $\bY$, leading to
a probability vector $\hat{\pi}(\bX_i, \bY_i) = (\hat{\pi}_1(\bX_i, \bY_i), \ldots, \hat{\pi}_K(\bX_i, \bY_i))\in \mathbb{R}^K$.
It is straightforward to create a cluster assignment of each individual by assigning each observation to the component with the maximum probability. Namely, we assign individual $i$ to the cluster with
$$
\hat C_i = \argmax_k \hat{\pi}_k(\bX_i, \bY_i).
$$
In the context of the neuropsychological test scores, we can interpret the latent class as a measure of latent cognitive ability over multiple cognitive domains.  An individual's cognitive ability is a complex summary of many different attributes that we hope to measure using the neuropsychological test scores.  The latent class can be predicted using the test scores and the baseline covariates.
%Latent cognitive ability encompasses many different cognitive domains.  

When the data is missing, we perform clustering using the observed data.  For any $r\in\mathcal{R}$, note that there is a closed form expression for $P(Z=k|\bX,\bY_r)$.  Define the following quantity
\begin{align*}
\hat{\pi}_{k,r}(\bX_i, \bY_{i,r}) := P(Z=k|\bX_i,\bY_{i,r};\hat{\bbeta},\hat{\btheta}) = \frac{w_k(\bX_i;\hat{\beta}_k) p_{k,r}(\bY_{i,r}; \hat{\theta}_k)}{\sum_{k'=1}^K w_{k'}(\bX_i;\hat{\beta}_{k'}) p_{k',r}(\bY_{i,r}; \hat{\theta}_{k'})}.
\end{align*}
For observation $i$ and every $k\in[K]$, we compute $\hat{\pi}_{k,\bR_i}(\bX_i, \bY_{i,\bR_i})$.  This quantity can be interpreted as the estimated probability of arising from component $k$ given the data $\bX_i$ and $\bY_{i,\bR_i}$.  When there are missing outcome variables, we compute these probabilities explicitly and use them to perform clustering.  We summarize our overall proposed methodology in Figure \ref{fig:flowchart}.  From the observed data, we run Algorithm \ref{alg:marEM} (which comprises Algorithms \ref{alg:latentEM} and \ref{alg:MI}) to obtain a point estimate.  We use the point estimate and the observed data in the Algorithm \ref{alg:bootstrap} to construct confidence intervals using the bootstrap.  Cluster assignments can also be obtained using the point estimates.

\begin{figure}
	\centering
	\includegraphics[width=14cm]{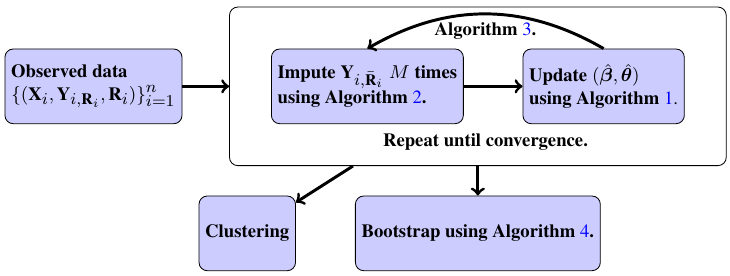}
%\begin{tikzpicture}[node distance = 4cm]
%	\node [terminator, fill=blue!20, align=left] (start) {\textbf{Observed data} \\ $\{(\bX_i, \bY_{i,\bR_i}, {\bR}_i)\}_{i=1}^n$};
%	\node [terminator, fill=blue!20, right=1.5cm of start, align=left] (impute) {\textbf{Impute} $\bY_{i,\bar{\bR}_i}$ $M$ \textbf{times} \\ \textbf{using Algorithm }\ref{alg:MI}\textbf{.}};
%	\node [terminator, fill=blue!20, right of=impute, align=left] (update) {\textbf{Update} $(\hat{\bbeta}, \hat{\btheta})$\\ \textbf{using Algorithm} \ref{alg:latentEM}.};
%	\draw[->, line width=0.5mm] (impute) -- (update);
%	\draw[->, line width=0.5mm] (update.north) to [out=150,in=30] (impute.north);
%	\node[align=left,above=0.05cm of impute] (alg2) {\hspace{4cm}\textbf{Algorithm} \ref{alg:marEM}\textbf{.}};
%	\node[align=left,below=0.05cm of impute] (repeat) {\hspace{4cm}\textbf{Repeat until convergence.}};
%	\node [terminator, fit=(impute)(update), inner sep=20pt] (bounding) {};
%	\draw[->, line width=0.5mm] (start) -- (bounding);
%	\node [terminator, fill=blue!20, below =0.5cm of bounding, align=left] (bootstrap) {\textbf{Bootstrap using Algorithm} \ref{alg:bootstrap}\textbf{.}};
%	\node [terminator, fill=blue!20, left=1cm of bootstrap, align=left] (clustering) {\textbf{Clustering}};
%	\draw[->, line width=0.5mm] (bounding) -- (bootstrap);
%	\draw[->, line width=0.5mm] (bounding) -- (clustering);
%\end{tikzpicture}
\caption{This flowchart describes overall inference procedure.} \label{fig:flowchart}
\end{figure}

%\begin{remark}
%	We use the notation $\hat{\pi}$ to emphasize that these probabilities are based on an \textit{estimated model from a sample}.  We estimate these probabilities by plugging our estimators into our statistical model and solving for the desired probabilities with Bayes' rule.  In theory, there exist oracle probabilities $\pi_k(\bX,\bY) := P(Z=k | \bX, \bY)$ and $\pi_{k,r}(\bX,\bY_r) := P(Z=k | \bX, \bY_r)$ based on the population.  One can analyze clustering based on these population quantities, but in practice, we do not have access to these functions because they are parameterized by the true population parameters.  Instead, we use $\hat{\pi}_k(\bX,\bY)$ and $\hat{\pi}_{k,r}(\bX, \bY_r)$ as approximations to the population quantities.  More theoretical analysis can be performed to compare the clustering of our sample to a population clustering, but this is out of the scope of this paper.
%\end{remark}

\section{Simulations}

\label{sect:simulation}

We now examine the performance of our method in finite samples using simulated data.  We have $p=2$ covariates, $d=4$ outcome variables, and $K=3$ mixtures.  Note that $\eta=2$, $\eta=2.5$, and $\eta=\infty$ correspond to approximately 75\%, 85\%, and 100\% complete cases.  We generate the data through a mixture of binomial product experts model $p(y|x)$, and the parameter $\eta$ affects the strength of the missingness via the selection model $P(R=r|x,y)$.  Further details on the data generating process and the role of $\eta$ are described in Appendix \ref{appendix:simdetails}.  Each of the simulation studies is evaluated using the mean-squared error $\text{MSE}_{\btheta} = \E[\lVert\hat{\btheta}-\btheta\rVert_2]$ and $\text{MSE}_{\bbeta} = \E[\lVert\hat{\bbeta}-\bbeta\rVert_2]$.  For the simulations in Table \ref{table:sim}, we estimate the mean-squared error by simulating $U=1000$ data sets in each scenario and average the $L_2$ norm of the error over all the data sets.  The estimated MSEs are calculated as follows

\begin{equation}
	\widehat{\text{MSE}}_{\bbeta} = \frac{1}{U}\sum_{i=1}^U \sum_{k\neq1} \lVert\hat{\beta}_k^{(i)}-\beta_k\rVert_2, \quad \widehat{\text{MSE}}_{\btheta} = \frac{1}{U}\sum_{i=1}^U \sum_{k} \lVert\hat{\theta}_k^{(i)}-\theta_k\rVert_2,
\end{equation}
where the superscript $(i)$ denotes the estimated parameter of the $i$th generated data set.  We also construct 95\% confidence intervals for each parameter and report the coverage over all the $U=1000$ data sets.  We check convergence of Algorithms \ref{alg:latentEM} and \ref{alg:marEM} by calculating the $L_2$ norm of the difference between the current and old parameter estimates and comparing against a set tolerance of $\epsilon=10^{-4}$.  In the first column of Table \ref{table:sim}, for a given $\eta$, we observe that the estimated MSE decreases at a roughly linear rate as we increase the sample size, which is expected in the parametric setting.  This demonstrates that the estimation procedures for both the no missing data (Algorithm \ref{alg:latentEM}) and missing data cases (Algorithm \ref{alg:marEM}) are consistent.  The MSE for $\bbeta$ is higher when there is missingness than when there is no missingness, which suggests that the estimating $\bbeta$ becomes more difficult.

\begin{table}
	\caption{These results are the estimated MSEs and estimated coverage after imputing $M=20$ times from our model and running Algorithms \ref{alg:marEM} and \ref{alg:bootstrap} for $U=1000$ replicates.  The first row contains results for the estimation of $\btheta$.  The second row contains results for the estimation of $\bbeta$.}
	\label{table:sim}
	\centering
	\scalebox{0.9}{
		\begin{tabular}{cccc|}
			\multicolumn{4}{c}{$\textbf{MSE}_{\btheta} (\times 100)$}                                                                                                               \\
			\cline{2-4}
			\multicolumn{1}{c|}{}        & \multicolumn{3}{c|}{Sample Size}                                                                        \\
			\cline{1-4}
			\multicolumn{1}{|c|}{}        & \multicolumn{1}{c|}{$n=500$} & \multicolumn{1}{c|}{$n=1000$} & \multicolumn{1}{c|}{$n=2000$} \\
			\hline
			\multicolumn{1}{|c|}{$\eta=\infty$} & \multicolumn{1}{c|}{0.203}        & \multicolumn{1}{c|}{0.099}         & \multicolumn{1}{c|}{0.05}                \\
			\multicolumn{1}{|c|}{$\eta=2.5$} & \multicolumn{1}{c|}{0.192}        & \multicolumn{1}{c|}{0.093}         & \multicolumn{1}{c|}{0.047}         \\
			\multicolumn{1}{|c|}{$\eta=2$} & \multicolumn{1}{c|}{0.203}        & \multicolumn{1}{c|}{0.104}         & \multicolumn{1}{c|}{0.054}              
		\end{tabular}
		
		\hspace{10pt}
		
		\begin{tabular}{cccc|}
			\multicolumn{4}{c}{\textbf{Estimated Coverage of} $\theta$s}                                                                                                               \\
			\cline{2-4}
			\multicolumn{1}{c|}{}        & \multicolumn{3}{c|}{Sample Size}                                                                        \\
			\cline{1-4}
			\multicolumn{1}{|c|}{}        & \multicolumn{1}{c|}{$n=500$} & \multicolumn{1}{c|}{$n=1000$} & \multicolumn{1}{c|}{$n=2000$} \\
			\hline
			\multicolumn{1}{|c|}{$\eta=\infty$} & \multicolumn{1}{c|}{0.946}        & \multicolumn{1}{c|}{0.948}         & \multicolumn{1}{c|}{0.950}                \\
			\multicolumn{1}{|c|}{$\eta=2.5$} & \multicolumn{1}{c|}{0.943}        & \multicolumn{1}{c|}{0.943}         & \multicolumn{1}{c|}{0.945}         \\
			\multicolumn{1}{|c|}{$\eta=2$} & \multicolumn{1}{c|}{0.939}        & \multicolumn{1}{c|}{0.939}         & \multicolumn{1}{c|}{0.932}           \\
			%\hline
			%\multirow{2}{*}{Data Generating Process 2} & \multicolumn{1}{|c|}{$\eta=\infty$} & \multicolumn{1}{c|}{}        & \multicolumn{1}{c|}{}         & \multicolumn{1}{c|}{}                \\
			% & \multicolumn{1}{|c|}{$\eta=2$} & \multicolumn{1}{c|}{}        & \multicolumn{1}{c|}{}         & \multicolumn{1}{c|}{}         \\
			% & \multicolumn{1}{|c|}{$\eta=1$} & \multicolumn{1}{c|}{}        & \multicolumn{1}{c|}{}         & \multicolumn{1}{c|}{} 
		\end{tabular}

	}
	
	\vskip 10pt
	
	\scalebox{0.9}{
		\begin{tabular}{cccc|}
			\multicolumn{4}{c}{$\textbf{MSE}_{\bbeta} (\times 100)$}                                                                                                               \\
			\cline{2-4}
			\multicolumn{1}{c|}{}        & \multicolumn{3}{c|}{Sample Size}                                                                        \\
			\cline{1-4}
			\multicolumn{1}{|c|}{}        & \multicolumn{1}{c|}{$n=500$} & \multicolumn{1}{c|}{$n=1000$} & \multicolumn{1}{c|}{$n=2000$} \\
			\hline
			\multicolumn{1}{|c|}{$\eta=\infty$} & \multicolumn{1}{c|}{12.7}        & \multicolumn{1}{c|}{6.05}         & \multicolumn{1}{c|}{3.08}                \\
			\multicolumn{1}{|c|}{$\eta=2.5$} & \multicolumn{1}{c|}{41.5}        & \multicolumn{1}{c|}{20.6}         & \multicolumn{1}{c|}{10.5}         \\
			\multicolumn{1}{|c|}{$\eta=2$} & \multicolumn{1}{c|}{40.5}        & \multicolumn{1}{c|}{20.0}         & \multicolumn{1}{c|}{10.1}              
		\end{tabular}
		
		\hspace{10pt}
		
		\begin{tabular}{cccc|}
			\multicolumn{4}{c}{\textbf{Estimated Coverage of} $\beta$s}                                                                                                               \\
			\cline{2-4}
			\multicolumn{1}{c|}{}        & \multicolumn{3}{c|}{Sample Size}                                                                        \\
			\cline{1-4}
			\multicolumn{1}{|c|}{}        & \multicolumn{1}{c|}{$n=500$} & \multicolumn{1}{c|}{$n=1000$} & \multicolumn{1}{c|}{$n=2000$} \\
			\hline
			\multicolumn{1}{|c|}{$\eta=\infty$} & \multicolumn{1}{c|}{0.949}        & \multicolumn{1}{c|}{0.951}         & \multicolumn{1}{c|}{0.947}                \\
			\multicolumn{1}{|c|}{$\eta=2.5$} & \multicolumn{1}{c|}{0.952}        & \multicolumn{1}{c|}{0.951}         & \multicolumn{1}{c|}{0.944}         \\
			\multicolumn{1}{|c|}{$\eta=2$} & \multicolumn{1}{c|}{0.956}        & \multicolumn{1}{c|}{0.951}         & \multicolumn{1}{c|}{0.950}              
		\end{tabular}
		
	}
	
\end{table}

\begin{table}
	\caption{These results are the estimated MSEs after imputing $M=20$ times using \textit{mice} for $U=200$ replicates.  The left table contains results for the estimation of $\btheta$.  The right table contains results for the estimation of $\bbeta$.}
	\label{table:MICEsim}
	\centering
	\scalebox{0.9}{
		
		\begin{tabular}{cccc|}
			\multicolumn{4}{c}{$\textbf{MSE}_{\btheta} (\times 100)$}                                                                                                               \\
			\cline{2-4}
			\multicolumn{1}{c|}{}        & \multicolumn{3}{c|}{Sample Size}                                                                        \\
			\cline{1-4}
			\multicolumn{1}{|c|}{}        & \multicolumn{1}{c|}{$n=500$} & \multicolumn{1}{c|}{$n=1000$} & \multicolumn{1}{c|}{$n=2000$} \\
			%\multicolumn{1}{c|}{$n=4000$ (38)} \\
			\hline               
			\multicolumn{1}{|c|}{$\eta=2.5$} & \multicolumn{1}{c|}{0.183}        & \multicolumn{1}{c|}{0.096}         & \multicolumn{1}{c|}{0.045}         \\
			%\multicolumn{1}{c|}{} \\
			\multicolumn{1}{|c|}{$\eta=2$} & \multicolumn{1}{c|}{0.194}        & \multicolumn{1}{c|}{0.105}         &
			\multicolumn{1}{c|}{0.045} %\multicolumn{1}{c|}{0.028}              
		\end{tabular}
		
	}
	\scalebox{0.9}{
		\begin{tabular}{cccc|}
			\multicolumn{4}{c}{$\textbf{MSE}_{\bbeta} (\times 100)$}                                                                                                               \\
			\cline{2-4}
			\multicolumn{1}{c|}{}        & \multicolumn{3}{c|}{Sample Size}                                                                        \\
			\cline{1-4}
			\multicolumn{1}{|c|}{}        & \multicolumn{1}{c|}{$n=500$} & \multicolumn{1}{c|}{$n=1000$} &
			\multicolumn{1}{c|}{$n=2000$} \\ %\multicolumn{1}{c|}{$n=4000$ (38)} \\
			\hline
			\multicolumn{1}{|c|}{$\eta=2.5$} & \multicolumn{1}{c|}{39.4}        & \multicolumn{1}{c|}{19.0}         &
			\multicolumn{1}{c|}{10.2}  \\        %\multicolumn{1}{c|}{}         \\
			\multicolumn{1}{|c|}{$\eta=2$} & \multicolumn{1}{c|}{46.0}        & \multicolumn{1}{c|}{19.6}         &
			\multicolumn{1}{c|}{11.6}         %\multicolumn{1}{c|}{6.4}              
		\end{tabular}
		
	}
	
\end{table}

In the second column of Table \ref{table:sim}, we generally observe nominal coverage as well.  There is slight undercoverage of $\btheta$ when $\eta=2$.  We attribute this to the fact that we do not have a closed-form expression for the MLE and there are many approximations in the algorithms.  For example, Algorithm \ref{alg:marEM} is a stochastic EM algorithm because it relies on a prespecified number of imputations $M$.  Increasing the number of imputations will decrease the Monte Carlo error at the cost of computational time.  It is possible that the gradient of the log-likelihood surface can be fairly small in the neighborhood of the true MLE, leading to early stopping of the algorithms and returning a near-solution but not the true MLE.  Moreover, when there is missing data, the effective sample size is lower than the number of observations $n$, which can be another reason for the slight undercoverage in finite samples.  For most cases, the coverage is close to nominal, which suggests that the bootstrap procedure and the choice of initializing it at the original estimate returned by Algorithm \ref{alg:marEM} works in practice.

\begin{table}[!b]
	\caption{For the simulations in Tables \ref{table:sim} and \ref{table:MICEsim}, we report the average \textbf{computation time in minutes} and the standard deviation in parentheses across all the randomly generated data sets.  These were performed on a cluster using 48 cores.  The first table contains the times for obtaining both a point estimate and a confidence interval using Algorithms \ref{alg:marEM} and \ref{alg:bootstrap}.  The second table contains the times for only obtaining a point estimate with \textit{mice} and Algorithm \ref{alg:latentEM}.\\}
	\label{table:comptime-sim}
	\centering
\scalebox{0.80}{
	\begin{tabular}{cccc|}
		\multicolumn{4}{c}{\parbox{10cm}{\centering Algorithms \ref{alg:marEM} and \ref{alg:bootstrap} (Point Estimate and Bootstrap CI)}}
		\\
		\cline{2-4}
		\multicolumn{1}{c|}{} & \multicolumn{3}{c|}{Sample Size} \\
		\cline{1-4}
		\multicolumn{1}{|c|}{} & \multicolumn{1}{c|}{$n=500$} & \multicolumn{1}{c|}{$n=1000$} & \multicolumn{1}{c|}{$n=2000$} \\
		\hline
		\multicolumn{1}{|c|}{$\eta = 2.5$} & \multicolumn{1}{c|}{2.0 (2.1)}  & \multicolumn{1}{c|}{2.8 (0.5)}  & \multicolumn{1}{c|}{5.5 (1.1)} \\
		\multicolumn{1}{|c|}{$\eta = 2$}   & \multicolumn{1}{c|}{2.4 (1.9)}  & \multicolumn{1}{c|}{3.9 (3.5)}  & \multicolumn{1}{c|}{6.2 (1.9)} \\
	\end{tabular}

	\hspace{0.15in}

	\begin{tabular}{cccc|}
		\multicolumn{4}{c}{\parbox{8cm}{\centering \textit{mice} and Algorithm \ref{alg:latentEM} (Point Estimate)}}
		\\
		\cline{2-4}
		\multicolumn{1}{c|}{} & \multicolumn{3}{c|}{Sample Size} \\
		\cline{1-4}
		\multicolumn{1}{|c|}{} & \multicolumn{1}{c|}{$n=500$} & \multicolumn{1}{c|}{$n=1000$} & \multicolumn{1}{c|}{$n=2000$} \\
		\hline
		\multicolumn{1}{|c|}{$\eta = 2.5$} & \multicolumn{1}{c|}{8.1 (0.91)}  & \multicolumn{1}{c|}{14.2 (0.83)}  & \multicolumn{1}{c|}{28.2 (3.8)} \\
		\multicolumn{1}{|c|}{$\eta = 2$}   & \multicolumn{1}{c|}{8.1 (0.76)}  & \multicolumn{1}{c|}{14.5 (0.89)}  & \multicolumn{1}{c|}{27.4 (2.1)} \\
	\end{tabular}
}
\end{table}

For an additional comparison, we conduct a second simulation using the same data generating process.   We first perform imputation using the \textit{mice} R package \citep{mice}, generating $M = 20$ imputed datasets. We then fit our model using Algorithm \ref{alg:latentEM} and compute the MSEs.  The results are reported in Table \ref{table:MICEsim}.
	
	All simulations in this section were run on a cluster with 48 cores and a parallelized implementation. Our first observation is that this two-step approach—imputing with \textit{mice} followed by Algorithm \ref{alg:latentEM}—is substantially slower than Algorithms \ref{alg:marEM} and \ref{alg:bootstrap}. As a result, we limit the \textit{mice} experiment to $200$ total repetitions for estimating the MSE and do not report coverage results since bootstrap-based interval estimation would be prohibitively time-consuming. For reference, obtaining \textit{both} a point estimate and a bootstrapped confidence interval with our method requires approximately 3 to 6 times less computation time than obtaining \textit{only} a point estimate after imputing with the \textit{mice} method.  This can be seen in Table \ref{table:comptime-sim}.
	
	Interestingly, we observe that the MSEs from the \textit{mice} imputation and Algorithm \ref{alg:latentEM} pipeline are comparable to those from our method. However, we believe this is largely due to the well-separated mixture components in the data generating process. In general, we do not recommend using \textit{mice} in this context, as its imputation model may not be compatible with the mixture model being fit, and may not lead to asymptotically valid estimates in more challenging settings. Model uncongeniality is a known problem in the imputation literature \citep{meng1994multiple}.  In contrast, our method imputes from the same model that is estimated and fit, ensuring consistency.\\

Further simulations on the complete case data and the influence of the number of imputations are also included in Appendix \ref{appendix:add-sim}. When fitting the model on complete case data, we observe that the MSE does not decrease at a linear rate due to bias, and that the confidence intervals are not asymptotically valid. Moreover, the results appear fairly stable as we vary the number of imputations $M$ across 10, 20, and 50, suggesting that 10 imputations are already sufficient. To ensure better accuracy of our estimates while maintaining computational efficiency, we use $M=20$ imputations in the real data analysis.

\section{Application to the NACC data}

\label{sect:realdata}

As mentioned earlier, the data set that motivates our model is from the National Alzheimer's Coordinating Center (NACC) Uniform Data Set, collected from the years 2005 to 2019.  This is a longitudinal data set that comprises individuals of varying cognitive status: cognitively normal to mild cognitive impairment (MCI) to dementia.  Neuropsychological test scores are typically assessed annually with some baseline covariates collected upon entry to the study.  However, there is also some missingness present in the outcome variables.  The missingness can be due to various reasons.  For example, some tests may no longer be administered after a given time because they have been replaced with another one.  On the other hand, some individuals may have test scores missing because of a data recording error or they are too sick to take more tests.

\subsection{Description of outcome variables and covariates}

Our primary goal is to measure the cognitive ability of the Alzheimer's disease patients while measuring its association with baseline covariates.  Following similar analysis in \cite{brenowitz2017mixed}, we focus on four cognitive domains: long-term memory (episodic memory), attention, language, and executive function. Specifically, we used the Logical Memory Story A immediate and delayed recall to assess memory, Digit Span Forward and Backward tests to evaluate attention, Animal listing test to measure language ability, and Trail Making Tests Parts A and B to measure executive function.  Additionally, we used the Mini-Mental State Examination (MMSE) to assess overall cognitive impairment, resulting in a total of eight outcome variables.  The Trail A and B tests are assessed based on the time it takes to complete a given task, so a higher score is considered worse.  On the other hand, for the other tests, higher scores are indicative of stronger cognitive ability.  

We include four baseline covariates in this study: age, education, sex, and race. Age was kept on a yearly scale, and education was dichotomized based on whether an individual had obtained a college degree. Of particular interest was the association between education and cognitive decline. This analysis was motivated by the cognitive reserve hypothesis, which suggests that some mechanisms can provide individuals more resilience to cognitive decline.  We would like to explore this hypothesis to see if individuals with higher levels of education are more resistant to cognitive impairment \citep{Meng2012EducationAD, ThowMeganE2018Feic}.

For each individual, we have a CDR (Clinical Dementia Rating) score, which is assigned by a medical professional.  The CDR score is based on a clinician's judgment on an individual's cognitive ability
and is the standard procedure to determine whether someone has dementia or not.
A CDR score 0 refers to cognitively normal, a score 0.5 refers to a mild cognitive impairment (MCI) state,
and a score of at least 1 refers to dementia. 
Note that the CDR score is all based on a clinician's judgment, which is very different from
the neuropsychological tests (exam-based).  This score takes values in the set $\{0,0.5,1,2,3\}$.  Here, 0 indicates normal cognitive ability, 0.5 is mild cognitive impairment (MCI), 1 is mild dementia, 2 is moderate dementia, and 3 is severe dementia.  Thus, we expect that a healthy group would contain individuals with lower CDR scores than an unhealthier group.  We will use the CDR score to interpret some of our results, but we emphasize that we \textit{do not} use this variable in the model.

Since there are 8 outcome variables, there are up to $2^8 = 256$ missing data patterns.  In this data set, we observe 93 missing data patterns with 7 patterns comprising over 95\% of the data.  In particular, there are $n_{cc}=25041$ fully observed cases and $n=41181$ total observations, so that approximately only 60\% of the individuals have completely observed outcomes.

\begin{figure}
	\centering
\includegraphics[width=14cm]{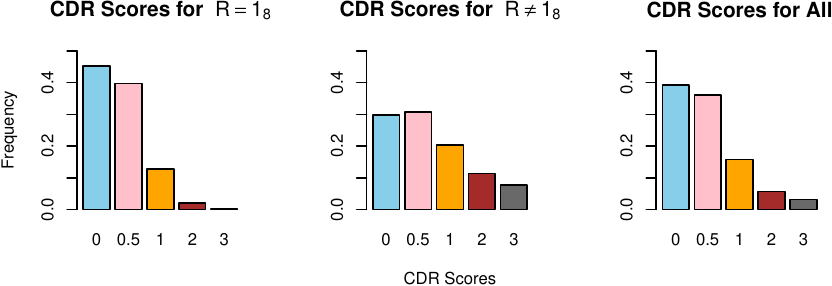}
\caption{The CDR score distributions for the complete cases, the individuals missing at least one outcome variable, and the entire data set are provided in the left, middle, and right panels, respectively.}
\label{fig:CDRdist}
\end{figure}

In Figure \ref{fig:CDRdist}, we plot the CDR distribution for the complete cases, the individuals missing at least one outcome variable, and the entire data set, respectively.  If the data was missing completely at random, we would expect the CDR score distribution to be mostly constant across all missing patterns of the data.  However, we observe that for the complete cases, the dementia group (CDR score of 1, 2, or 3) makes up less than 20\%, but the dementia group is close to 40\% of the data that has at least one missing outcome variable.  This suggests that the dementia group is severely underrepresented when we only perform analysis on the complete cases.  Thus, from the plots, we can visually postulate the MCAR assumption to be unreasonable.

We select the model using prior knowledge.  Since the scientific community previously determined that five levels for the CDR score was appropriate, we also choose the same number.  A similar number of groups will help with interpretability, and five is small enough that the number of parameters is manageable.

\subsection{Point estimates and confidence intervals}

The data analysis was conducted on a 2022 Macbook Air with 8 cores, and it took about a day to fit the model and complete the bootstrap with 1000 bootstrap samples.  In Table \ref{table:testscoremeans}, we report the point estimates of the test score means and the corresponding standard errors in parentheses.  There are a few takeaways.  As reported, Classes 1 and 5 can be interpreted as the most and least healthy individuals, respectively.  Note that in 6 out of 8 tests (omitting TRAIL A and TRAIL B), a higher score corresponds to better performance.  On the other hand, TRAIL A and TRAIL B are scored using time-to-completion in seconds, so a higher score corresponds to a worse performance.  For every test aside from TRAIL B, we observe a clear monotonic behavior in the test score means from Class 1 to Class 5, which suggests that the model is reasonable and fairly interpretable.  For TRAIL B, Classes 4 and 5 have very similar mean test scores that are close to the maximum possible test score of 300.  Since a test score of 300 indicates that the individual timed out and did not complete the test, this suggests that individuals from Classes 4 and 5 have comparable performance on the TRAIL B test.

Additionally, the results also suggests the different tests may have value at distinguishing individuals from different latent cognitive classes.  For example, TRAIL A and TRAIL B are routinely used to measure executive function.  Classes 4 and 5 represent the most unhealthy individuals in the population, but individuals in both perform similarly on TRAIL B.  On the other hand, the mean test scores in Classes 4 and 5 for TRAIL A are very different, which suggests that TRAIL A can be a good discriminator between very unhealthy and moderately unhealthy individuals.  As TRAIL B is a more complex task than TRAIL A, this seems to agree with our intuition.

\begin{table}
\caption{These tables contain point estimates of the test score means for each latent class.  In the title of each column, we report the maximum possible score for a given test in parentheses. In each cell, the standard errors are reported in parentheses.}
\label{table:testscoremeans}
\centering
\scalebox{0.85}{
	\begin{tabular}{cccccc}
		&         & \multicolumn{4}{c}{Outcome Variables}                                          \\ \\
		&         & \textbf{MMSE (30)} & \textbf{LOGIMEM (25)} & \textbf{MEMUNITS (25)} & \textbf{DIGIF (12)}\\
		\multirow{5}{*}{\rotatebox[origin=c]{90}{Class}} & 1 & 28.9 (0.01)  & 13.4 (0.04) & 12.0 (0.05)  & 8.8  (0.02) \\
		& 2 & 26.8 (0.03) & 9.1 (0.05)  & 7.1 (0.06)  & 7.7  (0.02)  \\
		& 3 & 25.3  (0.06) & 7.6  (0.08)  & 5.5  (0.08)  & 7.1 (0.03) \\
		& 4 & 22.5 (0.06) & 5.3  (0.05)  & 3.5  (0.05)  & 6.6 (0.03)  \\
		& 5 & 14.4 (0.12)  & 3.1 (0.05)   & 1.8  (0.04)  & 5.3  (0.04) 
	\end{tabular}
}
\vskip 10pt
\scalebox{0.85}{
	\begin{tabular}{cccccccccc}
		&         & \multicolumn{4}{c}{}                                          \\
		&         & \textbf{DIGIB (12)} & \textbf{ANIMALS (77)} & \textbf{TRAILA (150)} & \textbf{TRAILB (300)} \\
		\multirow{5}{*}{\rotatebox[origin=c]{90}{Class}} & 1 & 7.2  (0.05)  & 20.9 (0.05)  & 28.4 (0.09)  & 65.9 (0.2)  \\
		& 2 & 5.7  (0.02)  & 15.9  (0.05) & 44.2 (0.17)  & 117.9 (0.43) \\
		& 3 & 4.9  (0.03)  & 13.5  (0.07)  & 57.2 (0.39)  & 209.7  (0.84)\\
		& 4 & 4.2  (0.02)  & 11.0 (0.06)  & 66.7 (0.36)  & 298.5 (0.08) \\
		& 5 & 2.7  (0.03)  & 6.9 (0.07)   & 145.3 (0.20) & 295.2 (0.09)
	\end{tabular}
}
\end{table}

In Table \ref{table:covariates}, we report the point estimates of the coefficients in the logistic regressions.  One initial observation is that the coefficients of Age and Education are all positive and negative at the $\alpha=0.05$ significance level, respectively, across all classes.  This implies that increasing age is associated with lower cognitive ability and higher education may provide some protection against dementia.  The latter is known as the cognitive reserve hypothesis in the context of education \citep{Meng2012EducationAD, ThowMeganE2018Feic}.  We can compare the magnitudes of the coefficient of Education to the coefficient of Age.  For example, in Class 2, we observe $0.62/0.06 \approx 10$, and in Class 5, we observe $1.14/0.05\approx 23$.  This suggests that having a college degree or higher education may be equivalent to being 10-20 years younger in terms of cognitive ability.

\begin{table}
\caption{This table contains point estimates for the coefficients for each of the covariates and each latent class.  In each cell, the standard errors are reported in parentheses.}
\label{table:covariates}
\centering
\scalebox{0.85}{
	\begin{tabular}{cccccccccc}
		&         & \multicolumn{4}{c}{Covariates}                                          \\
		&         & \textbf{Age} & \textbf{Education} & \textbf{Race} & \textbf{Sex} \\
		\multirow{4}{*}{\rotatebox[origin=c]{90}{Class}} & 2 & 0.06 (0.001)  & -0.62 (0.028) & -0.37 (0.026)  & 0.72  (0.037)  \\
		& 3 & 0.07 (0.002) & -0.9 (0.039)  & 0.45 (0.034)  & 0.93  (0.043) \\
		& 4 & 0.06 (0.002) & -1.06  (0.035)  & 0.36  (0.032)  & 0.79 (0.04) \\
		& 5 & 0.05 (0.002) & -1.14  (0.0041)  & 0.41 (0.0035)  & 0.93 (0.0041)
	\end{tabular}
}
\end{table}

\subsection{Latent classes and clustering}

We report two clustering results using models fit on the complete cases (Figure \ref{fig:pie_CC}) and the entire data (Figure \ref{fig:pie_MAR}).  For each set of clustering results, we visualize the clusters by plotting the CDR score composition of each of the five clusters via barplots.  Furthermore, we report the mean of the CDR scores for each group, which gives a rough estimate on the group-specific clinical cognitive ability.  From the complete case results (CC) in Figure \ref{fig:pie_CC}, we generally see that the first clusters are primarily composed of healthy individuals as individuals with CDR scores of 0 take up larger proportions.  As we progress to the middle latent classes, healthier individuals take up fewer proportions and individuals with mild cognitive impairment (CDR score of 0.5) take up larger proportions.  Then, in Group 5, the majority of the individuals have some form of dementia.

For the clustering that used the entire data with the missing at random assumption (Figure \ref{fig:pie_MAR}), we observe a similar trend with the healthier individuals, the MCI individuals, and the dementia individuals being the dominant proportion in the earlier, middle, and later latent classes.  However, the results from clustering the entire data set are more stark.  In contrast with the complete clustering, Groups 4 and 5 contain a significantly higher number of individuals.  Moreover, our model is able to detect a group of individuals with high dementia, as evidenced by Group 5's mean CDR score of 1.55.  Unlike the complete case clustering, Group 5 has almost 75\% of individuals with some form of dementia.  Many of these individuals would be omitted from the analysis if the missing data was not accounted for.
Thus, accounting for the missingness leads to a clustering result with a stronger correlation with clinical assessments (CDR score). 

\begin{figure}[!h]
	\centering
\scalebox{0.88}{\includegraphics[height=3.5cm]{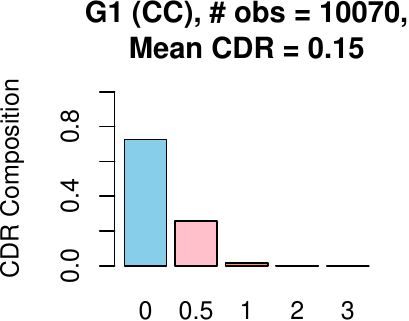}}
\hspace{0.1in}
\scalebox{0.88}{\includegraphics[height=3.5cm]{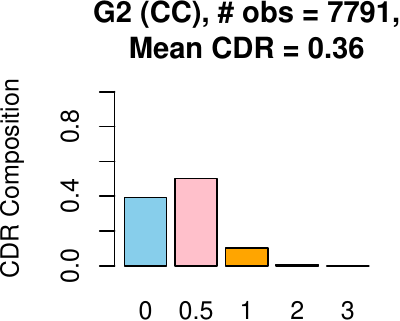}}
\hspace{0.1in}
\scalebox{0.88}{\includegraphics[height=3.5cm]{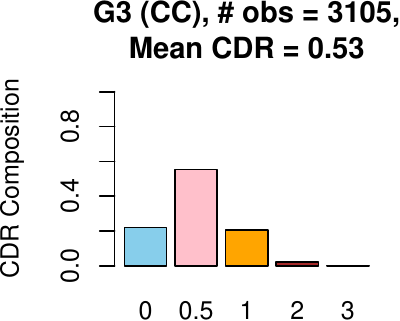}}
\vskip 20pt
\scalebox{0.88}{\includegraphics[height=3.5cm]{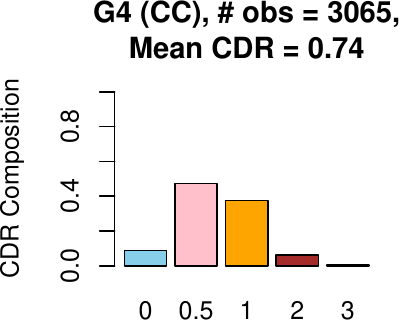}}
\hspace{0.1in}
\scalebox{.9}{\includegraphics[height=3.5cm]{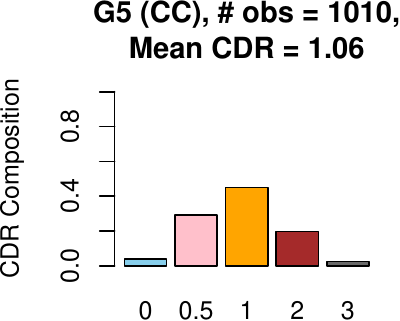}}
%\hspace{0.2in}
%\scalebox{0.88}{\includegraphics[width=2.15cm]{legend_crop.pdf}}
\centering
\caption{Clustering on complete data only. These barplots summarize the composition of each of the five latent groups.  We order the groups from most healthy to least healthy.  This is reflected in the mean CDR score of each group.}
\label{fig:pie_CC}
\end{figure}

\begin{figure}[!h]
	\centering
\scalebox{0.88}{\includegraphics[height=3.5cm]{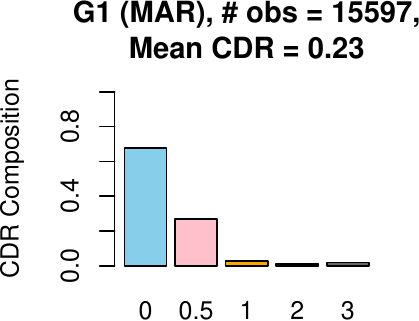}}
\hspace{0.1in} % 303
\scalebox{0.88}{\includegraphics[height=3.5cm]{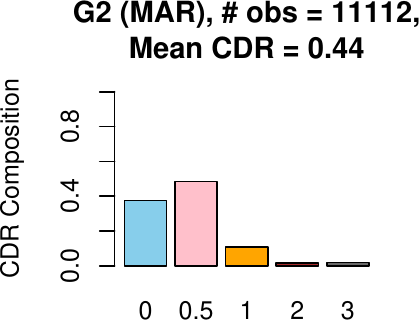}}
\hspace{0.1in}
\scalebox{0.88}{\includegraphics[height=3.5cm]{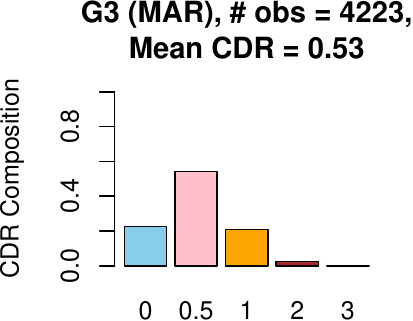}}
\vskip 20pt
\scalebox{0.88}{\includegraphics[height=3.5cm]{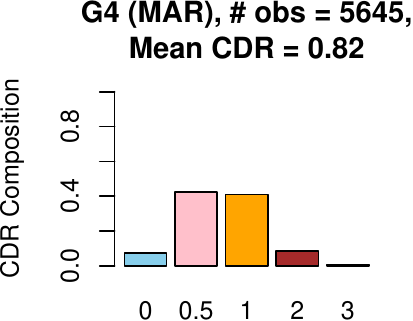}}
\hspace{0.1in}
\scalebox{0.88}{\includegraphics[height=3.5cm]{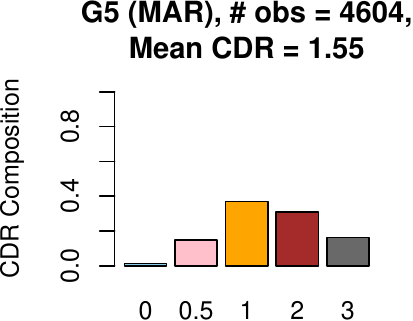}}
%\hspace{0.2in}
%\scalebox{0.88}{\includegraphics[width=2.15cm]{legend_crop.pdf}}
\centering
\caption{Clustering on the entire data with MAR assumption. These barplots summarize the composition of each of the five latent groups.  We order the groups from most healthy to least healthy.  This is reflected in the mean CDR score of each group.}
\label{fig:pie_MAR}
\end{figure}

\section{Discussion}

\label{sect:discussion}

In this paper, we have proposed a mixture of binomial experts model for modeling neuropsychological test score data.  This model builds on classical ideas from the latent variable and mixture of experts literature.  Through this model, we are able to construct a latent representation of an individual's cognitive ability using their test scores and relate it to baseline covariates.  Because the Alzheimer's disease data set is both enriched with multivariate information and plagued with missing data, we address both of these issues.  We outline how to perform estimation and inference under the MAR assumption. %Our model is flexible: for each individual, we are able to perform clustering and generate a latent trajectory (in Appendix \ref{appendix:more-realdata}), a path of their cognitive ability in the latent space over time.  We also provide analysis on the data on an aggregated level using transition matrices in Appendix \ref{appendix:more-realdata}.
Further comments on simulations are included in Appendix \ref{sect:simulations}.

There are several avenues for extending this paper.  Our presented model has two key components: the weights $w_k(x;\beta_k)$ and the component probabilities $p_k(y;\theta_k)$.  Due to the local independence assumption, the component probabilities $p_k(y;\theta_k)$ reduce to a product of marginal distributions, which is a binomial product in this paper.  One straightforward way to generalize this work is modifying these two working models to other parametric families.  In our setting, the outcome data is discrete and bounded because we have neuropsychological test scores.  However, often, outcome data can come in a variety of different forms such continuous, discrete, and mixed, as well as bounded and unbounded.  For instance, in continuous and count data, respectively, the Gaussian and Poisson distributions may be applicable.  Access to individual question level data may allow us to leverage ideas from item response theory and the Rasch model \citep{rasch1960probabilistic}.  Additionally, since there is longitudinal information in the data, there may be a way to incorporate time in the model, so that model-based clustering can be performed on entire trajectories rather than individuals at given a time point.  There has been previous work on mixture of experts models applied to time series and longitudinal data \citep{waterhouse1995, huerta2003, tang2016}, and it would be interesting to extend these frameworks to a model for multivariate longitudinal discrete data.  Model selection is also another avenue for future exploration as we selected our model using prior knowledge.  We included comments on AIC and BIC in Appendix \ref{appendix:modelselect}. Also, since we have 33 ADRC centers collecting data, we may introduce a random effect model in the weight to handle the effect from different centers.

The MAR assumption is fundamental to our analysis of the NACC data as well as the inference procedure (Algorithm \ref{alg:marEM}), but it is not the only option.  Missing not at random (MNAR) assumptions encompass a rich class of potentially plausible assumptions.  It would be interesting to explore MNAR assumptions in the context of latent variable modeling using methods such as pattern graphs \citep{chen2022, cheng2022handling, suen2021multistage}.  Additionally, perturbing the MAR as a form a sensitivity analysis remains largely an open question.  We note that departures from missing at random would likely result in a significant modification of the elegant estimation procedure outlined in the paper because the ignorability condition would no longer hold.  Missing not random assumptions often require stronger modeling assumptions because the relationship between $\bY$ and $\bR$ needs to be modeled directly.  We leave this for future work.

%From a computational perspective, there is also an open question on how to speed up the algorithm.  In practice, we check the convergence of Algorithms \ref{alg:latentEM} and \ref{alg:marEM} using a prespecified tolerance $\epsilon$.  If the $L_2$ norm between the current estimate and the previous estimate of $(\bbeta, \btheta)$ is less than $\epsilon$, we assume that the algorithm has converged.  Choosing a small $\epsilon$ is ideal but not practical because it can take a long time for the algorithm to terminate.  An adaptive sequence of $\epsilon$ rather than a constant $\epsilon$ may be more ideal because we spend more time in the later iterations than the earlier ones.  The concept of adaptive tolerance levels have been explored in another setting in the approximate Bayesian computation literature \citep{delmoral2012adaptive, Simola2021}.

\clearpage

\newpage

%%%%%%%%%%%%%%%%%%%%%%%%%%%%%%%%%%%%%%%%%%%%%%
%% Example with single Appendix:            %%
%%%%%%%%%%%%%%%%%%%%%%%%%%%%%%%%%%%%%%%%%%%%%%
\appendix
\section{Detailed Derivation of the EM Algorithm Update Equations}

\label{appendix:EMderivation}

There are two EM algorithms described in this paper.  The inner EM algorithm (Algorithm \ref{alg:latentEM}) is used to estimate the mixture of binomial product experts model while the outer EM algorithm (Algorithm \ref{alg:marEM}) is used to fit the aforementioned model under a missing at random assumption.  When there is no missingness, Algorithm \ref{alg:marEM} reduces to Algorithm \ref{alg:latentEM}.   We discuss the details of both algorithms and the multiple imputation algorithm in the following three subsections.

\subsection{Latent Variable EM}

The E-step is easily obtained by calculating the $Q_{\text{LC}, n}$ function.  Since the only variable that is not conditioned on is $Z$, this is simply equivalent to changing all of the indicator variables in $\ell_{\text{LC},n}$ to conditional probabilities on $\bX$ and $\bY$.

Thus, the $Q_{\text{LC}, n}$ function writes as 
\begin{align}
	&Q_{\text{LC}, n}(\bbeta, \btheta | \bX_{1:n}, \bY_{1:n}; \hat{\bbeta}^{(t)}, \hat{\btheta}^{(t)}) \nonumber \\
	&:= \frac{1}{n}\sum_{i=1}^n\sum_{k=1}^K \hat{\pi}_k^{(t)}(\bX_i, \bY_i)
	\log w_k(\bX_i;\beta_k) + \frac{1}{n}\sum_{i=1}^n\sum_{k=1}^K \hat{\pi}_k^{(t)}(\bX_i, \bY_i) \log p_k(\bY_i;\theta_k), \label{eq:Q-LCn}
\end{align}
where
\begin{equation*}
	\hat{\pi}_k^{(t)}(\bX_i, \bY_i) := P(Z=k|\bX_i,\bY_i;\hat{\bbeta}^{(t)}, \hat{\btheta}^{(t)}) = \frac{w_k(\bX_i;\hat{\beta}_k^{(t)}) p_k(\bY_i; \hat{\theta}_k^{(t)})}{\sum_{k'=1}^K w_{k'}(\bX_i;\hat{\beta}_{k'}^{(t)}) p_{k'}(\bY_i; \hat{\theta}_{k'}^{(t)})}.
\end{equation*}

We now derive the update equations in the M-step.  Note that we update $(\bbeta, \btheta)$ via
$$(\hat{\bbeta}^{(t+1)}, \hat{\btheta}^{(t+1)}) = \argmax_{\bbeta, \btheta} Q_{\text{LC}, n}(\bbeta, \btheta | \bX_{1:n}, \bY_{1:n}; \hat{\bbeta}^{(t)}, \hat{\btheta}^{(t)}).$$

Taking a partial derivative of \eqref{eq:Q-LCn} with respect to $\theta_{k,j}$, we obtain
\begin{align*}
	&\frac{\partial}{\partial \theta_{k,j} }\frac{1}{n}\sum_{i=1}^n \hat{\pi}_k^{(t)}(\bX_i, \bY_i) \log \left( \binom{N_j}{Y_{i,j}} \theta_{k,j}^{Y_{i,j}} (1-\theta_{k,j})^{N_j-Y_{i,j}} \right) \\
	&= \frac{\partial}{\partial \theta_{k,j} } \frac{1}{n}\sum_{i=1}^n \hat{\pi}_k^{(t)}(\bX_i, \bY_i) \cdot (Y_{i,j} \log \theta_{k,j} + (N_j-Y_{i,j}) \log(1-\theta_{k,j})) \\
	&= \frac{1}{n}\sum_{i=1}^n \hat{\pi}_k^{(t)}(\bX_i, \bY_i) \cdot \left(\frac{Y_{i,j}}{\theta_{k,j}} - \frac{N_j-Y_{i,j}}{1-\theta_{k,j}} \right).
\end{align*}
Setting this derivative equal to $0$ yields the desired update equation for $\btheta$
\begin{equation*}
	\hat{\theta}_{k,j}^{(t+1)} = \frac{\sum_{i=1}^n \hat{\pi}_k^{(t)}(\bX_i, \bY_i) \cdot \frac{Y_{i,j}}{N_j}}{\sum_{i=1}^n \hat{\pi}_k^{(t)}(\bX_i, \bY_i)}.
\end{equation*}

We now discuss the update procedure for $\bbeta$.  We see that
$$\frac{1}{n}\sum_{i=1}^n\sum_{k=1}^K \hat{\pi}_k^{(t)}(\bX_i, \bY_i) \log p_k(\bY_i;\theta_k)$$
is the log-likelihood function with weights $\hat{\pi}_k^{(t)}(\bX_i, \bY_i)$ rather than $1(Z_i=k)$, so we can apply a standard package that fits a logistic regression model using gradient ascent.  However, instead of treating the dependent variable as $(1(Z_i=1), 1(Z_i=2), \ldots, 1(Z_i=K))$, we use the dependent variable $\mathbf{W}_i(\bX_i, \bY_i) \equiv (\hat{\pi}_1^{(t)}(\bX_i, \bY_i), \hat{\pi}_2^{(t)}(\bX_i, \bY_i), \ldots, \hat{\pi}_K^{(t)}(\bX_i, \bY_i))$.

This completes the derivation of Algorithm \ref{alg:latentEM}.

\subsection{Multiple Imputation under Missing at Random}

\label{appendix:multiple-imp}

The missing at random property equivalently can be written as
\begin{align*}
	P(\bR = r | x,y) = P(\bR = r | x, y_r) &\iff p(x,y,r) = p(x,y_r,r) \cdot p(y_{\bar{r}} | y_r, x) \\
	& \iff p(y_{\bar{r}} | y_r, x, r) = p(y_{\bar{r}} | y_r, x)
\end{align*}
for all $r\in\mathcal{R}$. 

Therefore, the imputation distribution $p(y_{\bar{r}} | y_r, x, r) \equiv p(y_{\bar{r}} | y_r, x)$ is constructed using the global model $p(y | x)$ that is fit on all of the data regardless of the missing data pattern.

Algorithm \ref{alg:MI} requires the estimation of the conditional distribution $p(y_{\bar{r}} | y_r, x; \bbeta, \btheta)$ for every $r\in\mathcal{R}$.  In the following derivation of the imputation distribution, we drop the parameters and reference to the iteration $t$ for readability.  For any $r\in\mathcal{R}$, we have
\begin{align*}
	p(y_{\bar{r}} | y_r, x) &= \frac{p(y | x)}{p(y_r | x)} \\
	&= \frac{p(y | x)}{\sum_{y_{\bar{r}}} p(y | x)} \\
	&= \frac{\sum_{k=1}^K w_k(x) \prod_{j=1}^d p_{k,j}(y_j)}{\sum_{y_{\bar{r}}} \sum_{k=1}^K w_k(x) \prod_{j=1}^d p_{k,j}(y_j)} \\
	&= \frac{\sum_{k=1}^K w_k(x) \prod_{j=1}^d p_{k,j}(y_j)}{\sum_{k=1}^K w_k(x) \sum_{y_{\bar{r}}} \prod_{j=1}^d p_{k,j}(y_j)} \\
	&= \frac{\sum_{k=1}^K w_k(x) \prod_{j=1}^d p_{k,j}(y_j)}{\sum_{k=1}^K w_k(x) \prod_{r} p_{k,j}(y_j)} \\
	&= \sum_{k=1}^K \underbrace{\frac{w_k(x) \prod_{j\in r} p_{k,j}(y_j)}{\sum_{k=1}^K w_k(x) \prod_{j\in r} p_{k,j}(y_j)}}_{W_{k,r}(x,y_r)} \prod_{j\in\bar{r}} p_{k,j}(y_j) \\
	&= \sum_{k=1}^K W_{k,r}(x,y_r) \prod_{j\in\bar{r}} p_{k,j}(y_j).
\end{align*}
Therefore, with a little algebra and the local independence assumption, we can show that the conditional distribution (and imputation distribution) remains a (reweighted) mixture of binomial products distribution, whose computation is fairly straightforward.

\subsection{Missing at Random EM}

Standard maximum likelihood inference can be viewed as constructing an $M$-estimator that estimates the population MLE
$$(\bbeta_0, \btheta_0) = \argmax_{\bbeta, \btheta} \E[\log p(\bY|\bX; \bbeta, \btheta)].$$

By the law of iterated expectation, observe that
\begin{align}
	\E[\log p(\bY|\bX; \bbeta, \btheta)] &= \sum_r \E[\E \log p(\bY|\bX; \bbeta, \btheta) | \bX, \bY_r, \bR=r ] \cdot 1(\bR=r)] \nonumber \\
	&\stackrel{\text{MAR}}{=} \sum_r \E[\E \log p(\bY|\bX; \bbeta, \btheta) | \bX, \bY_r ] \cdot 1(\bR=r)]. \label{eq:pop-MAR-MLE} \\
	&= \E\left[\sum_{r\in\mathcal{R}}\E [\log p(\bY|\bX; \bbeta, \btheta) | \bX, \bY_r ] \cdot 1(\bR=r) \right].
\end{align}
By converting equation \eqref{eq:pop-MAR-MLE} to the sample form, we obtain
\begin{align}
	Q_{\text{FC},n}(\bbeta, \btheta| \bX_{1:n}, \bY_{1:n} ; \hat{\bbeta}^{(t)}, \hat{\btheta}^{(t)}) &= \frac{1}{n} \sum_{i=1}^n \sum_{r\in\mathcal{R}} Q_{\text{FC}, r}(\bbeta, \btheta | \bX_i, \bY_{i,r}; \hat{\bbeta}^{(t)}, \hat{\btheta}^{(t)}) \cdot 1(\bR_i = r), \nonumber \\
	&= \frac{1}{n} \sum_{i=1}^n Q_{\text{FC}, \bR_i}(\bbeta, \btheta | \bX_i, \bY_{i,\bR_i} ; \hat{\bbeta}^{(t)}, \hat{\btheta}^{(t)}), \label{eq:Q_FCn}
\end{align}
where $Q_{\text{FC}, r}(\bbeta, \btheta | \bX, \bY_r ; \hat{\bbeta}^{(t)}, \hat{\btheta}^{(t)}) = \E[\log p(\bY | \bX) | \bX, \bY_r ; \hat{\bbeta}^{(t)}, \hat{\btheta}^{(t)}]$.  By construction, we expect that $Q_{\text{FC},n}(\bbeta, \btheta | \bX_{1:n}, \bY_{1:n} ; \hat{\bbeta}^{(t)}, \hat{\btheta}^{(t)})$ has expectation equal to $\E[\log p(\bY | \bX; \bbeta, \btheta)]$.  Unfortunately, we are not able to compute \eqref{eq:Q_FCn} because we do not have an explicit closed form expression for $Q_{\text{FC}, r}(\bbeta, \btheta | \bX, \bY_r ; \hat{\bbeta}^{(t)}, \hat{\btheta}^{(t)})$.  Thankfully, we can approximate it stochastically.  For large enough $M$, we expect
\begin{align*}
	\frac{1}{M}\sum_{m=1}^M \log p(\tilde{\bY}^{(m;t)} | \bX; \hat{\bbeta}^{(t)}, \hat{\btheta}^{(t)}) \approx Q_{\text{FC}, r}(\bbeta, \btheta | \bX, \bY_r ; \hat{\bbeta}^{(t)}, \hat{\btheta}^{(t)}),
\end{align*}
where $\tilde{\bY}^{(m;t)} = (\bY_r, \bY_{\bar{r}}^{(m;t)})$, and for each $m$, $\bY_{\bar{r}}^{(m;t)} \stackrel{\text{iid}}{\sim} p(y_{\bar{r}} | \bY_r, \bX; \hat{\bbeta}^{(t)}, \hat{\btheta}^{(t)})$.  Thus, to complete the E-step, we replace the sample version above with a stochastic approximation
\begin{align}
	Q_{\text{FC},n}^{(M)}(\bbeta, \btheta| \bX_{1:n}, \bY_{1:n} ; \hat{\bbeta}^{(t)}, \hat{\btheta}^{(t)}) &= \frac{1}{M} \sum_{m=1}^M \frac{1}{n} \sum_{i=1}^n \log p(\tilde{\bY}_i^{(m;t)} | \bX_i; \hat{\bbeta}^{(t)}, \hat{\btheta}^{(t)}) \nonumber,
\end{align}
where $\tilde{\bY}_i^{(m;t)} = (\bY_{i,\bR_i}, \tilde{\bY}_{i,\bar{\bR}_i}^{(m;t)})$.  To actually perform this stochastic approximation in practice, we simply multiply impute $M$ times and stack the imputed data sets together.

The M-step is equivalent to maximizing the latent incomplete likelihood on the stacked imputed data set $\tilde{D}_t = \{(\bX_i, \tilde{\bY}_i^{(m;t)})\}_{i=1,\ldots, n; \ m=1,\ldots,M}$.  Namely, we have

\begin{align*}
	(\hat{\bbeta}^{(t+1)}, \hat{\btheta}^{(t+1)}) = \argmax_{\bbeta, \btheta} \frac{1}{n} \sum_{(\bX_i, \tilde{\bY}_i^{(m;t)}) \in \tilde{D}_t} \ell_{\text{LI}}(\bbeta, \btheta ; \bX_i, \tilde{\bY}_i^{(m;t)}).
\end{align*}

We showed in Section 2 that this can be maximized via an EM algorithm (Algorithm \ref{alg:latentEM}).  Thus, the M-step of the outer EM algorithm is performed by Algorithm \ref{alg:latentEM}.  This completes the derivation of Algorithm \ref{alg:marEM}.
	
\subsection{Asymptotic Theory}

\begin{theorem}[Asymptotic distribution of the MLE]
	\label{thm:asymptotic}
Suppose that the following conditions hold:
\begin{enumerate}
	\item The parameter space is a compact set, and the true parameters lie within the interior.  That is, for some $\epsilon>0$ and $M>0$, $\theta_{0,k,j}\in[\epsilon, 1-\epsilon]$ and $\beta_{0,k,\ell}\in[-M,M]$ for all $k$, $j$, and $\ell$.  Here $\theta_{0,k,j}$ and $\beta_{0,k,j}$ denote the population-level MLE parameters.
	\item The parameters of the mixture of binomial product experts model are generically identified.
	\item The log-likelihood $(\bbeta,\btheta) \mapsto \log p(y|x;\bbeta,\btheta)$ is three times differentiable with respect to $(\bbeta, \btheta)$ for every $(x,y)$.
	\item The matrix $\E\left[\dfrac{\partial^2 \log p(Y|X;\bbeta,\btheta)}{\partial(\bbeta,\btheta)^2}\right]\biggr|_{\bbeta=\bbeta_0, \btheta=\btheta_0}$ exists and is nonsingular.
	\item The covariates $X$ are bounded.
	\item The missingness of the outcome variables satisfies a missing at random property.
\end{enumerate}
Then, there exists a unique population-level MLE $(\bbeta_0,\btheta_0)$ almost surely, and the maximum likelihood estimators $(\hat{\bbeta}, \hat{\btheta})$ obtained from Algorithm \ref{alg:marEM} satisfy
$$\sqrt{n}((\hat{\bbeta}, \hat{\btheta}) - (\bbeta_0,\btheta_0)) \stackrel{d}{\to} N(0, \Sigma)$$
for some positive definite $\Sigma$.
\end{theorem}

This theorem establishes mild sufficient conditions under which our MLE is asymptotically normal at a $\sqrt{n}$-rate.

Since the log-likelihood function is continuous over a compact set, the first assumption ensures that there exists a maximizer by the Extreme Value Theorem.  The second assumption ensures that under generic identification, the maximizer is unique almost surely.  The third and fourth assumptions ensure that we can apply a second order Taylor expansion (with a remainder term expressed in terms of the third order derivatives of the log-likelihood function) and obtain an asymptotically linear form.  The fifth assumption, combined with the fact that the outcome variables are test scores (bounded random variables), implies that the moments of the log-likelihood function exist and that higher order derivatives of the score function can be dominated over sufficiently small neighborhoods of $(\bbeta_0, \btheta_0)$.  This bounded assumption on the covariates is reasonable because the covariates are demographic variables.  Finally, the last assumption is on the missingness, and it allows the aforementioned MLE procedure to be valid.

Thus, the assumptions of Theorem 5.41 from \cite{van1998asymptotic} are satisfied, and we can apply that result here. It follows that Theorem \ref{thm:asymptotic} (asymptotic normality of the MLE) holds.

\section{Identification Theory}

\label{appendix:identification}

%We make some brief comments on the identifiability of our model.  The standard definition of identifiability implies that the mapping from the parameter space to the space of all probability distributions is one-to-one.  However, this may be too strong for many practical purposes.  For example, even the common problem of label swapping violates this identifiability definition, so researchers often consider the idea of identifiability up to permutation of the parameters.  We now consider the notion of \textit{generic identifiability}, which relaxes the definition of identifiability even further.  Generic identifiability means identifiability holds almost everywhere in the parameter space \citep{Allman2009}.  More formally, this means that the mapping from the parameter space to the space of probability distributions may fail to be one-to-one only on a set of Lebesgue measure zero.  In a practical sense, this means that any such model fit on a given data set is unlikely to be unidentifiable, and we consider this notion to be sufficient for our applied data setting.

We provide sufficient conditions for generic identifiability in our mixture of binomial product experts model and restate the proposition here.

\setcounter{proposition}{0}
\begin{proposition}[Sufficient conditions for generic identifiability]
	%\ref{prop:identif}
	Suppose the following conditions hold.
	\begin{enumerate}[label=A\arabic*]
		\item Each mixture is distinct such that $\theta_k \neq \theta_{k'}$ when $k \neq k'$. %\label{assump:distinctmix}
		\item The number of outcome variables $d$ and the number of mixtures $K$ satisfies the bound $d \geq 2\lceil \log_{1+\min_j N_j} K \rceil + 1$. %\label{assump:inequality}
		\item The design matrix is full-rank and $n> p$. %\label{assump:identied-logistic}
	\end{enumerate}
	Then, the mixture of binomial product experts is generically identifiable up to permutation of the parameters.
\end{proposition}

%We note that the sufficient conditions outlined in Proposition \ref{prop:identif} are fairly mild.  When there are at least $d=3$ outcome variables and the outcome variables belong to a moderate range, then $K$ can be fairly large.

Note that the bound in Assumption \ref{assump:inequality} is likely to be able to be relaxed even further because it was built off of a general latent class model, where each mixture is assumed to be a product of multinomials rather than binomials.  A mixture of binomial products is a submodel within the mixture of multinomial products model.  A similar bound is discussed in \cite{Allman2009} when all of the outcome variables have the same dimension.

The sufficient conditions for the identifiability of mixtures of binomials in the one dimensional case has been examined in \cite{Teicher1961}.  Assumption \ref{assump:identified-logistic} ensures that the design matrix is full column rank, which implies that the logistic regression parameters are identified since the the logit function is monotone.  More recently, \cite{ouyang2022identifiability} have established sufficient and necessary conditions for the identifiability of latent class models with covariates.

\section{Comments on Code Implementation}

\label{appendix:code-implement}

\subsection{Numerical Stability}

We use the logsumexp trick, which is commonly used for numerical stability \citep{statrethinkingbook}.  We are often interested in calculating probabilities of the form
$$(p_1,\ldots,p_K),$$
where $\sum_{k=1}^K p_k = 1$.  In our work, probabilities are often formed by the product of multiple terms in the range $[0,1]$.  Naive implementations can result in underflow, so we work with log-probabilities.  Additionally, consider the scenario where
$$p_k = \frac{\exp(t_k)}{\sum_{k'=1}^K \exp(t_{k'})}.$$
Then, we have
$$p_k = \exp\left(t_k - \left[c +\log \sum_{k'=1}^K \exp(t_{k'} - c) \right]\right)$$
for any constant $c$.  The form inside the brackets is known as the logsumexp trick.  If we pick $c = \max_{k'} t_{k'}$, then we can avoid most numerical underflows since each $t_{k'}$ is shifted.

\subsection{Random Initializations}

We randomly initialize $\bbeta$ and $\btheta$ using draws from uniform distributions.  The intercept of each $\beta_k$ is drawn from $U[-1,1]$ while each slope term is drawn from $U[-0.5,0.5]$.  We draw each $\theta_{k,j}$ from $U[0.1,0.9]$, so that it is initialized away from the boundary.  All draws are performed independently.

\section{Remarks on Alternative EM Algorithms}

\label{appendix:alternativeEM}

\begin{remark}
	
	\label{subsect:alternativeEM}
	
	For completeness, we provide some remarks on an alternative EM algorithm for fitting the model.  Our proposed method comprises an EM algorithm nested within an outer EM algorithm.  Alternatively, one can impute the missing outcome and latent class variables simultaneously using the distributions $P(Z=k|x,y_r; \hat{\bbeta}^{(t)}, \hat{\btheta}^{(t)})$ and $p(y_{\bar{r}}| Z=k, y_r, x; \hat{\bbeta}^{(t)}, \hat{\btheta}^{(t)})$ and update $(\hat{\bbeta}^{(t+1)}, \hat{\btheta}^{(t+1)})$ with the explicit MLE estimates.  When there is no missingness, this procedure mimics the Stochastic EM procedure described by \cite{celeuxd85} and \cite{CeleuxSEM1987}.  A similar method has been explored in the missing at random setting under a Gaussian mixture model with no covariates by \cite{serafini2020handling}.  Alternating between these two steps until convergence provides a consistent estimate.  However, we do not recommend this method for our model in practice because time to convergence is often longer than Algorithm \ref{alg:marEM} for the same number of imputations $M$.  We anticipate this is due to the increased stochastic variation that is observed by imputing $Z$ on each iteration.
	
\end{remark}

\begin{remark}
	
	There has been previous work on model-based clustering with missing data.  \cite{serafini2020handling} discussed how to estimate Gaussian mixture models in the presence of missing at random data with the EM algorithm and Monte Carlo methods in the E-step.  We provide comments on how our approach compares to a similar approach in Remark \ref{subsect:alternativeEM}.  Our approach differs because we utilize two EM algorithms, one nested within the other.  Unlike some of the previous work in the model-based clustering with missing data literature, we take a more broad approach by incorporating covariates in our model as well as describing an inference procedure that quantifies uncertainty.  \cite{sportisse2023modelbased} described model-based clustering with missing not at random assumptions using a likelihood-based approach and an EM algorithm.  Although there has been rich literature on missing not at random assumptions, there are still open problems with the missing at random assumption (such as performing sensitivity analysis) and sometimes missing not at random assumptions are not easily interpretable.
\end{remark}

\section{Simulations}

\label{sect:simulations}

\subsection{Simulation Details}

\label{appendix:simdetails}

%The missingness mechanism is specified to be missing at random with a total of 5 distinct missing patterns.  We vary the selection model $P(\bR=r|\bX,\bY)$, using a sensitivity parameter $\eta\in[0,\infty)$.  First, we generate a prototype selection model via a logistic regression.  For example, for pattern $r$, we have
%$$\log \frac{P(\bR=r|\bX,\bY_r)}{P(\bR\neq r|\bX,\bY_r)} = \gamma_{r,0} + \gamma_{r,1}^\top \cdot (\bX,\bY_r),$$
%where $\gamma_{r,0} = \lambda_{r,0} - \eta$.  We use $\eta$ to control the amount of missingness.  As $\eta\to\infty$, we have no missingness because $P(\bR=r|\bX,\bY_r) \to 0$ for all $r\neq1_d$.

We describe the data generating process.  We have $p=2$ covariates and $d=4$ outcome variables.  The model is described as follows:
\begin{align*}
	p(x) = N\left( \begin{bmatrix} 2 \\ 3 \end{bmatrix}, \begin{bmatrix} 1 & 0.2 \\ 0.2 & 1 \end{bmatrix} \right), \quad p(y|x) = \sum_{k=1}^3 w_k(x;\beta_k) \prod_{j=1}^4 \binom{N_j}{y_j} \theta_{k,j}^{y_j} (1-\theta_{k,j})^{N_j-y_j},
\end{align*}
where $\beta_1 = (0,0,0)$, $\beta_2 = (-1.5, 0.3, 0.4)$, and $\beta_3 = (-2, 0.5, 0.25)$.  The parameters for the binomials are as follows
\begin{equation*}
	\btheta = \begin{bmatrix} \theta_1 \\ \theta_2 \\ \theta_3 \end{bmatrix} = \begin{bmatrix} 0.8 & 0.8 & 0.8 & 0.8 \\ 0.5 & 0.5 & 0.5 & 0.5 \\ 0.1 & 0.1 & 0.1 & 0.1 \end{bmatrix}.
\end{equation*}

Thus, for a moderately sized model, we already have 18 parameters.  To generate data for the missing data simulations, we use the same generating process as above, but we specify a selection model $P(R=r|x,y)$ to make the missing data.  For every missing data pattern $r\neq 1111$, we generate the probabilities $P(R=r|x,y) = P(R=r|x,y_r)$ under the following scheme to ensure MAR holds.  We have
\begin{align*}
	&\log\frac{P(R=0001|x,y)}{P(R\neq0001|x,y)} = (-2-\eta) - 0.25 x_1 + 0.3 x_2 + 0.15 y_4,\\
	&\log\frac{P(R=0110|x,y)}{P(R\neq0110|x,y)} = (-1-\eta) + 0.3 x_1 - 0.7 x_2 - 0.1 y_2 + 0.15 y_3,\\
	&\log\frac{P(R=1010|x,y)}{P(R\neq1010|x,y)} = (-2-\eta) + 0.7 x_1 - 0.4 x_2 + 0.24 y_1 - 0.15 y_3,\\
	&\log\frac{P(R=1110|x,y)}{P(R\neq1110|x,y)} = (-1-\eta) + 0.2 x_1 - 0.15 x_2 + 0.15 y_1 - 0.14 y_2 + 0.05 y_3, \\
	&P(R=1111) = 1 - \sum_{r\neq 1111} P(R=r|x,y).
\end{align*}

Here we treat $\eta \in [0,\infty)$ as a a parameter that controls the amount of missingness.  As $\eta\to\infty$, $P(R=r|x,y) \to 0$ for every $r\neq 1111$, thereby, leading towards $P(R=1111|x,y)\to 1$ and consequently, a data set with no missing outcome variables.

\subsection{Additional Simulations}

	\label{appendix:add-sim}
	
	We provide additional simulations to examine the effect of the number of imputations on the results of the Monte Carlo EM algorithm.  We repeat the experiment for $\eta=2,2.5$ with $M=10$ and $M=50$ imputations.  We do not repeat it for $\eta=\infty$ because the latter has no missing data and thus, requires no imputation.  In this particular simulation, we actually observe similar performance as we vary from $M=10,20,50$ imputations, suggesting that already at $M=10$ imputations, the Monte Carlo error is sufficiently small.  Additionally, since we are able to know the true generating mixture component for each data point, we compare the assigned clustering to the true generating mixture component.  We report the average adjusted Rand index for $M=10,20,50$ in Tables \ref{table:M10_sim}, \ref{table:M20_randindex}, and \ref{table:M50_sim}.  We observe fairly high and stable adjusted Rand indices across all sample sizes, amount of missingness $\eta$, and number of imputations $M$.  This agrees with our expectations of our estimation procedure since the mixtures are well-separated in the generating process.
	
	Furthermore, we run a set of simulations to determine the performance of a complete case analysis.  The results are summarized in Table \ref{table:CC_sim}.  We observe that the MSE does not decrease at a linear rate due to the bias associated with not accounting for the missing data and simply performing a complete case analysis.  Thus, the estimates are not consistent.  The adjusted Rand index is still mostly high with comparable performance to the estimation procedure that takes into account missingness.  This can likely be attributed to the fact that the simulation problem has well-separated clusters.  In our work, we are also interested in statistical inference and confidence intervals for estimating parameters.  As sample size increases, the Estimated Coverage for the parameters generally decreases, away from the desired 95\% coverage, implying that the procedure does not produce asymptotically valid confidence intervals.  Thus, this simulation highlights the necessity of taking into account the missingness when constructing an estimation procedure and how complete case analysis can yield statistically invalid results.

	\begin{table}
		
		\caption{These are the results for $M=10$ imputations.  The first row contains results for the estimation of $\btheta$.  The second row contains results for the estimation of $\bbeta$.  The third row contains the average adjusted Rand index for each combination of $\eta$ and sample size $n$.}
		\label{table:M10_sim}
		\centering
		\scalebox{1}{
			
			\begin{tabular}{cccc|}
				\multicolumn{4}{c}{$\textbf{MSE}_{\btheta} (\times 100)$}                                                                                                               \\
				\cline{2-4}
				\multicolumn{1}{c|}{}        & \multicolumn{3}{c|}{Sample Size}                                                                        \\
				\cline{1-4}
				\multicolumn{1}{|c|}{}        & \multicolumn{1}{c|}{$n=500$} & \multicolumn{1}{c|}{$n=1000$} & \multicolumn{1}{c|}{$n=2000$} \\
				\hline
				\multicolumn{1}{|c|}{$\eta=2.5$} & \multicolumn{1}{c|}{0.190}        & \multicolumn{1}{c|}{0.094}         & \multicolumn{1}{c|}{0.049}         \\
				\multicolumn{1}{|c|}{$\eta=2$} & \multicolumn{1}{c|}{0.202}        & \multicolumn{1}{c|}{0.101}         & \multicolumn{1}{c|}{0.054}              
			\end{tabular}
			
			\hspace{10pt}
			
			\begin{tabular}{cccc|}
				\multicolumn{4}{c}{\textbf{Estimated Coverage of} $\theta$s}                                                                                                               \\
				\cline{2-4}
				\multicolumn{1}{c|}{}        & \multicolumn{3}{c|}{Sample Size}                                                                        \\
				\cline{1-4}
				\multicolumn{1}{|c|}{}        & \multicolumn{1}{c|}{$n=500$} & \multicolumn{1}{c|}{$n=1000$} & \multicolumn{1}{c|}{$n=2000$} \\
				\hline
				\multicolumn{1}{|c|}{$\eta=2.5$} & \multicolumn{1}{c|}{0.940}        & \multicolumn{1}{c|}{0.940}         & \multicolumn{1}{c|}{0.937}         \\
				\multicolumn{1}{|c|}{$\eta=2$} & \multicolumn{1}{c|}{0.939}        & \multicolumn{1}{c|}{0.938}         & \multicolumn{1}{c|}{0.931}           \\
				%\hline
				%\multirow{2}{*}{Data Generating Process 2} & \multicolumn{1}{|c|}{$\eta=\infty$} & \multicolumn{1}{c|}{}        & \multicolumn{1}{c|}{}         & \multicolumn{1}{c|}{}                \\
				% & \multicolumn{1}{|c|}{$\eta=2$} & \multicolumn{1}{c|}{}        & \multicolumn{1}{c|}{}         & \multicolumn{1}{c|}{}         \\
				% & \multicolumn{1}{|c|}{$\eta=1$} & \multicolumn{1}{c|}{}        & \multicolumn{1}{c|}{}         & \multicolumn{1}{c|}{} 
			\end{tabular}

		}
		
		\vskip 10pt
		
		\scalebox{1}{
			\begin{tabular}{cccc|}
				\multicolumn{4}{c}{$\textbf{MSE}_{\bbeta} (\times 100)$}                                                                                                               \\
				\cline{2-4}
				\multicolumn{1}{c|}{}        & \multicolumn{3}{c|}{Sample Size}                                                                        \\
				\cline{1-4}
				\multicolumn{1}{|c|}{}        & \multicolumn{1}{c|}{$n=500$} & \multicolumn{1}{c|}{$n=1000$} & \multicolumn{1}{c|}{$n=2000$} \\
				\hline
				\multicolumn{1}{|c|}{$\eta=2.5$} & \multicolumn{1}{c|}{45.5}        & \multicolumn{1}{c|}{21.3}         & \multicolumn{1}{c|}{10.3}         \\
				\multicolumn{1}{|c|}{$\eta=2$} & \multicolumn{1}{c|}{43.8}        & \multicolumn{1}{c|}{21.4}         & \multicolumn{1}{c|}{10.5}              
			\end{tabular}
			
			\hspace{10pt}
			
			\begin{tabular}{cccc|}
				\multicolumn{4}{c}{\textbf{Estimated Coverage of} $\beta$s}                                                                                                               \\
				\cline{2-4}
				\multicolumn{1}{c|}{}        & \multicolumn{3}{c|}{Sample Size}                                                                        \\
				\cline{1-4}
				\multicolumn{1}{|c|}{}        & \multicolumn{1}{c|}{$n=500$} & \multicolumn{1}{c|}{$n=1000$} & \multicolumn{1}{c|}{$n=2000$} \\
				\hline
				\multicolumn{1}{|c|}{$\eta=2.5$} & \multicolumn{1}{c|}{0.953}        & \multicolumn{1}{c|}{0.952}         & \multicolumn{1}{c|}{0.953}         \\
				\multicolumn{1}{|c|}{$\eta=2$} & \multicolumn{1}{c|}{0.956}        & \multicolumn{1}{c|}{0.950}         & \multicolumn{1}{c|}{0.949}              
			\end{tabular}
			
		}
	
		\scalebox{1}{
		
		\begin{tabular}{cccc|}
			\multicolumn{4}{c}{Average Adjusted Rand Index}                                                                                                               \\
			\cline{2-4}
			\multicolumn{1}{c|}{}        & \multicolumn{3}{c|}{Sample Size}                                                                        \\
			\cline{1-4}
			\multicolumn{1}{|c|}{}        & \multicolumn{1}{c|}{$n=500$} & \multicolumn{1}{c|}{$n=1000$} & \multicolumn{1}{c|}{$n=2000$} \\
			\hline
			\multicolumn{1}{|c|}{$\eta=2.5$} & \multicolumn{1}{c|}{0.935}        & \multicolumn{1}{c|}{0.936}         & \multicolumn{1}{c|}{0.937}         \\
			\multicolumn{1}{|c|}{$\eta=2$} & \multicolumn{1}{c|}{0.935}        & \multicolumn{1}{c|}{0.936}         & \multicolumn{1}{c|}{0.936}              
		\end{tabular}
		
	}
		
	\end{table}

	\begin{table}
	\caption{This table reports the average adjusted Rand index for each combination of $\eta$ and sample size $n$ for $M=20$ imputations over all $1000$ generated data sets.}
	\label{table:M20_randindex}
	\centering
	\scalebox{1}{
		
	\begin{tabular}{cccc|}
		\multicolumn{4}{c}{Average Adjusted Rand Index}                                                                                                               \\
		\cline{2-4}
		\multicolumn{1}{c|}{}        & \multicolumn{3}{c|}{Sample Size}                                                                        \\
		\cline{1-4}
		\multicolumn{1}{|c|}{}        & \multicolumn{1}{c|}{$n=500$} & \multicolumn{1}{c|}{$n=1000$} & \multicolumn{1}{c|}{$n=2000$} \\
		\hline
		\multicolumn{1}{|c|}{$\eta=2.5$} & \multicolumn{1}{c|}{0.936}        & \multicolumn{1}{c|}{0.936}         & \multicolumn{1}{c|}{0.937}         \\
		\multicolumn{1}{|c|}{$\eta=2$} & \multicolumn{1}{c|}{0.934}        & \multicolumn{1}{c|}{0.936}         & \multicolumn{1}{c|}{0.936}              
	\end{tabular}
	
	}
	
	\end{table}
	
	\begin{table}
		\caption{These are the results for $M=50$ imputations.  The first row contains results for the estimation of $\btheta$.  The second row contains results for the estimation of $\bbeta$.  The third row contains the average adjusted Rand index for each combination of $\eta$ and sample size $n$.}
		\label{table:M50_sim}
		\centering
		\scalebox{1}{
			
			\begin{tabular}{cccc|}
				\multicolumn{4}{c}{$\textbf{MSE}_{\btheta} (\times 100)$}                                                                                                               \\
				\cline{2-4}
				\multicolumn{1}{c|}{}        & \multicolumn{3}{c|}{Sample Size}                                                                        \\
				\cline{1-4}
				\multicolumn{1}{|c|}{}        & \multicolumn{1}{c|}{$n=500$} & \multicolumn{1}{c|}{$n=1000$} & \multicolumn{1}{c|}{$n=2000$} \\
				\hline
				\multicolumn{1}{|c|}{$\eta=2.5$} & \multicolumn{1}{c|}{0.189}        & \multicolumn{1}{c|}{0.094}         & \multicolumn{1}{c|}{0.049}         \\
				\multicolumn{1}{|c|}{$\eta=2$} & \multicolumn{1}{c|}{0.204}        & \multicolumn{1}{c|}{0.104}         & \multicolumn{1}{c|}{0.054}              
			\end{tabular}
			
			\hspace{10pt}
			
			\begin{tabular}{cccc|}
				\multicolumn{4}{c}{\textbf{Estimated Coverage of} $\theta$s}                                                                                                               \\
				\cline{2-4}
				\multicolumn{1}{c|}{}        & \multicolumn{3}{c|}{Sample Size}                                                                        \\
				\cline{1-4}
				\multicolumn{1}{|c|}{}        & \multicolumn{1}{c|}{$n=500$} & \multicolumn{1}{c|}{$n=1000$} & \multicolumn{1}{c|}{$n=2000$} \\
				\hline
				\multicolumn{1}{|c|}{$\eta=2.5$} & \multicolumn{1}{c|}{0.942}        & \multicolumn{1}{c|}{0.941}         & \multicolumn{1}{c|}{0.940}         \\
				\multicolumn{1}{|c|}{$\eta=2$} & \multicolumn{1}{c|}{0.938}        & \multicolumn{1}{c|}{0.931}         & \multicolumn{1}{c|}{0.932}           \\
				%\hline
				%\multirow{2}{*}{Data Generating Process 2} & \multicolumn{1}{|c|}{$\eta=\infty$} & \multicolumn{1}{c|}{}        & \multicolumn{1}{c|}{}         & \multicolumn{1}{c|}{}                \\
				% & \multicolumn{1}{|c|}{$\eta=2$} & \multicolumn{1}{c|}{}        & \multicolumn{1}{c|}{}         & \multicolumn{1}{c|}{}         \\
				% & \multicolumn{1}{|c|}{$\eta=1$} & \multicolumn{1}{c|}{}        & \multicolumn{1}{c|}{}         & \multicolumn{1}{c|}{} 
			\end{tabular}

		}
		
		\vskip 10pt
		
		\scalebox{1}{
			\begin{tabular}{cccc|}
				\multicolumn{4}{c}{$\textbf{MSE}_{\bbeta} (\times 100)$}                                                                                                               \\
				\cline{2-4}
				\multicolumn{1}{c|}{}        & \multicolumn{3}{c|}{Sample Size}                                                                        \\
				\cline{1-4}
				\multicolumn{1}{|c|}{}        & \multicolumn{1}{c|}{$n=500$} & \multicolumn{1}{c|}{$n=1000$} & \multicolumn{1}{c|}{$n=2000$} \\
				\hline
				\multicolumn{1}{|c|}{$\eta=2.5$} & \multicolumn{1}{c|}{44.6}        & \multicolumn{1}{c|}{21.8}         & \multicolumn{1}{c|}{10.7}         \\
				\multicolumn{1}{|c|}{$\eta=2$} & \multicolumn{1}{c|}{47.1}        & \multicolumn{1}{c|}{22.0}         & \multicolumn{1}{c|}{10.4}              
			\end{tabular}
			
			\hspace{10pt}
			
			\begin{tabular}{cccc|}
				\multicolumn{4}{c}{\textbf{Estimated Coverage of} $\beta$s}                                                                                                               \\
				\cline{2-4}
				\multicolumn{1}{c|}{}        & \multicolumn{3}{c|}{Sample Size}                                                                        \\
				\cline{1-4}
				\multicolumn{1}{|c|}{}        & \multicolumn{1}{c|}{$n=500$} & \multicolumn{1}{c|}{$n=1000$} & \multicolumn{1}{c|}{$n=2000$} \\
				\hline
				\multicolumn{1}{|c|}{$\eta=2.5$} & \multicolumn{1}{c|}{0.956}        & \multicolumn{1}{c|}{0.947}         & \multicolumn{1}{c|}{0.951}         \\
				\multicolumn{1}{|c|}{$\eta=2$} & \multicolumn{1}{c|}{0.947}        & \multicolumn{1}{c|}{0.947}         & \multicolumn{1}{c|}{0.951}              
			\end{tabular}
		}
		\vskip 10pt

		\begin{tabular}{cccc|}
			\multicolumn{4}{c}{Average Adjusted Rand Index}                                                                                                               \\
			\cline{2-4}
			\multicolumn{1}{c|}{}        & \multicolumn{3}{c|}{Sample Size}                                                                        \\
			\cline{1-4}
			\multicolumn{1}{|c|}{}        & \multicolumn{1}{c|}{$n=500$} & \multicolumn{1}{c|}{$n=1000$} & \multicolumn{1}{c|}{$n=2000$} \\
			\hline
			\multicolumn{1}{|c|}{$\eta=2.5$} & \multicolumn{1}{c|}{0.935}        & \multicolumn{1}{c|}{0.936}         & \multicolumn{1}{c|}{0.937}         \\
			\multicolumn{1}{|c|}{$\eta=2$} & \multicolumn{1}{c|}{0.936}        & \multicolumn{1}{c|}{0.936}         & \multicolumn{1}{c|}{0.936}              
		\end{tabular}

	\end{table}
	
	\begin{table}
		\caption{These are the results for complete case analysis.  The first row contains results for the estimation of $\btheta$.  The second row contains results for the estimation of $\bbeta$.}
		\label{table:CC_sim}
		\centering
		\scalebox{0.75}{
			
			\begin{tabular}{cccccc|}
				\multicolumn{6}{c}{$\textbf{MSE}_{\btheta} (\times 100)$}                                                                                                               \\
				\cline{2-6}
				\multicolumn{1}{c|}{}        & \multicolumn{5}{c|}{Sample Size}                                                                        \\
				\cline{1-6}
				\multicolumn{1}{|c|}{}        & \multicolumn{1}{c|}{$n=500$} & \multicolumn{1}{c|}{$n=1000$} & \multicolumn{1}{c|}{$n=2000$} &
				\multicolumn{1}{c|}{$n=4000$} &
				\multicolumn{1}{c|}{$n=8000$} \\
				\hline
				\multicolumn{1}{|c|}{$\eta=2.5$} & \multicolumn{1}{c|}{0.196}        & \multicolumn{1}{c|}{0.101}         & \multicolumn{1}{c|}{0.050}         &
				\multicolumn{1}{c|}{0.028} &
				\multicolumn{1}{c|}{0.015}         \\
				\multicolumn{1}{|c|}{$\eta=2$} & \multicolumn{1}{c|}{0.220}        & \multicolumn{1}{c|}{0.115}         & \multicolumn{1}{c|}{0.061}          & 
				\multicolumn{1}{c|}{0.033} &
				\multicolumn{1}{c|}{0.022}  
			\end{tabular}
			
			\hspace{10pt}
			
			\begin{tabular}{cccccc|}
				\multicolumn{6}{c}{\textbf{Estimated Coverage of} $\theta$s}                                                                                                               \\
				\cline{2-6}
				\multicolumn{1}{c|}{}        & \multicolumn{5}{c|}{Sample Size}                                                                        \\
				\cline{1-6}
				\multicolumn{1}{|c|}{}        & \multicolumn{1}{c|}{$n=500$} & \multicolumn{1}{c|}{$n=1000$} & \multicolumn{1}{c|}{$n=2000$} &
				\multicolumn{1}{c|}{$n=4000$} &
				\multicolumn{1}{c|}{$n=8000$} \\
				\hline
				\multicolumn{1}{|c|}{$\eta=2.5$} & \multicolumn{1}{c|}{0.943}        & \multicolumn{1}{c|}{0.936}         &
				\multicolumn{1}{c|}{0.936} &
				\multicolumn{1}{c|}{0.914} & \multicolumn{1}{c|}{0.887}         \\
				\multicolumn{1}{|c|}{$\eta=2$} & \multicolumn{1}{c|}{0.938}        & \multicolumn{1}{c|}{0.939}         &
				\multicolumn{1}{c|}{0.913} &
				\multicolumn{1}{c|}{0.887} & \multicolumn{1}{c|}{0.843}           \\
				%\hline
				%\multirow{2}{*}{Data Generating Process 2} & \multicolumn{1}{|c|}{$\eta=\infty$} & \multicolumn{1}{c|}{}        & \multicolumn{1}{c|}{}         & \multicolumn{1}{c|}{}                \\
				% & \multicolumn{1}{|c|}{$\eta=2$} & \multicolumn{1}{c|}{}        & \multicolumn{1}{c|}{}         & \multicolumn{1}{c|}{}         \\
				% & \multicolumn{1}{|c|}{$\eta=1$} & \multicolumn{1}{c|}{}        & \multicolumn{1}{c|}{}         & \multicolumn{1}{c|}{} 
			\end{tabular}

		}
		
		\vskip 10pt

		\scalebox{0.75}{
			\begin{tabular}{cccccc|}
				\multicolumn{6}{c}{$\textbf{MSE}_{\bbeta} (\times 100)$}                                                                                                               \\
				\cline{2-6}
				\multicolumn{1}{c|}{}        & \multicolumn{5}{c|}{Sample Size}                                                                        \\
				\cline{1-6}
				\multicolumn{1}{|c|}{}        & \multicolumn{1}{c|}{$n=500$} & \multicolumn{1}{c|}{$n=1000$} &
				\multicolumn{1}{c|}{$n=2000$} & \multicolumn{1}{c|}{$n=4000$} &
				\multicolumn{1}{c|}{$n=8000$} \\
				\hline
				\multicolumn{1}{|c|}{$\eta=2.5$} & \multicolumn{1}{c|}{49.6}        & \multicolumn{1}{c|}{25.1}         &
				\multicolumn{1}{c|}{13.4} &
				\multicolumn{1}{c|}{7.66} & \multicolumn{1}{c|}{4.80}         \\
				\multicolumn{1}{|c|}{$\eta=2$} & \multicolumn{1}{c|}{59.0}        & \multicolumn{1}{c|}{28.3}         &
				\multicolumn{1}{c|}{17.1} &
				\multicolumn{1}{c|}{10.7}     & \multicolumn{1}{c|}{7.71}              
			\end{tabular}
			
			\hspace{10pt}
			
			\begin{tabular}{cccccc|}
				\multicolumn{6}{c}{\textbf{Estimated Coverage of} $\beta$s}                                                                                                               \\
				\cline{2-6}
				\multicolumn{1}{c|}{}        & \multicolumn{5}{c|}{Sample Size}                                                                        \\
				\cline{1-6}
				\multicolumn{1}{|c|}{}        & \multicolumn{1}{c|}{$n=500$} & \multicolumn{1}{c|}{$n=1000$} &
				\multicolumn{1}{c|}{$n=2000$} &
				\multicolumn{1}{c|}{$n=4000$} & \multicolumn{1}{c|}{$n=8000$} \\
				\hline
				\multicolumn{1}{|c|}{$\eta=2.5$} & \multicolumn{1}{c|}{0.940}        & \multicolumn{1}{c|}{0.945}         &
				\multicolumn{1}{c|}{0.935} &
				\multicolumn{1}{c|}{0.904} & \multicolumn{1}{c|}{0.843}         \\
				\multicolumn{1}{|c|}{$\eta=2$} & \multicolumn{1}{c|}{0.947}        & \multicolumn{1}{c|}{0.939}         &
				\multicolumn{1}{c|}{0.908} & \multicolumn{1}{c|}{0.871} &
				\multicolumn{1}{c|}{0.789} 
			\end{tabular}
			
		}
	
		\scalebox{0.75}{
			
			\begin{tabular}{cccccc|}
				\multicolumn{6}{c}{Average Adjusted Rand Index}                                                                                                               \\
				\cline{2-6}
				\multicolumn{1}{c|}{}        & \multicolumn{5}{c|}{Sample Size}                                                                        \\
				\cline{1-6}
				\multicolumn{1}{|c|}{}        & \multicolumn{1}{c|}{$n=500$} & \multicolumn{1}{c|}{$n=1000$} & \multicolumn{1}{c|}{$n=2000$} &
				\multicolumn{1}{c|}{$n=4000$} &
				\multicolumn{1}{c|}{$n=8000$} \\
				\hline
				\multicolumn{1}{|c|}{$\eta=2.5$} & \multicolumn{1}{c|}{0.936}        & \multicolumn{1}{c|}{0.936}         & \multicolumn{1}{c|}{0.937}         &
				\multicolumn{1}{c|}{0.937} &
				\multicolumn{1}{c|}{0.937}         \\
				\multicolumn{1}{|c|}{$\eta=2$} & \multicolumn{1}{c|}{0.936}        & \multicolumn{1}{c|}{0.936}         & \multicolumn{1}{c|}{0.936}          & 
				\multicolumn{1}{c|}{0.937} &
				\multicolumn{1}{c|}{0.937}  
			\end{tabular}
		}
	\end{table}

\section{Comments on Model Selection}

\label{appendix:modelselect}

\subsection{Model selection}

In a standard data analysis, one key question is choosing the number of mixture components.  In the case of no prior knowledge, we recommend using information criteria for model selection.  The Akaike information criterion (AIC) \cite{akaike1973information} and the Bayesian information criterion (BIC) \cite{Schwarz1978} are popular methods for choosing the number of mixture components.  Each of these information criteria works by adding a penalty term to the observed log-likelihood, defined above in \eqref{eq:obs-loglike}.

Let $\nu(K)$ be the number of free parameters of the mixture of binomial product experts model with $K$ mixtures.  We showed earlier that $\nu(K) = K(p+d+1)-p-1$.  As functions of $K$, the AIC and BIC write as
\begin{align*}
	\quad \text{AIC}(K) := 2\nu(K) - 2\ell_{\text{obs},n}, \quad \text{BIC}(K) := \nu(K)\log n - 2 \ell_{\text{obs},n},
\end{align*}
respectively.  One can choose the model that minimizes the AIC or BIC.

In another simulation, we repeatedly generate a data set of size $n=500$ with $\eta=\infty$ and $\eta=2$ (details in Appendix \ref{sect:simulations}).  Then, we fit models of varying sizes with $K=2,3,4,5,6$ and create the AIC and BIC curves.  For each data set, we find the numbers of mixtures that corresponds to the minimum AIC and BIC.  Figure~\ref{fig:AICBIC} provides an example of AIC and BIC curves for varying values of $K$ for two specific data sets.
	
	We generate 100 random data sets of size $n=500$ with $\eta=2,\infty$, and determine the number of mixture components selected by the AIC and BIC for each data set.  The results are summarized in Table \ref{table:AICBIC_repeat}.  We see that in general, the AIC is not as reliable as the BIC.  The BIC has a higher penalty term than the AIC, and it chooses the recommended model for all data sets in the simulations.  Therefore, this simulation suggests using BIC over AIC as an information criterion.  We also encourage the use of prior scientific knowledge when possible for selecting a model.

\begin{table}
	\caption{These are the results of generating 100 data sets of size $n=500$ with $\eta=2,\infty$ and computing the AIC and BIC for each.  We record for each of the 100 generated sets, which number of mixture of components the AIC and the BIC chose.  The first table is for $\eta=\infty, n=500$, and the second table is for $\eta=2, n=500$.}
	\label{table:AICBIC_repeat}
	\centering
	\vskip 10pt
	\scalebox{1}{
	\begin{tabular}{c|ccccc}
				$\eta = \infty, n= 500$ & $K=2$ & $K=3$ & $K=4$ & $K=5$ & $K=6$ \\
				\hline
				AIC                     & 0     & 22    & 51    & 14    & 13    \\
				BIC                     & 0     & 100   & 0     & 0     & 0    
			\end{tabular}
	}
	\vskip 10pt
	\scalebox{1}{
	\begin{tabular}{c|ccccc}
		$\eta = 2, n= 500$ & $K=2$ & $K=3$ & $K=4$ & $K=5$ & $K=6$ \\
		\hline
		AIC                     & 0     & 47    & 31    & 13    & 9    \\
		BIC                     & 0     & 100   & 0     & 0     & 0    
	\end{tabular}
}

\end{table}

\begin{figure}
	\centering
	\includegraphics[width=6.75cm]{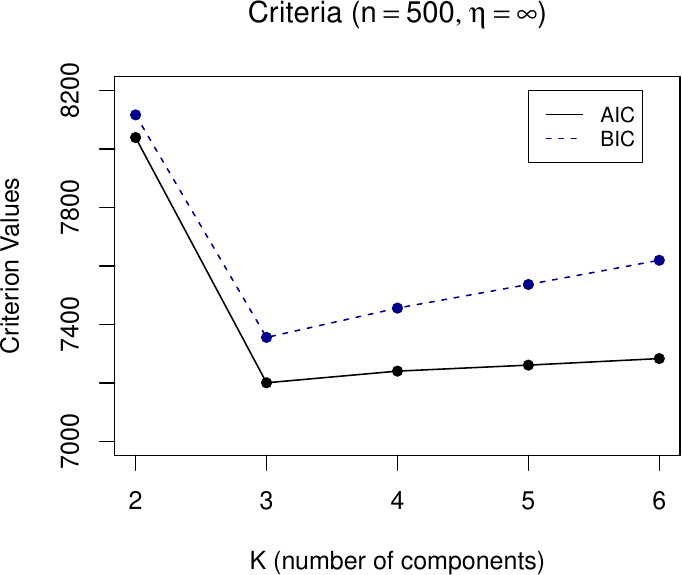}
	\hfill
	\includegraphics[width=6.75cm]{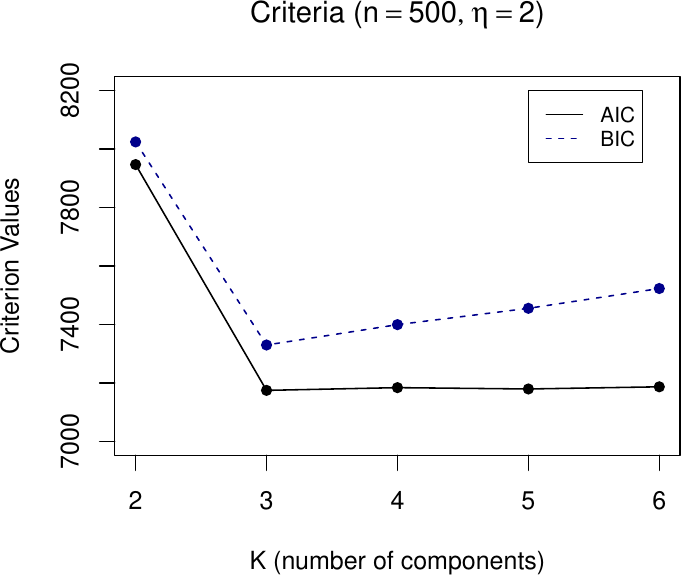}
	\caption{This figure depicts the AIC and BIC curves as we vary $K$ for a given simulated data set of size $n=500$ and $\eta=\infty,2$.}
	\label{fig:AICBIC}
\end{figure}

%	 \subsection{Model selection for the real data}
%	 
%	 The model is fit by treating the most healthy group as the baseline.  For a given $K$, we fit the model with many random initializations and select the point estimate that maximizes the log-likelihood.  For each $K$, we use the selected point estimate to compute AIC and BIC criteria.  In Figure \ref{fig:AICBICreal}, we plot the AIC and BIC criteria as we vary the number of components in the mixture of binomial product experts model.
%	 
%	 \begin{figure}
	%	 	\includegraphics[width=6.5cm]{AICplot_crop.pdf}
	%	 	\hspace{0.25in}
	%	 	\includegraphics[width=6.5cm]{BICplot_crop.pdf}
	%	 	\centering
	%	 	\caption{The left and right panels depict the AIC and BIC criteria as the number of components $K$ varies, respectively.}
	%	 	\label{fig:AICBICreal}
	%	 \end{figure}
%	 
%	 The choice based on prior knowledge of the CDR score (five levels, so $K=5$) coincides with the 
%	 elbow location in both AIC and BIC curve \citep{TibshiraniGap2002}.
%	 From the plot, we observe that the decrease in criteria values from $K=4$ to $K=5$ components is larger than subsequent decreases.  Therefore,  $K=5$ may also be viewed as a reasonable choice.
%	
%	
%	
%	

\section{Additional Comments on Clustering}

\label{appendix:clustering}

Interestingly, the clustering procedure in the presence of missing data that is outlined in Section \ref{sect:clustering} can be related to a multiple imputation based approach as follows.  Consider imputing $M$ data sets and computing the following mean
	%\st{Although $\hat{\pi}_{k,\bR_i}(\bX_i, \bY_{i,\bR_i})$ is a quantity that we wish to perform clustering, 
		%it does not have a simple analytical form. Multiple imputation provides an elegant way to compute this quantity.}
	$$\hat{\pi}_{k,\bR_i}^{(M)}(\bX_i, \bY_{i,\bR_i}):=\frac{1}{M}\sum_{m=1}^M \hat{\pi}_{k}(\bX_i, \tilde{\bY}_i^{(m)})$$
	for every observation $i$.  This procedure can be thought of constructing $M$ plausible complete data sets and averaging the latent class probability vectors for each observation $i$ over all of the $M$ imputed data sets.  Then, one can perform clustering using the probability $\hat{\pi}_{k,\bR_i}^{(M)}(\bX_i, \bY_{i,\bR_i})$.  As $M\to\infty$, we expect $\hat{\pi}_{k,\bR_i}^{(M)}(\bX_i, \bY_{i,\bR_i}) \stackrel{p}{\to} \hat{\pi}_{k,\bR_i}(\bX_i, \bY_{i,\bR_i})$ because for any $r\in\mathcal{R}$,
	$$\E[P(Z=k|\bX,\bY) | \bX, \bY_r] = P(Z=k|\bX,\bY_r).$$
	
	Thus, for sufficiently large $M$, the two procedures are equivalent because higher $M$ results in more imputations, which reduces the Monte Carlo error that arises from the variability of the imputation.  However, since there is an explicit form for the probability $\hat{\pi}_{k,r}(\bX,\bY)$, there is no need to do multiple imputation.
	
	\begin{remark}
		We use the notation $\hat{\pi}$ to emphasize that these probabilities are based on an \textit{estimated model from a sample}.  We estimate these probabilities by plugging our estimators into our statistical model and solving for the desired probabilities with Bayes' rule.  In theory, there exist oracle probabilities $\pi_k(\bX,\bY) := P(Z=k | \bX, \bY)$ and $\pi_{k,r}(\bX,\bY_r) := P(Z=k | \bX, \bY_r)$ based on the population.  One can analyze clustering based on these population quantities, but in practice, we do not have access to these functions because they are parameterized by the true population parameters.  Instead, we use $\hat{\pi}_k(\bX,\bY)$ and $\hat{\pi}_{k,r}(\bX, \bY_r)$ as approximations to the population quantities.  More theoretical analysis can be performed to compare the clustering of our sample to a population clustering, but this is out of the scope of this paper.
	\end{remark}

\section{Proofs}

\label{appendix:proofs}

\subsection{Nonconcavity of the Latent Incomplete Log-Likelihood (Lemma \ref{lemma:nonconcavity})}

\label{appendix:lemmaproof}

We first demonstrate that the latent incomplete log-likelihood function is nonconcave.

	\begin{lemma}[Nonconcavity of the LI log-likelihood function]
		\label{lemma:nonconcavity}
		The latent incomplete log-likelihood function
		\begin{align*}
			\ell_{\text{LI},n}(\beta,\theta; \bX_{1:n}, \bY_{1:n}) &= \log \left( \prod_{i=1}^n \left( \sum_{k=1}^K w_k(\bX_i; \beta_k) p_k(\bY_i ; \theta_k)\right) \right)\\
			&= \sum_{i=1}^n \log \left( \sum_{k=1}^K w_k(\bX_i; \beta_k) p_k(\bY_i ; \theta_k)\right).
		\end{align*}
		is not concave.
	\end{lemma}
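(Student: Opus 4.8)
The plan is to prove non-concavity by exhibiting a single violation of the defining inequality, exploiting the label-switching symmetry inherent to mixture models. Since the restriction of a concave function to any affine subspace of its (convex) domain is again concave, it suffices to find non-concavity after fixing most coordinates; I would therefore reduce to the simplest sub-model with $K=2$, $d=1$, and no covariates beyond the intercept, so that each component is a single binomial with $N$ trials and success probability $\theta_k \in (0,1)$, and the weights reduce to $w_1 = w_2 = 1/2$ upon setting $\beta_2 = 0$ (recall $\beta_1 = 0$ by convention). Non-concavity in this sub-model, where only $(\theta_1,\theta_2)$ are free, immediately implies non-concavity of the full latent incomplete log-likelihood.

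Next I would use the permutation symmetry of the mixture. Fix two distinct values $a \neq b$ in $(0,1)$ and let $\psi = (\theta_1,\theta_2) = (a,b)$, with its label-swap $\psi' = (b,a)$, both carrying equal weights. Because the mixture density is invariant under relabeling the components, $\ell_{\text{LI},n}(\psi) = \ell_{\text{LI},n}(\psi')$ for any data. The midpoint $\tfrac{1}{2}(\psi+\psi')$ has $\theta_1 = \theta_2 = (a+b)/2$, at which the two components coincide and the mixture collapses to a single binomial with parameter $(a+b)/2$. Concavity would force
\[
\ell_{\text{LI},n}\!\left(\tfrac{\psi+\psi'}{2}\right) \ \geq\ \tfrac{1}{2}\ell_{\text{LI},n}(\psi) + \tfrac{1}{2}\ell_{\text{LI},n}(\psi') \ =\ \ell_{\text{LI},n}(\psi),
\]
so it is enough to produce data for which the degenerate single-binomial fit at the midpoint is strictly worse than the genuine two-component fit at $\psi$.

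To that end I would take the outcomes $\bY_{1:n}$ to form two well-separated clusters, roughly half near $Na$ and half near $Nb$, with $a$ and $b$ chosen far apart. The mixture at $\psi$ assigns substantial probability to both clusters, whereas the collapsed binomial at the midpoint concentrates near $N(a+b)/2$ and therefore assigns exponentially small probability to observations in either cluster, driving the midpoint log-likelihood far below that at $\psi$. This yields $\ell_{\text{LI},n}(\tfrac{\psi+\psi'}{2}) < \ell_{\text{LI},n}(\psi)$, contradicting the displayed inequality and establishing non-concavity.

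The main obstacle is this last step: verifying the strict inequality cleanly. The claim is intuitively obvious---a single binomial cannot fit a bimodal sample as well as a two-component mixture---but making it rigorous requires either a concrete numerical choice of $(N,a,b)$ together with data that can be checked by direct substitution, or a short asymptotic argument showing that as the two clusters separate the midpoint log-likelihood diverges to $-\infty$ while the mixture log-likelihood remains bounded below. Either route closes the proof, and I would favor exhibiting an explicit small example, since a single numerical counterexample is the most transparent way to certify that a function fails to be concave.
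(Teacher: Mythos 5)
Your argument is correct, but it proceeds quite differently from the paper's. The paper simply exhibits a brute-force numerical counterexample: it fixes $n=1$, $K=2$, $p=1$, $d=3$, $N_j=5$, a single observation $\bY=(3,1,4)$, two explicit parameter points, and verifies numerically that the value at the midpoint ($\approx -6.81$) falls below the average of the endpoint values ($\approx -6.73$). You instead exploit the label-switching symmetry: with equal weights, $\psi=(a,b)$ and its swap $\psi'=(b,a)$ give identical log-likelihoods, the midpoint collapses the mixture to a single binomial at $(a+b)/2$, and concavity would force the collapsed fit to be at least as good --- which bimodal data refutes. Your reduction to the sub-model is legitimate (restriction of a concave function to an affine slice of the parameter space is concave; for $d>1$ one can fix $\theta_{k,j}=1/2$ for $j\ge 2$ so those factors contribute only an additive constant), and the one step you flag as the ``main obstacle'' closes immediately with a concrete choice: take $N=5$, $n=2$, $Y_1=0$, $Y_2=5$, $(a,b)=(0.1,0.9)$; the midpoint binomial at $1/2$ gives log-likelihood $2\log(2^{-5})\approx -6.93$, while the mixture at $(a,b)$ gives $2\log\bigl(\tfrac12(0.9^5+0.1^5)\bigr)\approx -2.44$, so the concavity inequality fails. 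What your route buys is insight and generality --- it explains \emph{why} mixture log-likelihoods are non-concave and works for any $K\ge 2$ and any component family whose members can separate a bimodal sample; what the paper's route buys is a self-contained verification requiring no structural observation. Both are valid proofs.
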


%Thus, a closed-form solution for the MLE of the mixture of binomial product experts model is unattainable due to the nonconcavity of the LI log-likelihood.
\begin{proof}[Proof of Lemma \ref{lemma:nonconcavity}]
	We consider a simple counterexample showing nonconcavity.  Take $n=1$, $K=2$, $p=1$, and $d=3$.  We further assume that $N_1=N_2=N_3=5$.  This is the log-likelihood where there is exactly one data point with two clusters, a single covariate, and three bounded discrete outcome variables.  Suppose that $\bX=1$ and $\bY=(3,1,4)$.  We have
	\begin{align*}
		\ell_{\text{LI}, n}(\bbeta, \btheta | \bX, \bY) = \log\left( \sum_{k=1}^2 w_k(X; \beta_k) \prod_{j=1}^3 \binom{N_j}{y_j} \theta_{k,j}^{N_j} (1-\theta_{k,j})^{N_j-y_j} \right).
	\end{align*}
	
	Consider two points $(\bbeta^1, \btheta^1)$ and $(\bbeta^2, \btheta^2)$, where the superscript indexes the data point.  Let $\bbeta^1 = (-1,-1)$, $\btheta^1 = (\theta_1^1, \theta_2^1)$, $\theta_1^1 = (0.7,0.6,0.3)$, $\theta_2^1 = (0.3,0.5,0.1)$, $\bbeta^2 = (1,6)$, $\btheta^2 = (\theta_1^2, \theta_2^2)$, $\theta_1^2 = (0.6,0.4,0.8)$, and $\theta_2^2 = (0.4,0.3,0.3)$.  We now consider the line that connects these two points and show that it sits above the function.
	
	Observe that
	$$\ell_{\text{LI}, n}(t\bbeta^1 + (1-t)\bbeta^2, t\btheta^1 + (1-t)\btheta^2) \approx -6.81 < -6.73 \approx t\ell_{\text{LI}, n}(\bbeta^1, \btheta^1) + (1-t) \ell_{\text{LI}, n}(\bbeta^2, \btheta^2)$$
	for $t=1/2$.  Thus, nonconcavity is achieved.
\end{proof}

\subsection{Identification Proposition (Proposition \ref{prop:identif})}

Before we present the proof of this proposition, We first introduce a result from \cite{Allman2009} on generic identification.  Then, we state two helpful lemmas and providing their proofs.

\begin{theorem}[Theorem 4 of \cite{Allman2009}]
	Let $S_1,\ldots, S_d$ be $d$ categorical variables
	and $S_j$ has $B_j$ categories. 
	Consider a $K$-mixture model: 
	$$
	P(S=s; \lambda) = {\displaystyle\sum_{k=1}^K} \omega_k \prod_{j=1}^d q(s_{k,j}; \lambda_{k,j}),
	$$
	where $q(s_{k,j}; \lambda_{k,j})$ is a multinomial distribution on the $B_j$ categories. 
	Suppose there exists a partition $C_1,C_2,C_3$ of the set $\{1,2,\cdots, d\}$ such that 
	$b_j = \prod_{j\in C_j} B_j$ and 
	$$
	\min\{K, b_1\}+\min\{K, b_2\}+\min\{K, b_3\}\geq 2K +2. 
	$$
	Then, the model $P(S=s; \lambda)$ is generically identifiable.
	\label{thm::allman}
\end{theorem}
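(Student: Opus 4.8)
The plan is to reduce the latent class model to a three-way array and invoke Kruskal's theorem on the uniqueness of trilinear (Candecomp/Parafac) decompositions, following the strategy of \cite{Allman2009}. First I would exploit the conditional (local) independence built into the model: given the latent class $Z=k$, the variables $S_1,\ldots,S_d$ are independent, so the joint conditional law factors as $\prod_{j=1}^d q(\cdot;\lambda_{k,j})$. Fixing the tripartition $C_1,C_2,C_3$, I would \emph{collapse} the variables inside each block $C_t$ into a single super-variable taking $b_t=\prod_{j\in C_t}B_j$ values; by independence its conditional distribution given $Z=k$ is the Kronecker product $\bigotimes_{j\in C_t} q(\cdot;\lambda_{k,j})$, a probability vector in $\mathbb{R}^{b_t}$. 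Stacking these over $k$ yields three matrices $M_1,M_2,M_3$, where $M_t$ is $K\times b_t$ and its $k$-th row is the collapsed class-$k$ distribution.

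Second, I would observe that the joint distribution of the three super-variables is the tensor
\[
P=\sum_{k=1}^K \omega_k\, M_1(k,\cdot)\otimes M_2(k,\cdot)\otimes M_3(k,\cdot),
\]
which is exactly a rank-$K$ CP decomposition with factor matrices $M_1,M_2,M_3$ and weights $\omega$. Kruskal's theorem then guarantees that this decomposition is unique up to a common permutation and rescaling of the rows provided $\mathrm{krank}(M_1)+\mathrm{krank}(M_2)+\mathrm{krank}(M_3)\ge 2K+2$, where $\mathrm{krank}(\cdot)$ is the Kruskal rank (the largest $c$ such that every $c$ rows are linearly independent). Since each row of $M_t$ is a probability vector summing to one, the rescaling ambiguity is pinned down, so uniqueness up to permutation delivers identifiability of $\omega$ and of every $M_t$. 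Finally, each row $M_t(k,\cdot)$ is a rank-one Kronecker object, so reshaping it as a $|C_t|$-way array and reading off its normalized one-dimensional marginals recovers each original multinomial $q(\cdot;\lambda_{k,j})$; this recovery is deterministic and introduces no extra exceptional set.

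Third, and this is where genericity enters, I would argue that outside a Lebesgue-null set the Kruskal rank of $M_t$ attains its maximal value $\min(K,b_t)$, so that the stated numerical bound forces Kruskal's criterion. The entries of $M_t$ are polynomial (multilinear) in the free parameters $\lambda_{k,j}$, hence the event that some $\min(K,b_t)$ rows are linearly dependent lies in the common zero set of the corresponding maximal minors. I would show each such minor is a polynomial that does not vanish identically by exhibiting one explicit parameter choice placing the collapsed rows in general position; the zero set of a nonzero polynomial has measure zero, and a finite union over the three blocks and over the minor choices stays null. On the complement, $\mathrm{krank}(M_t)=\min(K,b_t)$ for all $t$, and the hypothesis gives $\sum_t\min(K,b_t)\ge 2K+2$, so Kruskal's condition holds and identifiability follows up to relabeling.

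The main obstacle I expect is precisely this genericity step: one must certify that the maximal-minor polynomials governing the Kruskal ranks are not identically zero, which requires producing an explicit full-k-rank configuration and then checking that the null sets from all three blocks (together with the reconstruction map from $M_t$ back to the $\lambda_{k,j}$) meet the parameter space only in measure zero. The trilinear uniqueness itself is dispatched by citing Kruskal's theorem, so the real labor is the algebraic-geometry bookkeeping that converts ``generic full Kruskal rank'' into ``identifiable off a set of Lebesgue measure zero.''
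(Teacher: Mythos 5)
You should know at the outset that the paper contains no proof of this statement: it is imported verbatim as Theorem~\ref{thm::allman} from \cite{Allman2009} and used as a black box in the proof of Lemma~\ref{lemma:nocovariate-identification}. Your sketch is, in substance, a faithful reconstruction of the original Allman--Matias--Rhodes argument --- tripartition of the variables into three super-variables, reduction to a rank-$K$ CP tensor decomposition, Kruskal's uniqueness theorem under the $k$-rank condition $\sum_t \mathrm{krank}(M_t) \geq 2K+2$, resolution of the scaling ambiguity via the rows being probability vectors, and genericity obtained because full Kruskal rank fails only on the zero set of not-identically-vanishing polynomial minors --- so there is no methodological divergence to report. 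Two small points of care: the statement as printed in the paper has typos (the outer $\prod_{k=1}^K$ should be $\sum_{k=1}^K$, and $\min\{d,b_j\}$ should be $\min\{K,b_j\}$, which is how the bound is actually invoked in the proof of Lemma~\ref{lemma:nocovariate-identification} via $\min(K,\kappa_t)$), and your proposal silently works with the corrected version, which is the right thing to do.
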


The first lemma describes the generic identifiability of the mixture of binomial products model without covariates.

\begin{lemma}
	Consider the model
	$$\sum_{k=1}^K w_k \left(\prod_{j=1}^d \binom{N_j}{y_j} (\theta_{k,j})^{y_j} (1-\theta_{k,j})^{N_j-y_j}\right).$$
	If the dimension of the outcome variable $Y$ and the number of components satisfy $d\geq 2\lceil \log_{1+\min_j N_j} K \rceil + 1$, then the model is generically identified up to permutation of the parameters.
	\label{lemma:nocovariate-identification}
\end{lemma}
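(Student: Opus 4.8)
The plan is to follow the tripartition-plus-Kruskal strategy that underlies Theorem \ref{thm::allman}, but to verify the required rank conditions \emph{directly} for the binomial parameterization rather than invoking generic identifiability of the multinomial model as a black box. This extra care is needed because a binomial on $\{0,\ldots,N_j\}$ is only a one-parameter curve inside the $N_j$-dimensional simplex of a multinomial on $B_j=N_j+1$ categories; that curve is itself Lebesgue-null in the multinomial parameter space, so the exceptional null set produced by Theorem \ref{thm::allman} could in principle contain the entire binomial submodel. I would therefore reprove that the Kruskal-rank condition $\min\{K,b_1\}+\min\{K,b_2\}+\min\{K,b_3\}\ge 2K+2$ holds for generic binomial parameters, where $b_\ell=\prod_{j\in C_\ell}(N_j+1)$.

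First I would dispose of $K=1$ (then $\theta_{1,j}$ is recovered from the marginal mean) and take $K\ge 2$. Writing $N_{\min}=\min_j N_j$, I would place $\lceil\log_{1+N_{\min}}K\rceil$ indices in each of $C_1$ and $C_2$, so that $b_\ell\ge(1+N_{\min})^{\lceil\log_{1+N_{\min}}K\rceil}\ge K$, and put the remaining indices---of which there is at least one, since $d\ge 2\lceil\log_{1+N_{\min}}K\rceil+1$---into $C_3$, giving $b_3\ge 1+N_{\min}\ge 2$. Collapsing the three groups turns the model into a mixture of products of three categorical variables with positive weights $w_k$ and class-conditional factor matrices $M_1,M_2,M_3$, where the $(k,s)$ entry of $M_\ell$ is $\prod_{j\in C_\ell}\binom{N_j}{s_j}\theta_{k,j}^{s_j}(1-\theta_{k,j})^{N_j-s_j}$. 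It then suffices to show that for generic $\theta$ the $K$ class-conditional rows of $M_1$ and of $M_2$ are linearly independent (Kruskal rank $K$) and that no two rows of $M_3$ are proportional (Kruskal rank $\ge 2$), so that the Kruskal three-way uniqueness result behind Theorem \ref{thm::allman} forces the decomposition, hence $(w_k,\theta_{k,\cdot})$, to be unique up to relabeling the $K$ components.

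The key step is the generic full-rank claim for $M_1,M_2$. Here I would reparametrize by the odds $\rho_{k,j}=\theta_{k,j}/(1-\theta_{k,j})\in(0,\infty)$ and factor out the positive column weights $\prod_j\binom{N_j}{s_j}$ and row weights $\prod_j(1-\theta_{k,j})^{N_j}$, which are invertible diagonal scalings and do not change rank. The $(k,s)$ entry becomes the monomial $\prod_{j\in C_\ell}\rho_{k,j}^{s_j}$, so row $k$ of $M_\ell$ evaluates the monomials indexed by the box $\prod_{j\in C_\ell}\{0,\ldots,N_j\}$ at the point $(\rho_{k,j})_{j}$---a multivariate Vandermonde structure. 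To exhibit one admissible parameter value with a nonvanishing $K\times K$ minor, I would specialize along a moment curve via a mixed-radix substitution $\rho_{k,j_i}=\xi_k^{M_i}$ with weights $M_1=1,\ M_2=N_{j_1}+1,\ldots$, which sends the box exponents to the distinct integers $0,1,\ldots,b_\ell-1$; restricting to the $K$ lowest exponents then collapses the minor to the ordinary Vandermonde determinant $\prod_{k<k'}(\xi_{k'}-\xi_k)$, nonzero for distinct positive $\xi_k$. Since this minor is a polynomial in the $\theta_{k,j}$ that is not identically zero, it vanishes only on a null set. For $M_3$ the requirement is merely that no two class-conditional product-binomials on the $C_3$ coordinates coincide, which fails only on the null set where $\theta_{k,j}=\theta_{k',j}$ for all $j\in C_3$ and some $k\neq k'$.

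I expect the main obstacle to be exactly this generic full-rank verification for the multivariable groups $C_1,C_2$: because the entries are pinned to the binomial form, one cannot simply cite the multinomial result and must instead show that the exponent box genuinely separates $K$ points for some admissible $\theta$. The odds reparametrization together with the moment-curve substitution reduces this to a single explicit Vandermonde minor, after which the ``a nonzero polynomial vanishes only on a measure-zero set'' argument closes the gap. On the complement of the finite union of these null sets the three Kruskal ranks are $K,K,\ge 2$, summing to at least $2K+2$, and the remaining bookkeeping---tying the chosen group sizes to the stated inequality---is routine.
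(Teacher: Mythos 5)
Your proof is correct, and it follows the same skeleton as the paper's: the same tripartition of the coordinates into $C_1,C_2$ of size $\lceil\log_{1+\min_j N_j}K\rceil$ each plus a nonempty remainder $C_3$, the same bound $\min\{K,b_1\}+\min\{K,b_2\}+\min\{K,b_3\}\geq 2K+2$, and the same underlying Kruskal three-way uniqueness engine behind Theorem \ref{thm::allman}. Where you diverge is the final step, and your divergence is an improvement rather than a detour. The paper closes by noting that a product of binomials is a special case of a product of multinomials and asserting that generic identifiability of the larger model passes to the submodel. As you observe, that inheritance is not automatic: each binomial marginal is a one-parameter curve inside an $N_j$-dimensional simplex, so the binomial submodel has Lebesgue measure zero inside the multinomial parameter space and could in principle lie entirely within the exceptional set of Theorem \ref{thm::allman}; genericity must be re-established relative to the $Kd$-dimensional binomial parameter space. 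Your repair does exactly that: factoring out the row weights $\prod_j(1-\theta_{k,j})^{N_j}$ and column weights $\prod_j\binom{N_j}{s_j}$ reduces the factor matrices to the monomial form $\prod_{j}\rho_{k,j}^{s_j}$ in the odds, the mixed-radix moment-curve substitution exhibits one admissible parameter with a nonvanishing $K\times K$ minor, and since that minor is a polynomial in the $\theta_{k,j}$ it vanishes only on a null set of the correct parameter space; the $C_3$ condition (two product-binomials coincide only when their success probabilities agree coordinatewise) is likewise a null event. So each approach buys something different: the paper's version is shorter but rests on a submodel-inheritance claim that does not hold for measure-theoretic genericity without further argument, while yours costs an explicit (but short) Vandermonde computation and in exchange delivers a complete proof of the statement as it is actually needed.
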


\begin{proof}[Proof of Lemma \ref{lemma:nocovariate-identification}]
	This result is a consequence of Theorem 4 from \cite{Allman2009} (which is itself also a result following Kruskal's theorem \citep{kruskal1976more, kruskal1977three}), and the proof follows a similar construction provided in Corollary 5 from the same paper.\\
	
	We argue that the latent class model with $K$ components and each $Y_j$ having dimension $N_j$ is generically identifiable under the aforementioned bound.
	
	To simplify notation, let $M_j = 1+N_j$ and $M_* = \min_j M_j$.
	
	Let $K$ be fixed.  First, consider the case that $d=2\lceil \log_{M_*} K \rceil +1$.  Observe that
	$${M_*}^{\lceil \log_{M_*} K \rceil - 1} < K \leq {M_*}^{\lceil \log_{M_*} K\rceil}.$$
	
	Partition the set $\{M_j\}_{j\in[d]}$ into a singleton $P_3 = \{{M_*}\}$ and two sets $P_1$ and $P_2$ of equal size (each with cardinality $\lceil \log_{M_*} K\rceil$).  Then, we have
	\begin{align*}
		&\kappa_1 := \prod_{j\in P_1} M_j \geq M_*^{\lceil \log_{M_*} K\rceil}, \\
		&\kappa_2 := \prod_{j\in P_2} M_j \geq M_*^{\lceil \log_{M_*} K\rceil}, \\
		&\kappa_3 := \prod_{j\in P_3} M_j = M_*. 
	\end{align*}
	Then, it follows that $\min(K, \kappa_1) = \min(K, \kappa_2) = K$ and $\min(K, \kappa_3) \geq 2$.  Thus, we have
	$$\min(K, \kappa_1) + \min(K, \kappa_2) + \min(K, \kappa_3) \geq 2K + 2.$$  When $d > 2\lceil \log_{1+\min_j N_j} K\rceil +1$, one can partition $\{M_j\}_{j\in[d]}$ in a similar way such that $\kappa_1$ and $\kappa_2$ strictly increase.  Therefore, the previous inequality will still hold.  
	
	Now, finally, note that a binomial  model is a special case of a multinomial model and each $Y_j$ having dimension $N_j$.  
	So binomial product models are special cases of the latent class model of multinomial in Theorem~\ref{thm::allman}. 
	Generic identifiability of the larger model implies generic identifiability of the submodel.
\end{proof}

The following lemma is used to show how the generic identifiability of the ``no covariates'' model can imply the generic identifiability of the model with covariates under certain assumptions.

\begin{lemma}
	Suppose the logistic regression model is identifiable, and each mixture is distinct.  Then, the generic identifiability of the no covariate binomial product model implies the generic identifiability of the mixture of binomial product experts.
	\label{lemma:binomialproductexperts-identification}
\end{lemma}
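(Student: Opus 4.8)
The plan is to prove Lemma \ref{lemma:binomialproductexperts-identification} by conditioning on the covariate $x$, so that for each fixed $x$ the model collapses to a no-covariate mixture of binomial products, whose generic identifiability is already in hand from Lemma \ref{lemma:nocovariate-identification}. Concretely, suppose $(\bbeta,\btheta)$ is generic and $(\bbeta',\btheta')$ induces the same conditional law, i.e. $p(y\mid x;\bbeta,\btheta)=p(y\mid x;\bbeta',\btheta')$ for all $x\in\mathcal{X}$ and all $y$. For each fixed $x$, both sides are no-covariate mixtures with mixing weights $w_k(x;\beta_k)$ and $w_k(x;\beta_k')$ summing to one, and component families $p_k(\cdot;\theta_k)$ and $p_k(\cdot;\theta_k')$. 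The goal is to show $(\bbeta',\btheta')$ is a permutation of $(\bbeta,\btheta)$.

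First I would apply Lemma \ref{lemma:nocovariate-identification} at a single well-chosen covariate value $x_0$: provided the induced weight vector $(w_1(x_0;\beta_1),\dots,w_K(x_0;\beta_K))$ together with $\btheta$ avoids the measure-zero exceptional set of the no-covariate model, the multiset of pairs $\{(w_k(x_0;\beta_k),\theta_k)\}_k$ equals $\{(w_k(x_0;\beta_k'),\theta_k')\}_k$; in particular $\{\theta_k\}=\{\theta_k'\}$ as multisets. Because assumption \ref{assump:distinctmix} makes the $\theta_k$ distinct (hence the $\theta_k'$ distinct as well), there is a unique permutation $\sigma$ with $\theta_k'=\theta_{\sigma(k)}$, and this $\sigma$ is forced to be the same for every $x$ since the $\theta_k$ do not depend on $x$.

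Next I would propagate the weight matching to all covariate values and invoke identifiability of the logistic part. Repeating the no-covariate argument at each $x$ whose induced weights are generic, and pairing components through their distinct $\theta$-labels, gives $w_k(x;\beta_k')=w_{\sigma(k)}(x;\beta_{\sigma(k)})$ on a dense set of $x$; continuity of the softmax weights then extends this identity to all $x\in\mathcal{X}$. Read as functions of $x$, these matched weights are exactly the softmax weights of the $\sigma$-relabeled $\bbeta$, so assumption \ref{assump:identied-logistic} forces $\beta_k'=\beta_{\sigma(k)}$ once the reference-group normalization $\beta_1=0$ is accounted for (a common additive shift of all coordinates leaves the softmax unchanged, so the relabeled vector must be re-centered to the reference before comparison). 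This yields that $(\bbeta',\btheta')$ is a permutation of $(\bbeta,\btheta)$, which is the claim.

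The main obstacle is the genericity bookkeeping, namely guaranteeing that for generic $(\bbeta,\btheta)$ the curve $x\mapsto(w(x;\bbeta),\btheta)$ is not entirely swallowed by the no-covariate exceptional set. I expect the clean resolution to be that the exceptional set coming from Kruskal's theorem is a proper algebraic subvariety, while $x\mapsto w(x;\bbeta)$ is real-analytic with image of nonempty interior in the simplex whenever $\mathcal{X}$ contains an open set and $\bbeta$ is generic; an analytic image with interior cannot lie inside a proper subvariety, so the good $x$ are dense and a suitable $x_0$ exists for the step above. The only substantive requirement is that the covariate space be rich enough to have nonempty interior, which holds here since age enters continuously. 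The softmax reference normalization in the final logistic step is then routine once $\sigma$ is pinned down.
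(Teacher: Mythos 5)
Your proof follows essentially the same route as the paper's: fix the covariate, invoke the no-covariate generic identifiability to match the $\theta_k$ up to a permutation, use the distinctness assumption to force a single $x$-independent permutation, and then recover $\bbeta$ from identifiability of the logistic weights. Your additional bookkeeping on the genericity of the chosen $x_0$ and on the softmax reference normalization is sound and in fact more careful than the paper's own argument, which passes over both points.
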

\begin{proof}[Proof of Lemma \ref{lemma:binomialproductexperts-identification}]
	
	As each mixture is distinct, this implies that the model
	$$\sum_{k=1}^K w_k(x;\beta_k) p_k(y;\theta_k)$$
	is constructed such that $\theta_a \neq \theta_b$ for $a\neq b$.  We organize the parameters into one $\balpha = (\bbeta, \btheta)=(\beta_1,\ldots,\beta_K,\theta_1,\ldots,\theta_K)$ and $\alpha_k = (\beta_k,\theta_k)$.  For $\sigma \in S_K$, where $S_K$ is the permutation group comprising all $K!$ mappings from $[K]$ to $[K]$,
	we use the notation $\bbeta_\sigma$ and $\btheta_\sigma$ to denote the reordered tuples of $\bbeta$ and $\btheta$, respectively, according to the permutation $\sigma$.
	
	Now, consider a given covariate $x\in\mathcal{X}$, and suppose that
	$$p(y|x;\balpha) = \sum_{k=1}^K w_k(x;\beta_k) p_k(y;\theta_k) = \sum_{k=1}^K w_k(x;\beta'_k) p_k(y;\theta'_k) = p(y|x;\balpha').$$
	
	By generic identification of the no covariate binomial product model up to permutation, we have
	$$\btheta = \btheta'_{\sigma_x}$$
	for some permutation $\sigma_x \in S_K$.  This permutation is indexed by $x$ because this permutation may depend on the covariate $x$.
	
	We now argue that this permutation is the same for any $x$, thereby implying that the entire mixture of binomial products model is unique up to permutation.  We proceed using a proof by contradiction.  Let $x_1$ and $x_2$ be the covariates of two distinct observations with corresponding permutations $\sigma_{x_1}$ and $\sigma_{x_2}$ such that $\sigma_{x_1} \not\equiv \sigma_{x_2}$.  Without loss of generality, assume that $\sigma_{x_1}$ is the identity permutation.  So, for $x=x_1$, we have
	\begin{equation}
		\theta_k = \theta_k' \quad \forall k\in [K] \label{eq:identity-perm}
	\end{equation}
	by the identity permutation.  For $x= x_2$, we have
	$$\theta_k = \theta_{\sigma_{x_2}(k)}' \quad \forall k\in[K].$$
	Since $\sigma_{x_2}$ is not the identity permutation, there exists $g \in [K]$ such that $\sigma_{x_2}(g) \neq g$.  Combining this fact with equation \eqref{eq:identity-perm} via transitivity, we have
	$$\theta_g' = \theta_g = \theta_{\sigma_{x_2}(g)}',$$
	but this violates the distinct mixture assumption of $\theta_a \neq \theta_b$ for $a\neq b$.  Thus, we have a contradiction, and the permutation $\sigma_x$ must be invariant to the covariate $x$.  Finally, since the logistic regression model is identifiable, the mixture of binomial product experts model is generically identified up to permutation.
\end{proof}

We are now ready to present the proof of Proposition \ref{prop:identif} (this proves the generic identifiability of our mixture of binomial product experts model), which is a synthesis of results from the previous two lemmas.

\begin{proof}[Proof of Proposition \ref{prop:identif}]
	Since the inequality $d\geq 2\lceil \log_{1+\min_j N_j} K\rceil +1$ is satisfied by Assumption \ref{assump:inequality}, we can invoke Lemma \ref{lemma:nocovariate-identification}.  This implies that the model without covariates is generically identified.
	
	Then, under Assumptions \ref{assump:distinctmix} and \ref{assump:identified-logistic}, we can invoke Lemma \ref{lemma:binomialproductexperts-identification}.  Thus, the model with covariates is also generically identified, thereby completing the proof.
\end{proof}

	\subsection{Validity of the Bootstrap}
	
	\label{appendix:bootstrap}
	
	In this subsection, we discuss the validity of the bootstrap for this MLE estimator.  Before we discuss the theoretical results for the bootstrap, we state the Berry-Esseen bound, which will be useful in the argument.
	
	\begin{lemma}[Berry-Esseen bound]
		Suppose $Z_1,Z_2,\ldots, Z_n$ are i.i.d. random variables such that $\E[Z_i] = \mu$, $\E[Z_i^2] = \sigma^2 > 0$, and $\E[|Z_i|^3] < \infty$.  Then, there exists a constant $C>0$ such that
		$$\sup_z|P(\sqrt{n}(\bar{Z}_n-\mu) \leq z) - \Phi(z)|) \leq \frac{C\E[|Z_i|^3]}{\sigma^3\sqrt{n}}.$$
	\end{lemma}
	
	To demonstrate the validity of the bootstrap, we want to argue that the bootstrap distribution is asymptotically equivalent to the sampling distribution of the estimator.  The main idea of the proof is to exploit the fact that the estimator $(\hat{\bbeta}, \hat{\btheta})$ has an asymptotically linear form (that is, the MLE behaves much like a sample mean asymptotically) and converges jointly to a multivariate normal centered at $(\bbeta, \btheta)$ at $\sqrt{n}$-rate.
	
	%\begin{theorem}[Bootstrap consistency]
	%	\label{theorem:bootstrap}
	%	Suppose the following assumptions hold.
	%	\begin{itemize}
		%		\item [B1] The statistic $\hat{T}_n := T_n(X_1,X_2,X_3,\ldots, X_n) = \sqrt{n}\cdot(\frac{1}{n}\sum_{i=1}^n \ell(\bbeta, \btheta ; \bX_i, \bY_i) - \E[\ell(\bbeta, \btheta ; \bX, \bY)])$ admits an asymptotically linear form.
		%		\item [B2] The log-likelihood function is smooth and differentiable with respect to the parameter.
		%		\item [B3] The third moment of the absolute value is bounded.
		%	\end{itemize}
	%	Then, the bootstrap is consistent.  That is,
	%	$$\sup_x |P_*(T_n^* \leq  x | X_1, X_2, \ldots, X_n) - P(\hat{T}_n \leq x)| \stackrel{p}{\to} 0.$$
	%\end{theorem}
	
	%\begin{proof}[Proof of Theorem \ref{theorem:bootstrap}]
	Define the statistic $\hat{T}_n := T_n(\bX_{1:n},\bY_{1:n}) = \sqrt{n}\cdot(\frac{1}{n}\sum_{i=1}^n \ell(\bbeta, \btheta ; \bX_i, \bY_i) - \E[\ell(\bbeta, \btheta ; \bX, \bY)])$ as the centered and rescaled log-likelihood function.  Let $P_*$ denote the conditional law taking the data $\bX_{1:n},\bY_{1:n}$ fixed and $P$ be the law under the true generating distribution.
	
	More precisely, we wish to show the following uniform bound
	$$\sup_t |P_*(T_n^* \leq  t | \bX_{1:n},\bY_{1:n}) - P(\hat{T}_n \leq t)| \stackrel{p}{\to} 0.$$
	We perform the following decomposition via the triangle inequality
	\begin{align*}
		&\sup_t|P_*(T_n^* \leq t | \bX_{1:n},\bY_{1:n}) - P(\hat{T}_n \leq t)| \leq \\
		&\underbrace{\sup_t|P_*(T_n^* \leq t | \bX_{1:n},\bY_{1:n}) - \Phi(t;\hat{\sigma})|}_{(A)} + \underbrace{\sup_t|\Phi(t;\hat{\sigma}) - \Phi(t;\sigma)|}_{(B)} + \underbrace{\sup_t|P(\hat{T}_n\leq t) - \Phi(t;\sigma)|}_{(C)}.
	\end{align*}
	
	Observe by smoothness of the normal CDF and the fact that $\hat{\sigma} \stackrel{p}{\to} \sigma$, Term (B) is $O_p(n^{-1/2})$.  Next, Term (C) is constructed as the difference between the sampling distribution of $\hat{T}_n$ and the normal CDF.  The sampling distribution is simply constructed via the MLE, so there is no asymptotic bias.  Therefore, by the Berry-Esseen bound, Term (C) is $O_p(n^{-1/2})$.
	
	The more challenging term to control is Term (A), which we will accomplish via another application of the Berry-Esseen bound.  Under the law $P_*$, the data $\bX_{1:n},\bY_{1:n}$ are fixed, so for some $K\in\mathbb{R}$, we have
	\begin{align*}
		\sup_t|P_*(T_n^* \leq t | \bX_{1:n},\bY_{1:n}) - \Phi(t;\hat{\sigma})| &\leq \frac{K \frac{1}{n}\sum_{i=1}^n |\ell(\bbeta, \btheta ; \bX_i, \bY_i)|^3}{\hat{\sigma}^3 \sqrt{n}} \\
		&\stackrel{p}{\to} \frac{K \E[|\ell(\bbeta, \btheta ; \bX_i, \bY_i)|^3]}{\sigma^3 \sqrt{n}} \\
		&= O_p(n^{-1/2}).
	\end{align*}
	where the last convergence follows from the convergence of $\frac{1}{n}\sum_{i=1}^n |\ell(\bbeta, \btheta ; \bX_i, \bY_i)|^3 \stackrel{\text{a.s.}}{\to} \E[|\ell(\bbeta, \btheta ; \bX_i, \bY_i)|^3] < \infty$ and $\hat{\sigma} \stackrel{p}{\to} \sigma$.
	
	Thus, we have 
	$$\sup_t|P_*(T_n^* \leq t | \bX_{1:n},\bY_{1:n}) - P(\hat{T}_n \leq t)| = O_p(n^{-1/2}).$$
	Since we have shown the result for the centered and rescaled log-likelihood function, the consistency of the bootstrap follows from standard M-estimation theory (as the MLE is an M-estimator), as desired.
	%\end{proof}

\section{Further Comments on the Real Data Analysis}

\label{appendix:more-realdata}

\subsection{Reproducibility of the Real Data Analysis}

\label{appendix:reproducibility}

We report the distribution of the maximum log-likelihood estimates in Figure \ref{fig:loglikelihood_dist}.  We observe that most of the random initializations converge to a local maximum of the log-likelihood function, as evidenced by the mode of the distribution.  This distribution indicates how important it is to have a sufficiently large number of random initializations in order to properly explore the parameter space.  If the real data analysis was to repeated, we expect to obtain a similar histogram of log-likelihood values.

\begin{figure}[!b]
	\centering
	\includegraphics[width=11cm]{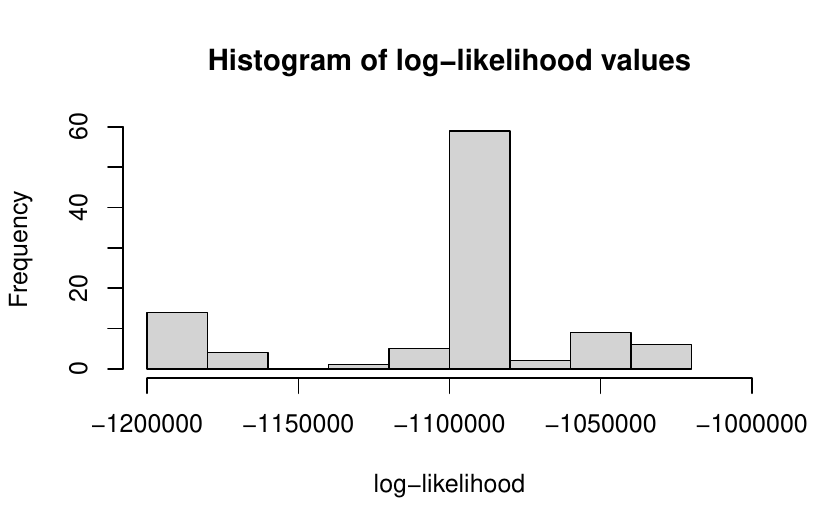}
	\caption{This figure is a histogram of the final log-likelihood values, reported over 100 random initializations, obtained using Algorithm \ref{alg:marEM} on the real data.}
	\label{fig:loglikelihood_dist}
\end{figure}

\clearpage

\newpage
\bibliographystyle{apacite}
\bibliography{references.bib}

%% ITEM 9 [See the "howto.tex" file.]
%\appendix
%\renewcommand{\theequation}{A\arabic{equation}}
%\setcounter{equation}{0}
%\renewcommand{\thesection}{\Alph{subsection}}
%\setcounter{section}{0}
%\section*{Appendix}
%\section*{Appendix A}
%\section*{Appendix B}
%\vspace{\fill}\pagebreak

%% ITEM 10 [See the "howto.tex" file.]
%\begin{thebibliography}

%\bibitem

%\end{thebibliography}

%% ITEM 11 [See the "howto.tex" file.]
%%%% You can put your Figures and Tables here
%%%% after the Reference Section.
%%%% BE SURE TO MARK IN THE TEXT WHERE
%%%% YOU WANT EACH FIGURE AND TABLE TO BE PLACED.
%%%% If you prefer, you can integrate your figures and tables into the text of your paper,
%%%% PROVIDED you will provide camera-ready copies of each figure.
%\vspace{\fill}\pagebreak
%\linespacing{1}

%\section*{Figures}
%
%\begin{figure}[h]
%\centerline{\includegraphics{figure01.eps}}
%\caption{Your figure caption goes here.}
%\end{figure}
%\vskip6pt

%\vspace{\fill}\pagebreak

%\section*{Tables}

%\vspace{\fill}\pagebreak

\end{document}